\documentclass[a4paper, 11pt]{article}

\usepackage[margin=1in]{geometry}
\usepackage[numbers,sort&compress]{natbib}
\usepackage{amsmath}
\usepackage{amssymb}
\usepackage{amsthm}
\usepackage{algorithm}
\usepackage{algorithmic}
\usepackage{booktabs}
\usepackage{enumitem}
\usepackage{multicol}
\usepackage{xcolor}
\usepackage{soul}
\usepackage{graphicx}
\usepackage[small]{caption}
\usepackage{microtype}
\usepackage{hyperref}
\usepackage{doi}

\hypersetup{
    colorlinks=true,
    linkcolor=black,
    citecolor=black,
    filecolor=black,
    urlcolor=[HTML]{0088CC},
    pdftitle={Seeing far vs. seeing wide: Volume complexity of local graph problems},
    pdfauthor={Will Rosenbaum, Jukka Suomela}
}

\theoremstyle{definition}

\newtheorem{prop}{Proposition}[section]
\newtheorem{lthm}[prop]{Theorem}
\newtheorem{lem}[prop]{Lemma}
\newtheorem{cor}[prop]{Corollary}

\newtheorem{rem}[prop]{Remark}

\newtheorem{eg}[prop]{Example}
\newtheorem{dfn}[prop]{Definition}

\newtheorem{que}[prop]{Question}

\newtheorem{obs}[prop]{Observation}
\newtheorem{ntn}[prop]{Notation}

\newcommand{\dft}[1]{\textbf{\textit{#1}}}

\newcommand{\DDIST}{\text{D-DIST}}
\newcommand{\RDIST}{\text{R-DIST}}
\newcommand{\DVOL}{\text{D-VOL}}
\newcommand{\RVOL}{\text{R-VOL}}
\newcommand{\query}{\mathrm{query}}
\newcommand{\DIST}{\mathrm{DIST}}
\newcommand{\cost}{\mathrm{cost}}
\newcommand{\VOL}{\text{VOL}}

\newcommand{\Qpull}{Q_{\mathrm{pull}}}
\newcommand{\Qpush}{Q_{\mathrm{push}}}

\newcommand{\lvl}{\mathrm{level}}

\newcommand{\proc}{\mathcal{P}}

\DeclareMathOperator{\dist}{dist}
\newcommand{\ports}{\mathcal{P}}
\newcommand{\Lin}{L_{\mathrm{in}}}
\newcommand{\Lout}{L_{\mathrm{out}}}

\newcommand{\abs}[1]{\left|#1\right|}
\newcommand{\paren}[1]{\left(#1\right)}
\newcommand{\set}[1]{\left\{#1\right\}}
\newcommand{\sucht}{\,\middle|\,}

\newcommand{\N}{\mathbf{N}}

\newcommand{\calE}{\mathcal{E}}
\newcommand{\calI}{\mathcal{I}}
\newcommand{\calG}{\mathcal{G}}
\newcommand{\calL}{\mathcal{L}}  
\newcommand{\calN}{\mathcal{N}}
\newcommand{\calO}{\mathcal{O}}
\newcommand{\calP}{\mathcal{P}}


\newcommand{\E}{\mathrm{E}}

\newcommand{\LeafColoring}{\textsf{LeafColoring}}
\newcommand{\BTL}{\textsf{BalancedTree}}
\newcommand{\HTHC}{\textsf{Hierarchical-THC}}
\newcommand{\HBTHC}{\textsf{Hybrid-THC}}
\newcommand{\HHTHC}{\textsf{HH-THC}}
\newcommand{\RTHC}{\mathrm{RecursiveHTHC}}

\DeclareMathOperator{\disj}{disj}

\newcommand{\parent}{\mathrm{P}}
\newcommand{\lc}{\mathrm{LC}}
\newcommand{\rc}{\mathrm{RC}}
\newcommand{\chin}{\chi_{\mathrm{in}}}
\newcommand{\chout}{\chi_{\mathrm{out}}}
\renewcommand{\ln}{\mathrm{LN}}
\newcommand{\rn}{\mathrm{RN}}

\newcommand{\RWtoLeaf}{\mathrm{RWtoLeaf}}


\renewcommand{\th}{{}^{\mathrm{th}}}


\title{\bf Seeing Far vs.\ Seeing Wide:\\
Volume Complexity of Local Graph Problems}

\author{
  Will Rosenbaum\\
  Max Planck Institute for Informatics\\
  will.rosenbaum@mpi-inf.mpg.de
  \and
  Jukka Suomela\\
  Aalto University\\
  jukka.suomela@aalto.fi
}
\date{}

\begin{document}

\maketitle

\begin{abstract}
Assume we have a graph problem that is \emph{locally checkable but not locally solvable}---given a solution we can check that it is feasible by verifying all constant-radius neighborhoods, but to find a feasible solution each node needs to explore the input graph at least up to distance $\Omega(\log n)$ in order to produce its own part of the solution.

Such problems have been studied extensively in the recent years in the area of distributed computing, where the key complexity measure has been \emph{distance}: how far does a node need to see in order to produce its own part of the solution. However, if we are interested in e.g.\ sublinear-time centralized algorithms, a much more appropriate complexity measure would be \emph{volume}: how large a subgraph does a node need to see in order to produce its own part of the solution.

In this work we study locally checkable graph problems on bounded-degree graphs and we give a number of constructions that exhibit different tradeoffs between deterministic distance, randomized distance, deterministic volume, and randomized volume:
\begin{itemize}
\item If the deterministic distance is linear, it is also known that randomized distance is near-linear. We show that volume complexity is fundamentally different: there are problems with a linear deterministic volume but only logarithmic randomized volume.
\item We prove a volume hierarchy theorem for randomized complexity: Among problems with (near) linear deterministic volume complexity, there are infinitely many distinct randomized volume complexity classes between $\Omega(\log n)$ and $O(n)$. Moreover, this hierarchy persists even when restricting to problems whose randomized and deterministic distance complexities are $\Theta(\log n)$.
\item Similar hierarchies exist for polynomial distance complexities: we show that for any $k, \ell \in \N$ with $k \leq \ell$, there are problems whose randomized and deterministic distance complexities are $\Theta(n^{1/\ell})$, randomized volume complexities are $\widetilde\Theta(n^{1/k})$, and whose deterministic volume complexities are $\widetilde\Theta(n)$.
\end{itemize}

Additionally, we consider connections between our volume model and massively parallel computation (MPC). We give a general simulation argument that any volume-efficient algorithm can be transformed into a space-efficient MPC algorithm.
\end{abstract}

\newpage


\section{Introduction}

\begin{figure}[t]
\centering
\includegraphics[width=0.9\textwidth,page=1]{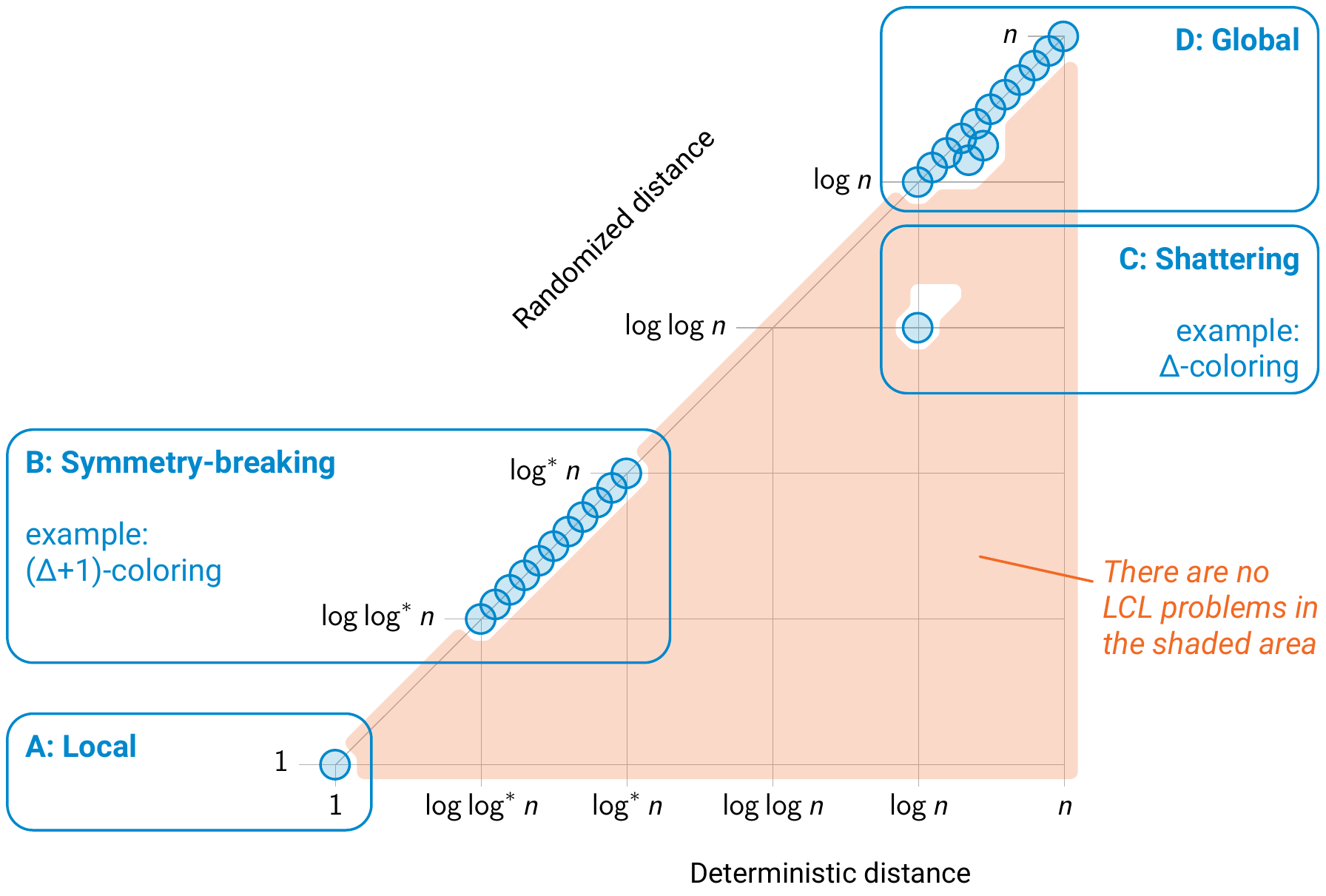}%
\caption{Prior work: locally checkable graph problems (LCLs) classified based on their distance complexity \cite{cole86deterministic,Linial1992,Naor1991,Naor1995,Chang2019,chang16exponential,ghaffari17distributed,Brandt2016,fischer17sublogarithmic,Balliu2019padding,Ghaffari2018,Balliu2018stoc,Balliu2018disc,Rozhon2019}. Blue dots represent examples of LCL problems with a known deterministic and randomized complexity, and the orange shading represents a region in which no LCL problems exist.}\label{fig:landscape}
\end{figure}

\paragraph{Distance complexity.}

In message-passing models of distributed computing, \emph{time} is intimately connected to \emph{distance}: in $T$ communication rounds, nodes can potentially learn some information that was originally within distance $T$ from them, but not further. This idea is formalized in the \emph{LOCAL model} \cite{Peleg2000,Linial1992} of distributed computing, in which a distributed algorithm with a running time $T$ is, in essence, a function that maps radius-$T$ neighborhoods to local outputs. The key question in the theory of distributed computing can be stated as follows:
\begin{quote}
\em
How far does an individual node need to see in order to produce its own part of the solution?
\end{quote}
To give some simple examples, assume we have got a graph with $n$ nodes and a maximum degree $\Delta = O(1)$, and all nodes are labeled with unique identifiers:
\begin{itemize}
\item Finding a proper vertex coloring with $\Delta + 1$ colors: Each node can pick its own color based on its radius-$O(\log^* n)$ neighborhood \cite{cole86deterministic,Linial1992,Naor1991}.
\item Finding a proper vertex coloring with $\Delta$ colors: Each node needs to see up to distance $\Omega(\log\log n)$ in order to succeed with high probability and up to distance $\Omega(\log n)$ if we are using a deterministic algorithm \cite{Brandt2016,chang16exponential,panconesi95delta,ghaffari17distributed}.
\end{itemize}
Graph coloring is an example of a \emph{locally checkable labeling} (LCL) \cite{Naor1995}. I.e., it is a graph problem in which we label nodes with labels from some finite set, and a solution is globally feasible if it looks feasible in a constant-radius neighborhood of each node. In the past several years our understanding of the distance complexity of LCLs has advanced rapidly \cite{Balliu2018stoc,Balliu2018disc,Brandt2016,chang16exponential,Chang2019,fischer17sublogarithmic,ghaffari17distributed,Ghaffari2018,Ghaffari2018a,chang18complexity,Brandt2017,Brandt2019automatic,Balliu2019padding,Balliu2019weak2col,Balliu2019decidable,Balliu2019sirocco,Rozhon2019}, and it is now known that \emph{all} LCL problems can be broadly classified in four classes, as shown in Figure~\ref{fig:landscape}. One of the key insights is that there are broad \emph{gaps} between the classes, and such gaps have immediate algorithmic applications: for example, if you can solve any LCL problem with $o(\log n)$ deterministic distance, it directly implies also a solution with $O(\log^* n)$ distance.

\paragraph{Volume complexity.}

While there has been a lot of progress on understanding how far each node must see in a graph to solve a given graph problem, this line of research has limited direct applicability beyond message-passing models of distributed computing. In many other settings---e.g., parallel algorithms and centralized sublinear-time algorithms---a key question is not how \emph{far} do we need to explore the input graph, but how \emph{many} nodes of the input graph we need to explore. One formalization of this idea is the (stateless) \emph{local computation algorithms} (LCAs, a.k.a.\ centralized local algorithms or CentLOCAL) \cite{Rubinfeld2011}, where the key question is this:
\begin{quote}
\em
How much of the input does an individual node need to see in order to produce its own part of the solution?
\end{quote}
We will refer to this as the \emph{volume} complexity of a graph problem. We will formalize the model of computing in Section~\ref{sec:model}, but in brief, the idea is this:
\begin{quote}
\em
In time $T$ each node can \textbf{adaptively} gather information about a connected component of size $T$ around itself. 
\end{quote}
A bit more precisely, in each time step a node can choose to query any neighbor of a node that it has discovered previously. The query will reveal the unique identifier of the node, its degree, and its local input (if any). In randomized algorithms, each node has an independent stream of random bits that is part of its local input. Eventually, each node has to stop and produce its own part of the solution (e.g.\ its own color if we are solving graph coloring). While we assume that a node gathers a connected region, we point out that we can make this assumption without loss of generality for a broad range of graph problems \cite{Goos2016nonlocal}.

Parnas and Ron~\cite{Parnas2007} introduced a general framework that transforms algorithms in the LOCAL model to LCAs. In their framework, an algorithm with complexity $f(n)$ yields an LCA with probe\footnote{The word ``probe'' is used in the LCA literature to refer to an atomic interaction with a data structure, whereas we use ``query.'' The latter is standard terminology, e.g., in the literature on sublinear time graph algorithms and property testing.} complexity $\Delta^{\Theta(f(n))}$. Recently, Ghaffari and Uitto~\cite{Ghaffari2019} asked if the $\Delta^{\Omega(f(n))}$ barrier inherent to Parnas and Ron's technique can be overcome by ``sparsifying'' the underlying LOCAL algorithm. They provide affirmative answers for several well-studied problems, such as maximal independent set, maximal matching, and approximating a minimum vertex cover. While there is a large body of work that introduces algorithms with a low volume complexity---see, e.g., \cite{Alon2012,Campagna2013,Even2014,Even2018,Feige2015,Ghaffari2019,Levi2014,Levi2017,Levi2017spanning,Mansour2012,Mansour2013,Parnas2007,Rubinfeld2011,Reingold2016}---what is currently lacking is an understanding of the landscape of the volume complexity.

\paragraph{Connections to Massively Parallel Computation.}

Another motivation for studying volume complexity is its connection to massively parallel computation (MPC) frameworks, such as MapReduce~\cite{Karloff2010}. In the MPC model, a system consists of $M$ machines each with $S$ local memory. An execution proceeds in synchronous rounds. In each round, each machine can communicate with all other machines---sending and receiving at most $S$ bits in total---and perform arbitrary local computations. The goal is to perform a task while minimizing the space requirement $S$ per machine as well as the number of communication rounds.

In the case where each machine represents a vertex in a network with maximum degree $\Delta$, any algorithm with distance complexity $T$ can be trivially simulated in the MPC model with space $S = \Delta^{O(T)}$ in $T$ rounds. Using graph exponentiation~\cite{Lenzen2010exponential}, this runtime can be improved to $O(\log T)$ rounds. Recently, sparsification---i.e., exploiting volume efficient algorithms---has been applied to give strongly sub-linear space algorithms in the MPC model~\cite{Ghaffari2019}. 
The volume model we describe in Section~\ref{sec:model} allows us to formalize a close connection between volume and the MPC model. Specifically, in Section~\ref{sec:mpc} we show that any algorithm with volume complexity $\VOL$ can be simulated using space roughly $O(\VOL + n^c)$ and $O(\VOL)$ rounds in the MPC model for any positive constant $c$. In some cases, the runtime can be improved to $O(\log \VOL)$.

\subsection{Towards a Theory of Volume Complexity}

In this work, we initiate the study of the volume complexity landscape of graph problems. As the study of LCL problems has proved instrumental in our understanding of distance complexity, we will follow the same idea here. Some of the key research questions include the following:
\begin{itemize}
  \item What are possible deterministic and randomized volume complexities of LCL problems?
  \item Do we have the same four distinct classes of problems as what we saw in Figure~\ref{fig:landscape}, and similar gaps between the classes?
  \item For distance complexity, randomness is known to help exponentially for all problems of class C, while it is of limited use in class D and useless in classes A and B. Does a similar picture emerge for volume complexity?
  \item There are infinite families of distinct distance complexities in classes B and D (this is a distributed analogue of the time hierarchy theorem)---does it hold also for the volume complexity?
  \item How tightly can we connect the volume complexity of a problem with its computational complexity in other models of computing (e.g.\ time and message complexity in LOCAL and CONGEST models of distributed computing, and time complexity in various models of massively parallel computing)?
\end{itemize}

\subsection{Preliminary Observations}\label{ssec:prelim}

Let us now make some preliminary observations on what we can say about the volume complexity of the four classes of LCL problems that are listed in Figure~\ref{fig:landscape}. We will summarize these results in Figure~\ref{fig:landscape2}.

\begin{figure}[t]
\centering
\includegraphics[width=0.9\textwidth,page=2]{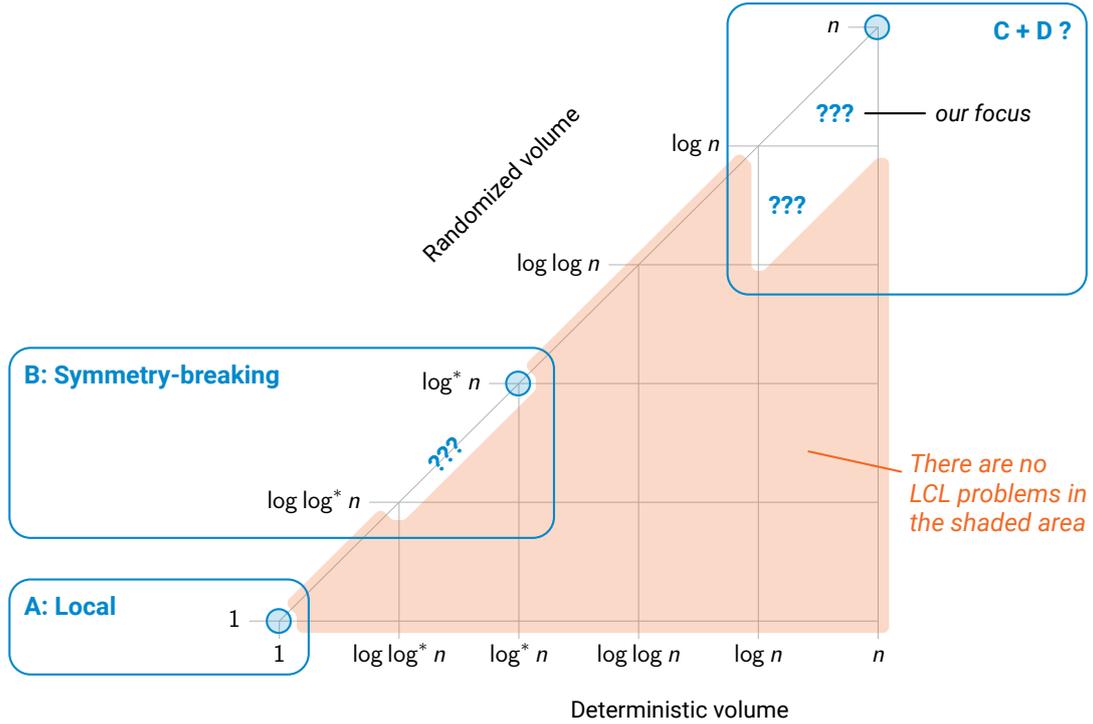}%
\caption{Preliminary observations on the landscape of volume complexities. Blue dots are some examples of LCL problems for which the volume complexity is easy to establish.}\label{fig:landscape2}
\end{figure}

\paragraph{Class A.}

Volume complexity is at least as much as the distance complexity. In graphs of maximum degree $\Delta = O(1)$, volume complexity is at most exponential in distance complexity. A distance-$T$ algorithm can be simulated if each node gathers a ball of volume $\Delta^{O(T)}$, and a volume-$T$ algorithm can be simulated if each node gathers a ball of radius $O(T)$. Hence it trivially follows that the following classes of LCL problems are equal:
\begin{itemize}[noitemsep]
    \item problems with distance complexity $\Theta(1)$,
    \item problems with volume complexity $\Theta(1)$.
\end{itemize}

\paragraph{Class B.}

Let us now look at the class of LCL problems that are solvable with distance between $\Omega(\log \log^* n)$ and $O(\log^* n)$. The trivial bounds for their volume complexity would be $\Omega(\log \log^* n)$ and $\Delta^{O(\log^* n)}$.

However, we can prove also a nontrivial upper bound. Any LCL problem in this class can be solved in two steps \cite{chang16exponential}:
\begin{enumerate}[noitemsep]
    \item find a distance-$k$ coloring for a suitable constant $k = O(1)$,
    \item apply a constant-distance mapping to the colored graph.
\end{enumerate}
It has already been known for decades that the first step can be solved in $O(\log^* n)$ distance \cite{cole86deterministic}. However, recently \citet{Even2014} introduced a graph coloring technique that makes it possible to solve the problem also in $O(\log^* n)$ volume. It follows that these classes of LCL problems are equal:
\begin{itemize}[noitemsep]
    \item problems with distance complexity between $\Omega(\log \log^* n)$ and $O(\log^* n)$,
    \item problems with volume complexity between $\Omega(\log \log^* n)$ and $O(\log^* n)$.
\end{itemize}
Moreover, the derandomization result by \citet{chang16exponential} can be used to show that randomness does not help in this region in either model (subject to some mild assumptions on the model of computing).

\paragraph{Classes C and D.}

Finally, we are left with the LCL problems that have deterministic distance between $\Omega(\log n)$ and $O(n)$ and randomized distance between $\Omega(\log \log n)$ and $O(n)$. Trivially, the volume complexity of any problem is bounded by $O(n)$, and hence the following four classes of LCL problems are equal:
\begin{itemize}[noitemsep]
    \item problems with randomized distance complexity between $\Omega(\log \log n)$ and $O(n)$,
    \item problems with deterministic distance complexity between $\Omega(\log n)$ and $O(n)$,
    \item problems with randomized volume complexity between $\Omega(\log \log n)$ and $O(n)$,
    \item problems with deterministic volume complexity between $\Omega(\log n)$ and $O(n)$.
\end{itemize}
In the distance model, it is known that in this region randomness helps at most exponentially~\cite{chang16exponential}. For example, if the randomized distance complexity is $O(\log \log n)$, then the deterministic distance complexity has to be $O(\log n)$. The same proof goes through verbatim for the volume model (under some technical assumptions on the model of computing), and hence we can conclude that e.g.\ randomized volume $O(\log \log n)$ implies deterministic volume $O(\log n)$.

\begin{figure}
\centering
\includegraphics[width=0.9\textwidth,page=3]{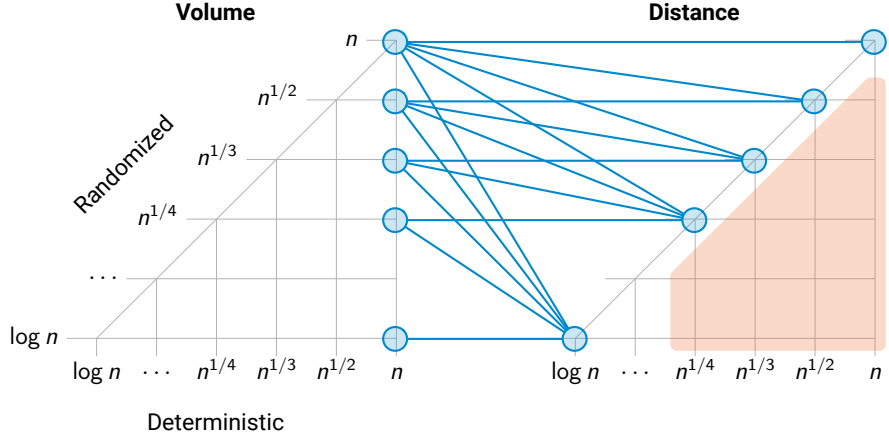}%
\caption{An overview of our contributions. Each blue line represents one LCL problem; the left end of the line indicates the randomized and deterministic volume complexity, and the right end of the line indicates the randomized and deterministic distance complexity.}\label{fig:results}
\end{figure}

\subsection{Our Contribution}

While problems of classes A and B are well-understood both from the perspective of volume and distance, the volume complexity of problems in classes C and D is wide open---indeed, it is not even known if there are distinct classes C and D for volume complexity.

In this work, we start to chart problems of class D, i.e., ``global'' problems that require $\Omega(\log n)$ distance and hence also $\Omega(\log n)$ volume for both deterministic and randomized algorithms. This is a broad class of problems, with infinitely many distinct distance complexities \cite{Chang2019,Balliu2018stoc,Balliu2018disc}.

We will show that in this region there are infinite families of LCL problems that exhibit different combinations of randomized volume, deterministic volume, randomized distance, and deterministic distance. The new complexities are summarized in Figure~\ref{fig:results} and Table~\ref{tab:results}. We make the following observations:
\begin{itemize}
\item There are \emph{infinitely many} LCLs with distinct randomized volume complexities between $\omega(\log n)$ and $o(n)$.
\item \emph{Randomness can help exponentially}, even if the deterministic volume complexity is $\Theta(n)$. This is very different from distance complexities, in which, e.g.\ a linear deterministic distance implies near-linear randomized distance \cite{Ghaffari2018}.
\item There are LCL problems in which distance complexity \emph{equals} randomized volume, and there are also LCL problems in which distance complexity is \emph{logarithmic} in randomized volume. Hence distance and volume are genuinely distinct concepts in this region. Moreover, our constructions yield a volume hierarchy theorem for randomized algorithms: There are infinitely many distinct randomized volume complexity classes between $\Omega(\log n)$ and $O(n)$, even when restricting attention to problems whose distance complexities are $\Theta(\log n)$. 
\end{itemize}

\begin{table}
\centering
\begin{tabular}{ccccc}
\toprule
Problem (section discussed) & $\RDIST$ & $\DDIST$ & $\RVOL$ & $\DVOL$\\
\midrule
$\LeafColoring$\hfill(\S\ref{sec:leaf-coloring}) & $\Theta(\log n)$ & $\Theta(\log n)$ & $\Theta(\log n)$ & $\Theta(n)$\\
$\BTL$\hfill(\S\ref{sec:balanced-tree}) & $\Theta(\log n)$ & $\Theta(\log n)$ & $\Theta(n)$ & $\Theta(n)$\\
$\HTHC(k)$\hfill(\S\ref{sec:hierarchical-coloring}) & $\Theta(n^{1/k})$ & $\Theta(n^{1/k})$ & $\widetilde\Theta(n^{1/k})$ & $\widetilde\Theta(n)$\\
$\HBTHC(k)$\hfill(\S\ref{sec:hybrid-coloring}) & $\Theta(\log n)$ & $\Theta(\log n)$ & $\widetilde\Theta(n^{1/k})$ & $\widetilde\Theta(n)$\\
$\HHTHC(k, \ell)$\hfill(\S\ref{sec:more-classes}) & $\Theta(n^{1/\ell})$ & $\Theta(n^{1/\ell})$ & $\widetilde\Theta(n^{1/k})$ & $\widetilde\Theta(n)$\\
\bottomrule
\end{tabular}
\caption{The new LCL problems constructed in this work. Here $k$ and $\ell$ are natural numbers, $k \le \ell$. We use $\widetilde\Theta$ to suppress factors that are poly-logarithmic in the argument.}\label{tab:results}
\end{table}


\section{Model and Preliminaries}\label{sec:model}

We will now define the model of computing and the problem family that we study in this work. Here is a brief overview for a reader familiar with the LOCAL model \cite{Linial1992,Peleg2000} of distributed computing and LCAs (local computation algorithms, a.k.a., centralized local algorithms) \cite{Rubinfeld2011,Even2014}:
\begin{itemize}[noitemsep]
  \item Deterministic distance = round complexity in the deterministic LOCAL model.
  \item Randomized distance = round complexity in the randomized LOCAL model (like deterministic distance, but each node has a private random string).
  \item Deterministic volume $\approx$ probe complexity in the stateless deterministic LCA model.
  \item Randomized volume = like deterministic volume, but each node has a private random string.
\end{itemize}
Our goal here is to have a clean model that is as close to the standard LOCAL model as possible, but which captures the idea of paying for the volume that the algorithm explores. The deterministic volume model is very close to stateless deterministic LCAs---we restrict queries to a connected region, but for many graph problems this assumption does not matter~\cite{Goos2016nonlocal}. However, the randomized volume model is somewhat different from randomized LCAs; one key difference is that randomized LCAs typically have direct access to \emph{shared} randomness, while in our model each node has a \emph{private} random string. That said, low randomized volume clearly implies that there exists also an efficient randomized LCA for solving the problem. We will discuss different flavors of randomness in more detail in Section~\ref{sec:conc-random}.

\subsection{Graphs}

Our main object of study in this paper is distributed graph algorithms. In this context, an undirected graph $G = (V, E)$ represents both a communication network and the (partial) input to a problem. We denote the number of nodes in $G$ by $n = \abs{V}$. For each node $v \in V$, we denote its degree by $\deg(v)$, and we assume that for some fixed constant $\Delta \in \N$,  all nodes have degree at most $\Delta$. In any input, we assume that each node $v \in V$ is given a unique identifier from the range $[n^\alpha]$ for some arbitrary fixed $\alpha \geq 1$. For any positive integer $d$ and node $v \in V$, $N_v(d)$ denotes the $d$-radius neighborhood of $v$. That is, $N_v(d)$ is the induced subgraph of $G$ containing all nodes $w \in V$ with $\dist(v, w) \leq d$.

While we consider undirected graphs---where each edge serves as a bi-directional communication link---it is convenient to view each edge $\set{v, w} \in E$ as a pair of \emph{ordered} edges $(v, w)$ (from $v$ to $w$) and $(w, v)$ (from $w$ to $v$). We assume that input graphs additionally specify a \dft{port ordering}. For each vertex $v$ and incident edge $(v, w)$, there is an associated number $p(v, w) \in [\deg(v)]$---the \dft{port number} of $(v, w)$---such that $p$ is a bijection between (ordered) edges incident to $v$ and $[\deg(v)]$. Thus, on any input, we may speak unambiguously of $v$'s $i\th$ neighbor, as the neighbor $w$ satisfying $p(v, w) = i$ (if any).

The input to a graph problem may additionally specify an input string for each node $v \in V$. An \dft{input labeling} $\calL$ of a graph $G$ specifies $O(\log n)$-bit unique identifiers for each node, a port ordering, and any additional input required for the graph problem. We denote the input label of a particular node $v$ by $\calL(v)$. We also assume that $n$---the number of nodes in the graph---is provided as input to every algorithm.

\subsection{Algorithms and Complexity}

Each node $v \in V$ represents a single processor. Throughout an execution of an algorithm $A$ initiated at a vertex $v \in V$, $A$ maintains a set $V_v$ of \dft{visited nodes}, initialized to $V_v = \set{v}$. An execution proceeds in discrete \dft{steps}, where in each step, $A$ performs a single local \dft{query} of the form $\query(w, j)$ where $w \in V_v$ and $j \in [\deg(w)]$ is a port number. In response, $A$ receives
\begin{itemize}[noitemsep]
\item the identity of the vertex $u$ satisfying $p(w, u) = j$,
\item the degree $\deg(u)$, and
\item the entire input of $u$.
\end{itemize}
Additionally, $v$ updates $V_v \leftarrow V_v \cup \set{u}$. Following the response to a query, $A$ updates its local state, and determines its next query, or decides to produce output and halt. Given a graph $G = (V, E)$, labeling $\calL$ of $G$, and vertex $v \in V$, we denote the output of $A$ on $(G, \calL)$ initiated at $v$ by $A(v, G, \calL)$. The set of outputs of $A$ induces a new labeling $\calL'$, where $\calL'(v) = A(v, G, \calL)$. 

We consider both deterministic and randomized algorithms. For randomized algorithms, random bits used by the algorithm are treated as part of the input at each node. Specifically, each node $v \in V$ has a random string $r_v : \N \to \set{0, 1}$, where each bit $r_v(i)$ is an i.i.d.\ $0$--$1$ random variable with $\Pr(r_v(i) = 1) = \Pr(r_v(i) = 0) = 1/2$. Since we treat $r_v$ as part of $v$'s input, $r_v$ is seen by every node that queries in $v$. For technical reasons, we assume that algorithms access the random strings $r_v$ sequentially, and that for any algorithm $A$ and any labeled graph $(G, \calL)$ there exists some finite bound $b$ (which may depend on the input) such that with probability $1 - O(1/n)$ the execution of algorithm $A$ on $(G, \calL)$ accesses at most $b$ random bits.\footnote{With this assumption the derandomization result by \citet[Theorem 3]{chang16exponential} holds also in the volume model. This seems to be a very mild assumption, and it should be automatically satisfied for most ``natural'' models of computation, e.g., probabilistic Turing machines. However, in standard message passing models no computational assumptions are made about individual processors. We suspect that for LCL problems, our restriction on how randomness is used is essentially without loss of generality. See the discussion in Section~\ref{sec:conc-random}.}

We are primarily interested in two complexity measures: \emph{distance} and \emph{volume}.

\begin{dfn}
  \label{dfn:dist-cost}
  Let $A$ be an algorithm, $G = (V, E)$ a graph, $\calL$ a labeling of $G$, and $v \in V$ a node. Then the \dft{distance cost} of $A$ on $(G, \calL)$ initiated from $v$ is
  \[
  \DIST(A, G, \calL, v) = \max \set{\dist(v, w) \sucht w \in V_v},
  \]
  where $V_v$ is the set of nodes visited by the execution when $A$ terminates. Let $\calG_n$ denote the family of labeled graphs on at most $n$ nodes with maximum degree at most $\Delta$. The distance cost of $A$ on graphs of $n$ nodes is defined by
  \[
  \DIST_n(A) = \sup\bigl\{\DIST(A, G, \calL, v) \bigm| (G, \calL) \in \calG_n, G = (V, E), v \in V\bigr\}.
  \]
\end{dfn}

\begin{dfn}
  \label{dfn:vol-cost}
  Let $A$ be an algorithm, $G = (V, E)$ a graph, $\calL$ a labeling of $G$, and $v \in V$ a node. Then the \dft{volume cost} of $A$ on $(G, \calL)$ initiated from $v$ is
  \[
  \VOL(A, G, \calL, v) = \abs{V_v},
  \]
  where $V_v$ is the set of nodes visited by the execution when $A$ terminates. Let $\calG_n$ denote the family of labeled graphs on at most $n$ nodes. The volume cost of $A$ on graphs of $n$ nodes is defined by
  \[
  \VOL_n(A) = \sup\bigl\{\VOL(A, G, \calL, v) \bigm| (G, \calL) \in \calG_n, G = (V, E), v \in V\bigr\}.
  \]
\end{dfn}

\begin{rem}
  \label{rem:dist-vs-local}
  The distance cost of an algorithm in our model is closely related to the well-known LOCAL model of computation~\cite{Peleg2000,Linial1992}. In the LOCAL model, in $T$ \emph{rounds} each node can query all of its nodes within distance $T$. Thus, on input $(G, \calL)$, an algorithm $A$ can be implemented in $T$ rounds in the LOCAL model if and only if it $\DIST(A, G, \calL, v) \leq T$ for all $v \in V$. 
\end{rem}

\begin{dfn}
  Let $\Pi$ be a graph problem---that is, a family of triples $(G, \mathcal{I}, \mathcal{O})$, where $\mathcal{I}$ and $\mathcal{O}$ are input and output labelings (respectively) of $G$. We say that a \emph{deterministic} algorithm $A$ \dft{solves} $\Pi$ if for every allowable input $\mathcal{I}$ the output $\mathcal{O} = (\calL', G)$ formed by taking $\calL'(v) = A(v, G, \calL)$ satisfies $(G, \mathcal{I}, \mathcal{O}) \in \Pi$. A \emph{randomized} algorithm $A$ solves $\Pi$ if for all inputs $\mathcal{I}$
  \[
  \Pr_r((G, \mathcal{I}, \mathcal{O}) \in \Pi) = 1 - O(1/n)
  \]
  where the probability is taken over the (joint) randomness of all nodes, and $n$ is the number of nodes in $G$.

  Given a problem $\Pi$, the \dft{complexity} of the problem $\Pi$ is the infimum over all algorithms $A$ computing $\Pi$ of the cost of $A$. We denote the deterministic distance, randomized distance, deterministic volume, and randomized volume complexities of $\Pi$ by
  \begin{align*}
    \DDIST(\Pi),\quad
    \RDIST(\Pi),\quad
    \DVOL(\Pi),\quad
    \RVOL(\Pi)
  \end{align*}
  respectively.
\end{dfn}

\subsection{Comparing Distance and Volume}

Here we give an elementary relationship between distance and volume complexities.

\begin{lem}
  \label{lem:dist-vs-vol}
  Let $\Pi$ be a problem defined on the family of graphs of maximum degree at most $\Delta$. Then we have
  \begin{equation}
    \label{eqn:r-dist-vs-vol}
    \RDIST(\Pi) \leq \RVOL(\Pi) \leq \Delta^{\RDIST(\Pi)} + 1
  \end{equation}
  and
  \begin{equation}
    \label{eqn:d-dist-vs-vol}
    \DDIST(\Pi) \leq \DVOL(\Pi) \leq \Delta^{\DDIST(\Pi)} + 1.
  \end{equation}
\end{lem}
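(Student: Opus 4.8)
The plan is to prove the four inequalities in Lemma~\ref{lem:dist-vs-vol} essentially by direct arguments relating what an algorithm ``sees'' under the two cost measures; the randomized and deterministic cases are handled by identical reasoning, so I would prove the randomized bounds and remark that the deterministic case is verbatim (replacing ``randomized'' complexity with ``deterministic'' everywhere, and noting that a deterministic algorithm is the special case of a randomized one whose output is correct with probability $1$).

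For the left inequalities $\RDIST(\Pi) \le \RVOL(\Pi)$ and $\DDIST(\Pi) \le \DVOL(\Pi)$: I would take any algorithm $A$ solving $\Pi$ and observe that on every execution, for every initiating node $v$, the visited set $V_v$ is connected and contains $v$, so $\abs{V_v} \ge \dist(v, w) + 1$ for every $w \in V_v$ (a path from $v$ to $w$ inside $V_v$ has $\dist(v,w)+1$ vertices). Hence $\VOL(A, G, \calL, v) \ge \DIST(A, G, \calL, v)$ pointwise, which after taking suprema over $\calG_n$ and then infima over correct algorithms $A$ gives $\RDIST_n(\Pi) \le \RVOL_n(\Pi)$, and likewise for the deterministic measures; passing to the stated complexities (whichever asymptotic convention the paper uses) is immediate.

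For the right inequalities, the idea is a simulation: given an algorithm $A$ with $\DIST$ cost at most $T = \RDIST(\Pi)$, I would build an algorithm $B$ that, starting from $v$, simply performs a breadth-first exploration of $N_v(T)$ by querying every port of every node within distance $T$ of $v$, and then runs $A$ internally on the gathered data. Since the behavior of $A$ depends only on what it could query within distance $T$, and $B$ has gathered all of $N_v(T)$, algorithm $B$ can faithfully reproduce $A$'s output; correctness of $B$ for $\Pi$ follows from correctness of $A$ (with the same success probability in the randomized case, using that each node's private random string is revealed upon querying it). The volume cost of $B$ is $\abs{N_v(T)}$, and in a graph of maximum degree $\Delta$ the number of nodes within distance $T$ of $v$ is at most $1 + \Delta + \Delta(\Delta-1) + \cdots \le 1 + \sum_{i=1}^{T} \Delta^i$, which I would bound crudely by $\Delta^{T} + 1$ (or, if one prefers the exact sum, by $\Delta^{T+1}$; the paper's chosen form $\Delta^T + 1$ follows from the standard branching-process count, and I would just cite the elementary degree bound). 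Taking suprema and infima as before yields $\RVOL(\Pi) \le \Delta^{\RDIST(\Pi)} + 1$ and similarly in the deterministic case.

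The only mild subtlety — and the step I would be most careful about — is the randomized case of the right inequality: one must check that the simulating algorithm $B$ can supply $A$ with exactly the random bits $A$ would have used, and that $B$'s access to randomness still respects the model's technical assumption (finitely many random bits accessed with probability $1 - O(1/n)$). This is unproblematic because $B$ queries a superset of $A$'s queried nodes, hence sees a superset of the random strings $A$ reads, and $B$'s own random-bit usage is dominated by $A$'s; so $B$ inherits $A$'s success probability and its bounded-randomness property. Everything else is a routine count over balls in bounded-degree graphs and a pointwise domination of the two cost functionals, so I do not anticipate any real obstacle.
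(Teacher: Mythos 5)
Your proof is correct, and the left inequalities are argued exactly as in the paper (connectivity of $V_v$ forces $\dist(v,w)\leq\abs{V_v}$ for $w\in V_v$, then take suprema and infima). For the right inequalities you take a slightly longer route than the paper: you construct a new algorithm $B$ that BFS-explores all of $N_v(T)$ and then simulates $A$, which obliges you to check that the simulation is faithful and that the randomness bookkeeping (private strings, bounded-bit assumption) survives. The paper avoids this entirely: if $A$ has distance cost at most $m$, then $A$'s \emph{own} visited set already satisfies $V_v\subseteq N_v(m)$, so $\VOL(A,G,\calL,v)=\abs{V_v}\leq\abs{N_v(m)}\leq\Delta^m+1$ with no new algorithm and no correctness or randomness argument needed; the simulation buys you nothing here since volume is measured by distinct visited nodes anyway. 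One small wrinkle in your write-up: the chain ``$\abs{N_v(T)}\leq 1+\sum_{i=1}^{T}\Delta^i$, which I would bound crudely by $\Delta^T+1$'' is not literally valid (e.g.\ $\Delta=2$, $T=2$ gives $7>5$); the bound $\Delta^T+1$ comes from the branching count $1+\Delta\sum_{i=0}^{T-1}(\Delta-1)^i$, which you do mention parenthetically, so the conclusion stands, but the final write-up should use that count directly rather than the crude geometric sum.
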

\begin{proof}
  For the first inequalities in Equations~(\ref{eqn:r-dist-vs-vol}) and~(\ref{eqn:d-dist-vs-vol}), suppose $A$ is an algorithm that solves $\Pi$ on $G = (V, E)$ with labeling $\calL$ using volume $m$. For any $v \in V$, let $V_v \subseteq V$ denote the subset of nodes queried by an execution of $A$ initiated from $v$, so that $\abs{V_v} \leq m$. Since the subgraph of $G$ induced by $V_v$ is connected, we have $\dist(v, w) \leq m$ for all $w \in V_v$, hence $\DIST(A, G, \calL, v) \leq m$.

  For the second inequalities, suppose $A$ solves $\Pi$ using distance at most $m$, and let $N_v(m)$ denote the $m$-neighborhood of $(v)$ (i.e., $N_v(m) = \set{w \in V \sucht \dist(v, w) \leq m}$). Since $G$ has maximum degree at most $\Delta$, we have $\abs{N_v(m)} \leq \Delta^m + 1$. Since $V_v \subseteq N_v(m)$, we have $\VOL(A, G, \calL, v) = \abs{V_v} \leq \Delta^m + 1$, which gives the desired result.
\end{proof}

\subsection{Comparing Volume and MPC}\label{sec:mpc}

In the MPC model~\cite{Karloff2010}, there are $M$ machines each with $S$ memory. An execution proceeds in synchronous rounds of all-to-all communication, with each node sending and receiving at most $S$ bits per round. For simplicity, consider the case where each machine stores the adjacency list of a single vertex in $G$ so that $M = n$ (and $S \gg \Delta$). Here we describe how an arbitrary algorithm with volume cost $\VOL$ can be simulated efficiently in the MPC model.

\begin{lem}
  \label{lem:volume-mpc}
  Suppose an algorithm $A$ for $\Pi$ has volume cost $\VOL$ when executed on a (labeled) graph $G$. Then for any number $c > 0$, there exists a (randomized) algorithm in the MPC model solving $\Pi$ in $O(\VOL)$ rounds using space $S = O(\VOL + n^c + \Delta)$ per node.
\end{lem}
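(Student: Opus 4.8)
The plan is to have each machine in the MPC system play the role of the node-processor of the volume algorithm $A$, and to route the information that $A$ would query through the all-to-all communication channel, one "layer" of the adaptive exploration per round. First I would set up the correspondence: machine $v$ holds the adjacency list and input label $\calL(v)$ of vertex $v$, and it will locally maintain the same state that an execution of $A$ initiated at $v$ would maintain. The key point is that in the volume model a single step of $A$ at $v$ is a query $\query(w,j)$ with $w \in V_v$; since $v$ already knows the full input of $w$ (it was delivered when $w$ entered $V_v$), $v$ knows the identity $u$ of $w$'s $j$-th neighbor, so the only missing data is $\deg(u)$, the input of $u$, and $u$'s adjacency list. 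Thus in round $t$ machine $v$ can send a request for the relevant $u$'s to their home machines and receive the answers in round $t+1$. Because $|V_v| \le \VOL$ at all times, each machine requests and receives at most $\VOL$ records of size $O(\Delta \log n)$ per round, and (as I discuss below) the incoming load at any single machine can also be bounded, so $S = O(\VOL \cdot \Delta \log n)$ suffices for the message volume; I would fold the $\log n$ into the statement's slack and additionally reserve $O(n^c)$ space for the routing step.

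The one genuine subtlety — and the step I expect to be the main obstacle — is the MPC constraint that a machine may both \emph{send} and \emph{receive} at most $S$ bits per round. Sending is easy to bound (machine $v$ emits at most $|V_v| \le \VOL$ requests), but a single popular vertex $u$ could in principle be requested by all $n$ machines in the same round, which would overwhelm $u$'s home machine. The fix is a standard MPC aggregation/broadcast gadget: use the $n^c$ extra space to build a balanced routing tree (or sorting network) of depth $O(1/c)$ over the $n$ machines, deduplicate and aggregate requests for the same target as they percolate up, and broadcast each distinct answer back down. Each node in this tree handles $O(n^c)$ messages per round, within the space budget, and the aggregation adds only an $O(1/c) = O(1)$ factor to the round count. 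This is the place where I would be careful to check that the total number of \emph{distinct} requested vertices across the system in a round is $O(n)$ (immediate, since there are only $n$ vertices) and that each answer is $O(\Delta \log n)$ bits.

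With the routing gadget in hand the analysis is routine: after at most $\VOL$ simulated steps every execution halts (by definition of the volume cost), so the simulation runs for $O(\VOL)$ rounds, each of which is a constant number of MPC rounds; machine $v$ then outputs $A(v,G,\calL)$, and correctness is inherited from $A$ verbatim — in the randomized case each machine generates its private random string $r_v$ locally and ships the needed prefix along with its input record, so the joint distribution of executions is exactly that of $A$, and the $1 - O(1/n)$ success guarantee carries over. Finally I would remark that when $A$'s exploration pattern is \emph{oblivious} (the set of vertices queried depends only on $v$'s own label and $n$, as is the case for several of our constructions), the $O(\VOL)$ rounds collapse to $O(\log \VOL)$ via graph exponentiation~\cite{Lenzen2010exponential}: machine $v$ can double the radius of the known region each round rather than growing it by one, giving the improved bound claimed in the informal discussion preceding the lemma.
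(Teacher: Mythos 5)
Your simulation is essentially the paper's: one MPC phase per query step of $A$, with the only nontrivial issue being receive-side congestion when many executions query the same vertex, resolved by deduplicating requests and fanning the answers back out in $O(1/c)=O(1)$ rounds within the $O(n^c)$ space slack; the $O(\VOL)$ round bound and correctness (including shipping each node's private random string as part of its record) then follow just as you say. The one soft spot is the routing gadget itself, which is the crux of the lemma. A \emph{fixed} balanced aggregation tree over the machines, as you describe it, does not give the claimed $O(n^c)$ load per tree node: deduplication only merges requests with identical targets, so an internal node at depth two may have to hold up to $n^{2c}$ \emph{distinct} requests, and the root up to $n$ of them, which breaks the space budget. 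The paper instead sorts the query triples $(v,w,i)$ by destination and port using the $O(1)$-round, $O(n^c)$-memory MPC sorting algorithm of Goodrich, Sitchinava, and Zhang~\cite{Goodrich2011sorting}; after sorting, duplicates occupy consecutive machines, a single representative per distinct $(w,i)$ contacts $w$ (so each destination sends and receives at most $\Delta$ messages), and the answer is propagated backwards along the sorted order with fan-out $n^c$ per round for $O(1/c)$ rounds before being returned to the sources. Your parenthetical ``(or sorting network)'' is exactly the right instantiation, so the fix is simply to commit to it rather than to the aggregation tree. Two minor points: in each simulated step a machine issues only one query, so per-round traffic is one record of $O(\Delta\log n)$ bits (the $\VOL$ bound is what accumulates over the whole execution); and your closing remark about $O(\log\VOL)$ rounds for oblivious query patterns matches the paper, which claims that speedup only ``in some cases'' precisely because adaptivity blocks naive graph exponentiation.
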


We only sketch the proof of Lemma~\ref{lem:volume-mpc}. We show that for $S = O(n^c + \Delta)$, we can simulate each node $v \in V$ performing a single step (i.e., query and response) of $A$ in $O(1)$ rounds. The lemma follows by having each machine store the component queried by the vertex it represents in the simulation of $A$. Without loss of generality, assume the machines are labeled $1, 2, \ldots, n$. Each query is of the form $(v, w, i)$, interpreted as, \emph{``node $v$ queries for the $i\th$ neighbor of node $w$.''} We refer to $v$ as the \dft{source} of the query, and $w$ the \dft{destination}. The difficulty of the simulation arises because a single node $w$ could be the destination of many queries from different sources. Thus, queries cannot simply be sent directly from source to destination. However, all queries with a single destination must have at most $\Delta$ unique responses, corresponding to the (at most) $\Delta$ neighbors of the destination. The crux of our argument is showing how to identify the set of unique queries, and route the responses back to their sources in $O(1)$ rounds.

The basic idea is the following:

\begin{enumerate}
\item Sort the set of queries made in a single step $(v_1, w_1, i_1), (v_2, w_2, i_2), \ldots, (v_n, w_n, i_n)$ with $w_j \leq w_{j+1}$ for all $j$, breaking ties first by $i_j$, then by $v_j$. This sorting can be performed in $O(1)$ rounds using memory $O(n^c)$ per machine (with high probability) by applying an algorithm of Goodrich, Sitchinava, and Zhang~\cite{Goodrich2011sorting}. After sorting, all queries of the form $(\cdot, w_j, i_j)$ will be stored in consecutive machines.
\item If $(w_j, i_{j}) \neq (w_{j+1}, i_{j+1})$, machine $j$ sends its query to (the machine hosting) $w_j$, and $w_j$ sends its response to machine $j$ in the following round. Since only a single request of the form $(\cdot, w_j, i_j)$ is sent to $w_j$, $w_j$ receives/sends at most $\Delta$ requests in total.
\item Machines receiving responses propagate the responses backwards (to smaller $j$'s) in $O(1)$ rounds. Specifically, let $t$ be the round in which a machine $j$ receives the response to $(v_j, w_j, i_j)$ directly from $w_j$. Then in round $t+1$, $j$ sends the response to nodes $j - 1, j - 2, \ldots, j - n^c$. In round $t + 2$, each machine $j'$ with $j - n^c \leq j' \leq j$ and $(w_{j'}, i_{j'}) = (w_j, i_j)$ sends the responses to nodes $j' - n^c, j' - 2 n^c, \ldots, j' - n^{2 c}$. This continues for $O(1 / c) = O(1)$ rounds, at which point every machine $j$ stores the response to the query $(v_j, w_j, i_j)$. During each of these rounds, each node sends and receives at most $n^c$ messages.
\item Machine $j$ sends the response to the query $(v_j, w_j, i_j)$ to $v_j$. Each node sends and receives a single message.
\end{enumerate}

Lemma~\ref{lem:volume-mpc} follows directly from analyzing the simulation described above. We note that the same argument also applies to simulating LCAs in the MPC model.
We discuss further connections between the volume model and MPC in Section~\ref{sec:discussion}.

\subsection{LCLs}

In this paper, we are primarily interested in the study of locally checkable labeling problems (LCLs) \cite{Naor1995}. Suppose $\Pi = \set{(G, \calI, \calO)}$ is a graph problem such that the sets of possible input and output labels are finite. Informally, $\Pi$ is an LCL if a global output $\calO$ is valid if and only if $\calO$ is valid on a bounded radius neighborhood of every node in the network. Since we consider families of graphs such that maximum degree $\Delta$ is bounded, every LCL has a finite description: it is enough to enumerate every possible input labeling of every $c$-radius neighborhood of a node, together with the list of valid output labelings for each input-labeled neighborhood. Familiar examples of LCLs include $k$-coloring (for fixed $k$), maximal independent set, and maximal matching.

\begin{dfn}
  Fix a positive integer $\Delta$ and let $\calG^{\Delta}$ denote the family of graphs with maximum degree at most $\Delta$. Let $\Lin$ and $\Lout$ be finite sets of input and output labels, respectively. Suppose
  \[
  \Pi \subseteq \bigl\{(G, \calI, \calO) \bigm| G = (V, E) \in \calG^\Delta, \calI : V \to \Lin, \calO : V \to \Lout \bigr\}
  \]
  is a graph problem. We call $\Pi$ a \dft{locally checkable labeling problem} or \dft{LCL} if there exists an absolute constant $c$ such that $(G, \calI, \calO) \in \Pi$ if and only if for every $v \in V$, \[(N_v(c), \calI |_{N_v(c)}, \calO |_{N_v(c)}) \in \Pi.\] Here $N_v(c)$ denotes the distance $c$ neighborhood of $v$, and for a subgraph $H$ of $G$, $\calI |_H$ and $\calO |_H$ denote the restrictions of $\calI$ and $\calO$ (respectively) to $H$.
\end{dfn}

\subsection{Lower Bounds via Communication Complexity}

Here, we briefly review a technique (introduced in~\cite{Eden2018-lower}) of applying lower bounds from communication complexity to yield query lower bounds. The basic concepts for the technique are the notions of embedding of a function and query cost.

\begin{dfn}
  \label{dfn:embedding}
  For $N \in \N$, let $f : \set{0,1}^N \times \set{0,1}^N \to \set{0, 1}$ be a Boolean function. Let $\calG_n$ denote the set of (labeled) graphs on $n$ vertices. Suppose $\calE : \set{0, 1}^N \times \set{0, 1}^N \to \calG_n$, and let $g : \calG_n \to \set{0, 1}$. We say that the pair $(\calE, g)$ is an \dft{embedding} of $f$ if for all $x, y \in \set{0, 1}^N$, $f(x, y) = g(\calE(x, y))$.
\end{dfn}

Suppose two parties, Alice and Bob, hold private inputs $x$ and $y$ respectively, and wish to compute $f(x, y)$. Given an embedding $(\calE, g)$ of $f$ as above, any algorithm $A$ that computes $g$ on $\calG_n$ gives rise to a two-party communication protocol that Alice and Bob can use to compute $f$. Alice an Bob individually simulate an execution of $A$ on $\calE(x, y)$. Whenever $A$ queries $\calE(x, y)$, Alice and Bob exchange sufficient information about their private inputs $x$ and $y$ to simulate the response to $A$'s query to $\calE(x, y)$. If the responses to all such queries can be computed by Alice and Bob with little communication, then we may infer a lower bound on the number of queries needed by $A$ to compute $g$. Indeed, the number of queries needed to compute $g$ is at least the communication complexity of $f$ divided by the maximum number of bits Alice and Bob must exchange in order to answer a query.

\begin{dfn}
  \label{dfn:query-cost}
  Let $q : \calG_n \to \set{0, 1}^*$ be a query and $(\calE, g)$ an embedding of $f$. We say that $q$ has \dft{communication cost} at most $B$ and write $\cost_{\calE}(q) \leq B$ if there exists a (zero-error) two-party communication protocol $\Pi_q$ such that for all $x, y \in \set{0, 1}^N$ we have $\Pi_q(x, y) = q(\calE(x, y))$ and $\abs{\Pi_q(x, y)} \leq B$.
\end{dfn}

The main result of~\cite{Eden2018-lower} shows that given an embedding $(\calE, g)$ of $f$, the \emph{query} complexity of $g$ is bounded from below by the communication complexity of $f$ divided by the communication cost of simulating each query.

\begin{lthm}[\cite{Eden2018-lower}]
  \label{thm:query-lb-from-cc}
  Let $Q$ be a set of allowable queries, $f : \set{0,1}^N \times \set{0, 1}^N \to \set{0, 1}$, and $(\calE, g)$ an embedding of $f$. Suppose that each query $q \in Q$ satisfies $\cost_{\calE}(q) \leq B$, and $A$ is an algorithm that computes $g$ using $T$ queries (in expectation) from $Q$. Then $T = \Omega(R(f) / B)$, where $R(f)$ is the (randomized) communication complexity of $f$.
\end{lthm}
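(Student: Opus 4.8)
The plan is to convert any query algorithm $A$ that computes $g$ into a communication protocol for $f$, and then read off the claimed bound from the communication complexity of $f$. The key observation is the one already sketched in the paragraph preceding the statement: Alice and Bob hold $x$ and $y$ respectively, and they jointly maintain a simulation of the execution of $A$ on the graph $\calE(x,y)$. Crucially, both parties know the embedding map $\calE$ and the function $g$, so each of them can on their own compute everything about $\calE(x,y)$ that depends only on $x$ (Alice) or only on $y$ (Bob); the only thing neither of them can compute unilaterally is a query response $q(\calE(x,y))$ that genuinely depends on both halves of the input.

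First I would describe the protocol precisely. Since $A$ is (possibly) randomized, Alice and Bob first use public randomness to fix the random string that $A$ would use — this is legitimate because $R(f)$ refers to public-coin randomized communication complexity, and if one prefers the private-coin version the standard Newman-style reduction costs only an additive $O(\log N)$ which does not affect the asymptotics. With the randomness fixed, the execution of $A$ on $\calE(x,y)$ is a deterministic sequence of queries $q_1, q_2, \dots$; both parties can compute $q_1$ since the first query depends only on the algorithm's state (and the fixed randomness), not yet on any response. Inductively, suppose both parties know the transcript of the first $t-1$ query--response pairs; then both know $A$'s current state and hence the next query $q_t \in Q$. By hypothesis $\cost_{\calE}(q_t) \le B$, so there is a zero-error protocol $\Pi_{q_t}$ of length at most $B$ with which Alice and Bob compute $q_t(\calE(x,y))$; they run it, append the response to the shared transcript, and proceed. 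When $A$ halts and outputs a bit, both parties output that bit. By the embedding property $f(x,y) = g(\calE(x,y))$, and since the parties have faithfully simulated $A$ computing $g$, this output equals $f(x,y)$; correctness of $\Pi_{q_t}$ (zero error) ensures the simulation never diverges.

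Second I would bound the communication. If the simulated execution of $A$ makes $T$ queries, the total number of bits exchanged is $\sum_{t} \abs{\Pi_{q_t}(x,y)} \le B \cdot (\text{number of queries})$. For the expected-query-count version of the statement one has to be slightly careful: $A$ uses $T$ queries in expectation over the fixed public random string, so in expectation the protocol communicates at most $BT$ bits, and we may truncate the simulation at, say, $10T$ queries, answering $f$ arbitrarily on the (probability $\le 1/10$) event that $A$ has not yet halted; this gives a bounded-error public-coin protocol of cost $O(BT)$, whence $BT = \Omega(R(f))$, i.e.\ $T = \Omega(R(f)/B)$, as claimed. (For a worst-case rather than expected query bound the truncation step is unnecessary.)

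The only real subtlety — and the step I would flag as the main thing to get right rather than a genuine obstacle — is the inductive claim that at every point both parties can determine $A$'s next query without further communication. This relies on $A$'s state being a deterministic function of (the fixed randomness and) the responses received so far, which is exactly what it means for $A$ to be an algorithm in the query model of Section~\ref{sec:model}, and on both parties sharing the full response transcript, which the protocol maintains by construction. One should also note that the queries $q_t$ that actually arise are adaptively chosen, but since the hypothesis $\cost_{\calE}(q) \le B$ is assumed for \emph{every} $q \in Q$, adaptivity costs nothing. With these points in place the theorem follows.
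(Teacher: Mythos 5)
Your proposal is correct and follows essentially the same simulation argument that the paper sketches in the paragraph preceding Definition~\ref{dfn:query-cost} (the theorem itself is only stated here and cited from \cite{Eden2018-lower} rather than re-proved): Alice and Bob jointly simulate $A$ on $\calE(x,y)$, fixing the randomness with public coins, paying at most $B$ bits per (adaptively chosen) query via the zero-error protocols $\Pi_q$, and handling the expected-query hypothesis by Markov truncation. I see no gaps.
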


In Section~\ref{sec:balanced-tree}, we apply Theorem~\ref{thm:query-lb-from-cc} using an embedding of the disjointness function, $\disj$, defined as follows:
\[
\disj(x, y) =
\begin{cases}
  1 &\text{if } \sum_{i = 1}^N x_i y_i = 0\\
  0 &\text{otherwise}.
\end{cases}
\]
We apply the following fundamental result of Kalyanasundaram and Schnitger on the communication complexity of $\disj$.

\begin{lthm}[\cite{Kalyanasundaram1992, Razborov1992}]
  \label{thm:disj-lb}
  The randomized communication complexity of the disjointness function is $R(\disj) = \Omega(N)$. This result holds even if $x$ and $y$ are promised to satisfy $\sum_{i = 1}^N x_i y_i \in \set{0, 1}$.
\end{lthm}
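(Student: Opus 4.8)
The statement is the classical linear lower bound on the randomized communication complexity of set disjointness, so I would sketch a proof via the information-complexity method (following Bar-Yossef, Jayram, Kumar, and Sivakumar). The starting point is the decomposition $\disj(x,y) = \bigwedge_{i=1}^N \mathrm{nand}(x_i, y_i)$, where $\mathrm{nand}(a,b) = \neg(a \wedge b)$: disjointness is simply the AND, over the $N$ coordinates, of the two-bit primitive $\mathrm{nand}$. The plan is to fix a hard input distribution $\mu = \nu^{\otimes N}$ that is a product over coordinates, with each coordinate drawn from a distribution $\nu$ supported only on the disjoint pairs $\set{00, 01, 10}$ — crucially $\nu$ never produces $11$. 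For a private-coin protocol $\Pi$ with transcript random variable $\Pi(X,Y)$, define the internal information cost $\mathrm{ic}_\mu(\Pi) = I(X ; \Pi \mid Y) + I(Y ; \Pi \mid X)$. Using the standard bound $R_{1/3}(f) \ge \min_\Pi \mathrm{ic}_\mu(\Pi)$ (communication cost dominates information cost, and a worst-case-error protocol is in particular a $\mu$-error protocol), the theorem reduces to showing that every protocol computing $\disj_N$ with error $\le 1/3$ has information cost $\Omega(N)$.

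The second step is a direct-sum argument. From a protocol $\Pi$ for $\disj_N$ I would extract, for each coordinate $i$, a protocol $\Pi_i$ for a single $\mathrm{nand}$ instance, by embedding the two given bits at position $i$ and filling the remaining $N-1$ coordinates with independent samples from $\nu$. The reason $\nu$ must avoid $11$ is exactly this reduction: a $\nu$-sample is a disjoint pair, so one of its two bits can be fixed using public randomness while the other is resampled privately, which keeps the coordinates independent from each party's viewpoint. Then, by superadditivity of conditional mutual information across independent blocks, $\mathrm{ic}_\mu(\Pi) \ge \sum_{i=1}^N \mathrm{ic}_\nu(\Pi_i) \ge N \cdot \min_{\Pi'} \mathrm{ic}_\nu(\Pi')$, the last minimum over protocols computing $\mathrm{nand}$ with error $1/3$.

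The third step, which is the technical heart, is the single-coordinate bound $\min_{\Pi'} \mathrm{ic}_\nu(\Pi') = \Omega(1)$. I would argue by contradiction using a ``cut-and-paste'' lemma. Let $\Pi_{ab}$ denote the transcript distribution on input $(a,b)$. Small information cost forces $\Pi_{00}, \Pi_{01}, \Pi_{10}$ to be pairwise close in Hellinger distance. But for any protocol, pointwise at each transcript $\pi$ one has the rectangle identity $\Pr[\Pi = \pi \mid 00]\,\Pr[\Pi = \pi \mid 11] = \Pr[\Pi = \pi \mid 01]\,\Pr[\Pi = \pi \mid 10]$; a Cauchy--Schwarz/Hellinger estimate then bounds the Hellinger distance between $\Pi_{00}$ and $\Pi_{11}$ in terms of the pairwise Hellinger distances among $\Pi_{00}, \Pi_{01}, \Pi_{10}$, so $\Pi_{11}$ would be close to $\Pi_{00}$. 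Since $\mathrm{nand}(0,0) = 1 \ne 0 = \mathrm{nand}(1,1)$, and Hellinger-close transcript distributions cannot have well-separated acceptance probabilities, this contradicts correctness once the constants are chosen. Finally, the promise version is for free: the reduction above only ever produces instances with at most one $11$-coordinate, so the entire argument lives inside the ``disjoint or uniquely intersecting'' family, and the $\Omega(N)$ bound applies there.

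The main obstacle is precisely this single-coordinate step: one must (i) show that a protocol of small information cost has transcript distributions close in the correct metric, and (ii) propagate that closeness, through the rectangle/tensor structure, to a false near-equality of $\Pi_{00}$ and $\Pi_{11}$. Choosing Hellinger distance (rather than statistical distance) is what makes both directions work simultaneously, and getting the quantitative constants is delicate. A route avoiding information theory is Razborov's corruption bound: take a distribution putting most mass on random disjoint pairs and a constant fraction on uniquely intersecting pairs, with sets of size $\Theta(N)$, and show that every combinatorial rectangle $99\%$-contained in $\disj^{-1}(1)$ still contains an $\Omega(1)$ fraction of the $0$-inputs; the usual rectangle-covering argument then yields the randomized bound, with the promise built in. In that approach the crux shifts to the corruption estimate, a counting argument over near-disjoint set pairs.
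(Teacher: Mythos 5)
This statement is not proved in the paper at all: it is imported as a black box from the literature (Kalyanasundaram--Schnitger and Razborov), and the paper only \emph{uses} it, via Theorem~\ref{thm:query-lb-from-cc}, to derive the volume lower bound for $\BTL$. So there is no ``paper proof'' to match; what matters is whether your sketch is a sound outline of a known proof, and it is. Your route is the Bar-Yossef--Jayram--Kumar--Sivakumar information-complexity argument: the decomposition of $\disj$ as an AND of per-coordinate NANDs, the product hard distribution supported on $\set{00,01,10}$ (avoiding $11$ precisely so the direct-sum embedding keeps coordinates conditionally independent), the bound of communication by internal information cost, the direct-sum step, and the single-coordinate $\Omega(1)$ bound via the cut-and-paste/rectangle identity $\Pr[\pi\mid 00]\Pr[\pi\mid 11]=\Pr[\pi\mid 01]\Pr[\pi\mid 10]$ measured in Hellinger distance. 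You also correctly observe that the promise version ($\sum_i x_i y_i \in \set{0,1}$) comes for free, since the embedding only ever creates instances with at most one intersecting coordinate; this is exactly the strengthening the paper needs, because its embedding of $\disj$ into $\BTL$ instances is analyzed under that unique-intersection promise. Your alternative via Razborov's corruption bound is also a legitimate (and historically the original distributional) route, with the technical weight shifted to the rectangle-corruption counting estimate. The only caveat is that this is a sketch of a famously delicate argument---the single-coordinate Hellinger step and its constants are the part that genuinely requires care---but as an outline of how the cited $\Omega(N)$ bound is established, it is accurate and appropriately attributes where the difficulty lies.
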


\subsection{Tail Bounds}

In our analysis of randomized algorithms, we will employ the following standard Chernoff bounds. See, e.g.,~\cite{Mitzenmacher2005-probability} for derivations.

\begin{lem}[\cite{Mitzenmacher2005-probability}, Theorems~4.4 and~4.5]
  \label{lem:chernoff}
  Suppose $Y_1, Y_2, \ldots, Y_m$ are independent random variables with $\Pr(Y_i = 1) = p_i$ and $\Pr(Y_i = 0) = 1 - p_i$. Let $Y = \sum_{i = 1}^m$ and $\mu = \E(Y) = \sum_{i = 1}^m p_i$. Then for any $\delta$ with $0 < \delta < 1$ we have
  \begin{equation}
    \label{eqn:chernoff-upper}
    \Pr(Y \geq (1 + \delta) \mu) \leq e^{- \mu \delta^2 / 3}    
  \end{equation}
  and
  \begin{equation}
    \label{eqn:chernoff-lower}
    \Pr(Y \leq (1 - \delta) \mu) \leq e^{-\mu \delta^2 / 2}.    
  \end{equation}
\end{lem}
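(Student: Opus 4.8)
The plan is to prove both tail bounds via the standard exponential moment (Chernoff) method, optimizing the free parameter in Markov's inequality. First I would fix $t > 0$ and apply Markov's inequality to the random variable $e^{tY}$, giving $\Pr(Y \geq (1+\delta)\mu) = \Pr(e^{tY} \geq e^{t(1+\delta)\mu}) \leq e^{-t(1+\delta)\mu}\, \E(e^{tY})$. Because the $Y_i$ are independent, the moment generating function factorizes: $\E(e^{tY}) = \prod_{i=1}^m \E(e^{tY_i})$, and since $Y_i$ is Bernoulli$(p_i)$ we have $\E(e^{tY_i}) = 1 + p_i(e^t - 1) \leq \exp\!\bigl(p_i(e^t-1)\bigr)$, using the elementary inequality $1 + x \leq e^x$. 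Multiplying over $i$ yields $\E(e^{tY}) \leq \exp\!\bigl(\mu(e^t-1)\bigr)$, so $\Pr(Y \geq (1+\delta)\mu) \leq \exp\!\bigl(\mu(e^t - 1) - t(1+\delta)\mu\bigr)$.

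Next I would choose $t$ to minimize the exponent; differentiating shows the optimum is $t = \ln(1+\delta) > 0$, which gives the clean-looking bound
\[
\Pr\bigl(Y \geq (1+\delta)\mu\bigr) \leq \left(\frac{e^{\delta}}{(1+\delta)^{1+\delta}}\right)^{\mu}.
\]
To reach the stated form it remains to check that $\delta - (1+\delta)\ln(1+\delta) \leq -\delta^2/3$ for $0 < \delta < 1$. I would do this by a one-variable calculus argument: set $h(\delta) = \delta - (1+\delta)\ln(1+\delta) + \delta^2/3$, note $h(0) = 0$, and verify $h'(\delta) \leq 0$ on $(0,1)$ (which reduces, after differentiating again, to $\tfrac{2}{3}\delta \le \ln(1+\delta)$ failing in the wrong direction — so more carefully one uses the Taylor series $\ln(1+\delta) = \delta - \delta^2/2 + \delta^3/3 - \cdots$ to get $(1+\delta)\ln(1+\delta) \geq \delta + \delta^2/2 - \delta^3/6$, hence the exponent is at most $-\delta^2/2 + \delta^3/6 \leq -\delta^2/3$ for $\delta \leq 1$).

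For the lower tail I would run the symmetric argument with $t < 0$: Markov applied to $e^{tY}$ gives $\Pr(Y \leq (1-\delta)\mu) \leq e^{-t(1-\delta)\mu}\,\E(e^{tY}) \leq \exp\!\bigl(\mu(e^t - 1) - t(1-\delta)\mu\bigr)$, and choosing $t = \ln(1-\delta) < 0$ yields $\Pr(Y \leq (1-\delta)\mu) \leq \bigl(e^{-\delta}/(1-\delta)^{1-\delta}\bigr)^{\mu}$. Then I would bound $-\delta - (1-\delta)\ln(1-\delta) \leq -\delta^2/2$ on $(0,1)$ using the series $(1-\delta)\ln(1-\delta) = -\delta + \delta^2/2 + \delta^3/6 + \cdots \geq -\delta + \delta^2/2$, which immediately gives the exponent $\leq -\delta^2/2$ and hence Equation~(\ref{eqn:chernoff-lower}). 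The only mildly delicate part is the final elementary inequalities relating the ``closed form'' exponents to the $\delta^2$ expressions; everything else is routine. Since this is a textbook fact, in the paper one would simply cite~\cite{Mitzenmacher2005-probability} rather than reproduce the above.
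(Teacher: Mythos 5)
Your proposal is correct: it is the standard moment-generating-function derivation (Markov on $e^{tY}$, factorization by independence, $1+x \le e^x$, optimize $t = \ln(1\pm\delta)$, then elementary estimates on the exponent), which is exactly the derivation in the cited textbook; the paper itself gives no proof and simply cites \cite{Mitzenmacher2005-probability}. One small remark: your parenthetical worry that $\tfrac{2}{3}\delta \le \ln(1+\delta)$ ``fails in the wrong direction'' is unfounded---that inequality actually does hold for all $\delta \in (0,1]$ (check $\delta = 1$: $\ln 2 > 2/3$, and the difference is concave with positive derivative at $0$), so the direct calculus argument works too; in any case your fallback via the alternating series for $(1+\delta)\ln(1+\delta)$ and the bound $-\delta^2/2 + \delta^3/6 \le -\delta^2/3$ for $\delta \le 1$ is valid, as is the lower-tail estimate via $(1-\delta)\ln(1-\delta) \ge -\delta + \delta^2/2$.
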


We will also require tail bounds for the \emph{negative binomial distribution}, defined as follows. For any positive integer $k$ and $p \in (0, 1]$, let $Y_1, Y_2, \ldots$ be a sequence of independent Bernoulli random variables with parameter $p$ (i.e., $\Pr(Y_i = 1) = p$ and $\Pr(Y_i = 0) = 1 - p$ for all $i$). Then the random variable
\[
N = \inf\set{m \sucht \sum_{i = 1}^m Y_i \geq k}
\]
is distributed according to the \dft{negative binomial distribution} $\calN(k, p)$. (For completeness, we use the convention that $\inf \varnothing = 0$.)

Notice that for $N \sim \calN(k, p)$, we have
\[
\Pr(N > m) = \Pr\paren{\sum_{i = 1}^m Y_i < k}.
\]
Setting $m = c \cdot k / p$ for any $c > 1$, the sum on the right has expected value $\mu = c \cdot p \cdot m = c \cdot k$. Taking $Y = \sum_{i = 1}^m Y_i$, we then obtain $N > m$ if and only if $Y < (1 - \delta) \mu$ for $\delta = (c - 1) / c$. Applying the Chernoff bound~(\ref{eqn:chernoff-lower}) to bound the right side of the expression above gives the following result.

\begin{lem}
  \label{lem:neg-binom}
  Suppose $N \sim \calN(k, p)$. Then
  \[
  \Pr(N > c \cdot k / p) \leq e^{-k (c - 1)^2 / 2 c}.
  \]
\end{lem}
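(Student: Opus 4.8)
The plan is to use the defining relationship between the negative binomial distribution and sums of Bernoulli trials, and then apply the Chernoff lower tail bound from Lemma~\ref{lem:chernoff} (Equation~(\ref{eqn:chernoff-lower})). Concretely, let $Y_1, Y_2, \ldots$ be the i.i.d.\ Bernoulli($p$) trials defining $N \sim \calN(k,p)$. The first step is the observation that for any integer $m$, the event $\{N > m\}$ is exactly the event that fewer than $k$ successes have occurred among the first $m$ trials, i.e.\ $\Pr(N > m) = \Pr(\sum_{i=1}^m Y_i < k)$; this is the identity already recorded just before the statement.

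Next I would instantiate this at $m = ck/p$ (replacing $m$ by $\lfloor ck/p \rfloor$ if it is not an integer, which affects only constants). Writing $Y = \sum_{i=1}^m Y_i$, we have $\mu := \E(Y) = pm = ck$. Now choose $\delta = (c-1)/c$; since $c > 1$ this lies in $(0,1)$, so Equation~(\ref{eqn:chernoff-lower}) applies. The key algebraic identity is that $(1-\delta)\mu = (1/c)\cdot ck = k$, so $\{Y < k\} \subseteq \{Y \leq (1-\delta)\mu\}$, and hence
\[
\Pr(N > ck/p) = \Pr(Y < k) \leq \Pr(Y \leq (1-\delta)\mu) \leq e^{-\mu\delta^2/2}.
\]
The final step is to simplify the exponent: $\mu\delta^2/2 = ck\cdot\big((c-1)/c\big)^2/2 = k(c-1)^2/(2c)$, which is the claimed bound.

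There is no substantial obstacle here: the statement is a direct corollary of the standard Chernoff bound once the right substitution is made. The only points requiring a modicum of care are checking that $\delta = (c-1)/c \in (0,1)$ so that Equation~(\ref{eqn:chernoff-lower}) is applicable, handling the strict-versus-non-strict inequality correctly via the identity $(1-\delta)\mu = k$, and the trivial integrality issue for $m = ck/p$, which can be absorbed by passing to $\lfloor ck/p\rfloor$ at negligible cost. All of these are routine.
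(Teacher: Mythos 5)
Your proof is correct and follows essentially the same route as the paper: rewrite $\Pr(N > m)$ as $\Pr\paren{\sum_{i=1}^m Y_i < k}$ with $m = c \cdot k/p$, note $\mu = ck$ and $\delta = (c-1)/c$, and apply the Chernoff lower-tail bound~(\ref{eqn:chernoff-lower}), simplifying the exponent to $k(c-1)^2/(2c)$. Your added remarks on integrality of $m$ and the strict-versus-non-strict inequality are minor care points the paper glosses over, but nothing of substance differs.
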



\section{Leaf Coloring}
\label{sec:leaf-coloring}

In this section, we describe an LCL problem, $\LeafColoring$, whose randomized distance, deterministic distance, and randomized volume complexities are $O(\log n)$, but whose deterministic volume complexity is $\Omega(n)$.

Before defining $\LeafColoring$ formally, we describe a ``promise'' version of the problem that restricts the possible input graphs. Specifically, consider the promise that all input graphs $G = (V, E)$ are binary trees in which every node has either $0$ or $2$ children, and all edges are directed from parent to child. Moreover, each internal node (i.e., node with $2$ children) has a pre-specified right and left child. Each node $v \in V$ is assigned an input color $\chin(v) \in \set{R, B}$ ($R$ for red, $B$ for blue). The $\LeafColoring$ problem requires each node $v$ to output a color $\chout(v) \in \set{R, B}$ such that (1) if $v$ is a leaf, $\chout(v) = \chin(v)$, and (2) if $v$ is internal, it outputs the same color as one of its children. 

In the non-promise version of $\LeafColoring$, the input may be an arbitrary graph (with maximum degree at most $\Delta$). In order to mimic the promise problem described above, each node receives as input a ``tree labeling'' (defined below) that assigns a parent, right child, and left child to each node. Using this assignment, each node can locally check that its own and its neighbors input labelings are locally consistent with a binary tree structure as described in the promise version of the problem. We observe (Observation~\ref{obs:leaf-coloring-pforest}) that the set of locally consistent nodes and edges form a binary sub-pseudo-forest of $G$ (i.e., a subgraph in which each connected component contains at most a single cycle). The generic $\LeafColoring$ problem then requires each leaf in this pseudo-forest to output its input color, while each internal node outputs the same color as one of its children.

\begin{dfn}
  \label{dfn:tree-labeling}
  Let $G = (V, E)$ be a graph of maximum degree at most $\Delta$, and $\ports = [\Delta] \cup \set{\bot}$. A \dft{(binary) tree labeling} consists of the following for each $v \in V$:
  \begin{itemize}[noitemsep]
  \item a \dft{parent}, $\parent(v) \in \ports$,
  \item a \dft{left child}, $\lc(v) \in \ports$,
  \item a \dft{right child}, $\rc(v) \in \ports$.
  \end{itemize}
  A \dft{colored tree labeling} additionally specifies for each $v \in V$
  \begin{itemize}
  \item a \dft{color}, $\chin(v) \in \set{R, B}$.
  \end{itemize}
  We refer to $R$ as \dft{red} and $B$ as \dft{blue} in our depictions of colored tree labelings. For a fixed node $v$, we call the labeling of $v$ \dft{well-formed} if the non-$\bot$ ports $\parent(v)$, $\lc(v)$, and $\rc(v)$ are pair-wise distinct. For example, we have $\parent(v) \neq \lc(v)$, unless both are $\bot$. 
\end{dfn}

\begin{ntn}
  While the labels $\parent$, $\lc$, and $\rc$ are formally elements of $\ports$, it will be convenient to associate, for example, $\parent(v)$ with the node adjacent to $v$ via the edge whose port label is $\parent(v)$. In particular, this convention allows us to compose labels; for example, $\parent(\lc(v))$ is the parent of $v$'s left child.
\end{ntn}

\begin{rem}
  \label{rm:tree-labeling-well-formed}
  In what follows, we assume without loss of generality that in all tree labelings, the labels of all nodes are well-formed in the sense of Definition~\ref{dfn:tree-labeling}. Indeed, an arbitrary labeling $\calL$ can be transformed to a well-formed instance in the following manner: If $v$ is not well-formed, it sets $\parent(v), \lc(v), \rc(v) = \bot$; if $v$ is well-formed, but, e.g., $\parent(v)$ is not, then $v$ sets $\parent(v) = \perp$. This preprocessing can be performed using $\Delta = O(1)$ queries.
\end{rem}

\begin{dfn}
  \label{dfn:consistent}
  Let $G = (V, E)$ be a graph and $\calL$ a well-formed tree labeling of $G$. We say that a node $v \in V$ is:
  \begin{itemize}
  \item \dft{internal} if
    \begin{enumerate}
    \item $\lc(v) \neq \bot$ and $\parent(\lc(v)) = v$,
    \item $\rc(v) \neq \bot$ and $\parent(\rc(v)) = v$;
    \end{enumerate}
  \item a \dft{leaf} if
    \begin{enumerate}
    \item $\lc(v) = \rc(v) = \bot$
    \item $\parent(v)$ is internal.
    \end{enumerate}
  \end{itemize}
  A node is \dft{consistent} if it is internal or a leaf. A node that is neither internal nor a leaf is \dft{inconsistent}.
\end{dfn}

\begin{dfn}
  \label{dfn:leaf-coloring}
  The problem $\LeafColoring$ consists of the following:
  \begin{description}
  \item[Input:] a colored tree labeling $\calL$
  \item[Output:] for each $v \in V$, a color $\chout \in \set{R, B}$
  \item[Validity:] for each $v \in V$ we have
    \begin{itemize}
    \item $\chout(v) = \chin(v)$ if $v$ is a leaf or inconsistent,
    \item $\chout(v) \in \set{\chout(\lc(v)), \chout(\rc(v))}$ if $v$ is internal.
    \end{itemize}
  \end{description}
\end{dfn}

\begin{figure}[t]
  \centering
  \begin{minipage}{0.45\textwidth}
    \includegraphics[width=\textwidth]{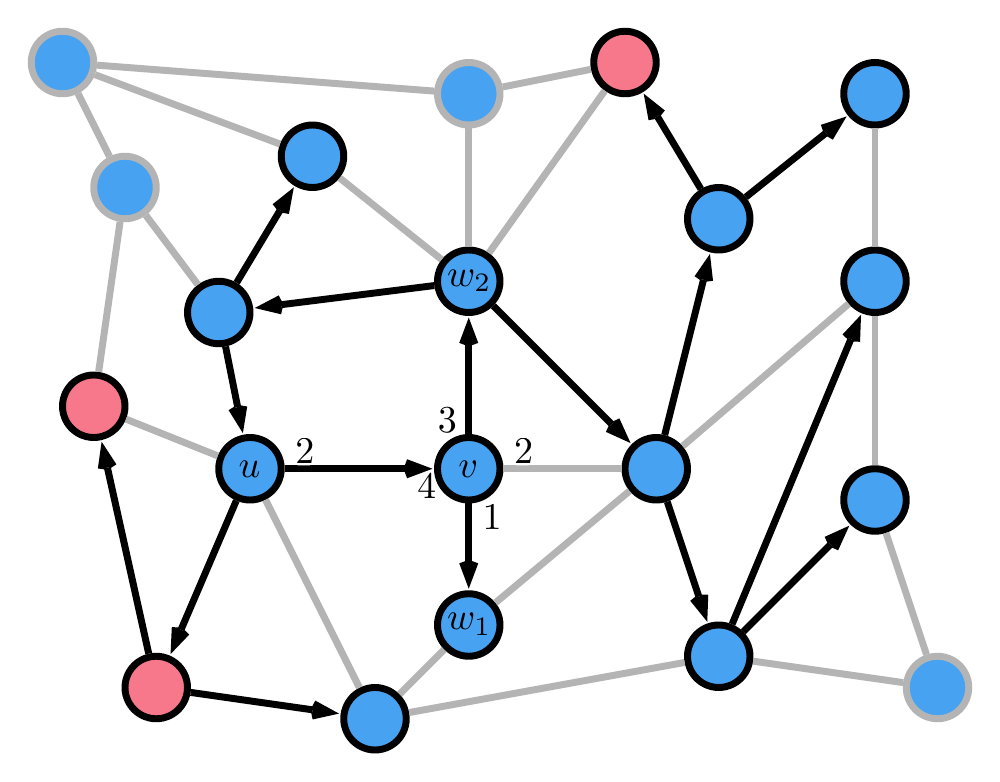}
  \end{minipage}
  \begin{minipage}{0.45\textwidth}
    \includegraphics[width=\textwidth]{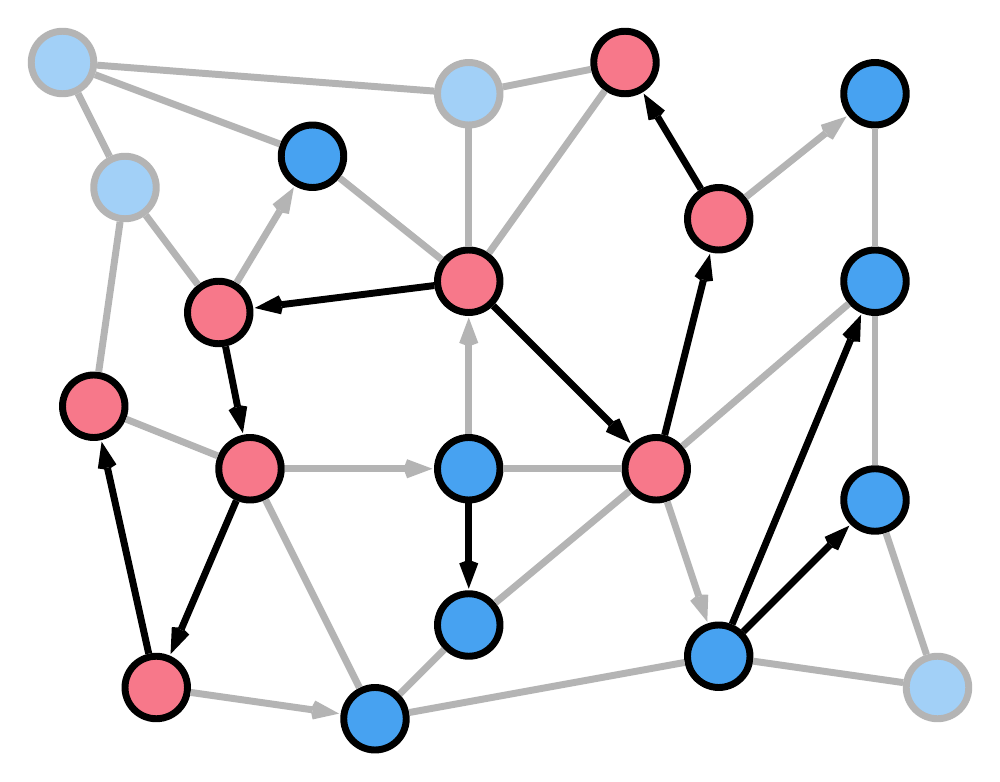}
  \end{minipage}
  \caption{An input (left) and valid output (right) for an instance of $\LeafColoring$. On the left, the consistent nodes have black borders, while the inconsistent nodes have gray borders. The subgraph consisting of black (directed) edges is the graph $G_T$ described in Observation~\ref{obs:leaf-coloring-pforest}. The node colors on the left indicate input colors, while the colors on the right are output colors. For example, the input label of $v$ is given by $\chin(v) = B$, $\parent(v) = 4$, $\rc(v) = 1$, $\lc(v) = 3$. In the output (right), black edges indicate that the parent and child in $G_T$ output the same color. Validity of the output follows because all leaves and inconsistent nodes output their input color, while all internal nodes output the color of some child in $G_T$.}
  \label{fig:leaf-coloring}
\end{figure}

\begin{lem}
  \label{lem:leaf-coloring-lcl}
  $\LeafColoring$ is an LCL.
\end{lem}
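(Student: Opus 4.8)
The plan is to exhibit an explicit checking radius $c$ and argue that validity of a global output $\calO$ is equivalent to validity on all radius-$c$ neighborhoods. The key observation is that every condition appearing in the definition of $\LeafColoring$ — whether a node is internal, a leaf, or inconsistent, and whether the output constraint holds — can be decided by inspecting only the labels within a small constant radius of the node. Concretely, I would take $c = 2$: deciding whether $v$ is internal requires knowing $\lc(v)$, $\rc(v)$, and the parent pointers of those two children, all of which live within distance $2$ of $v$; deciding whether $v$ is a leaf requires knowing that $\lc(v) = \rc(v) = \bot$ and that $\parent(v)$ is internal, and the latter in turn needs the children and grandchildren of $\parent(v)$ — still within distance $2$ of $v$ (since $\parent(v)$ is a neighbor of $v$). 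So ``$v$ is consistent/inconsistent, and of which type'' is a function of $N_v(2)$.

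Next I would note that the output constraint at $v$ — namely $\chout(v) = \chin(v)$ when $v$ is a leaf or inconsistent, and $\chout(v) \in \set{\chout(\lc(v)), \chout(\rc(v))}$ when $v$ is internal — references only $\chin(v)$, $\chout(v)$, and the output colors of the (at most two) children of $v$, all within distance $1$. Combined with the fact that the \emph{type} of $v$ is determined by $N_v(2)$, the entire validity predicate at $v$ is a function of the labeled neighborhood $N_v(2)$ together with $\calO|_{N_v(2)}$. This gives the ``only if'' direction of the LCL definition essentially for free: if the global output is valid, then restricting to $N_v(c)$ preserves validity at $v$ and at every node whose full radius-$2$ neighborhood still fits inside $N_v(c)$ — and here one should be a little careful and use a slightly larger checking radius, say $c = 3$ or $c = 4$, so that for the central node $v$ of $N_v(c)$ the predicate is genuinely evaluable; the cleanest route is to fix the LCL checking radius to be $c = 4$ and observe that for any $w$ with $\dist(v,w) \le 1$, all data needed to evaluate the predicate at $w$ lies in $N_v(4)$.

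For the ``if'' direction: suppose $\calO$ is valid on $N_v(c)$ for every $v$. Then in particular, evaluating the radius-$c$ predicate centered at each $v$ tells us that the local constraint at $v$ holds (since that constraint is computed from data inside $N_v(c)$ and the restriction $\calO|_{N_v(c)}$ agrees with $\calO$ there). Taking the conjunction over all $v$ recovers exactly the global validity condition of Definition~\ref{dfn:leaf-coloring}. One subtlety to handle explicitly is that the notion ``$(N_v(c), \calI|_{N_v(c)}, \calO|_{N_v(c)}) \in \Pi$'' must itself be interpreted consistently — i.e., when we restrict to the finite graph $N_v(c)$, nodes near the boundary may have pointers leading outside; I would adopt the convention (standard for LCLs, and consistent with Remark~\ref{rm:tree-labeling-well-formed}) that a pointer leading outside the visible neighborhood is treated as $\bot$ for the purpose of the boundary nodes, so that only the central node's predicate is actually being asserted. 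With that convention, the equivalence is immediate.

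The main obstacle is purely bookkeeping: pinning down the exact checking radius so that the type-determination of a node (which reaches to its grandchildren via the ``$\parent(v)$ is internal'' clause for leaves) is fully contained in the checked ball, and making the boundary convention precise enough that the restriction $\calO|_{N_v(c)}$ is a well-defined instance of the problem. There is no real mathematical difficulty — $\LeafColoring$ was designed to be locally checkable — so I expect the proof to be short once the radius $c = 4$ (or any sufficient constant) and the boundary convention are fixed.
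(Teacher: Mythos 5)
Your proposal is correct and follows essentially the same route as the paper, whose proof is a two-line observation that a node's type (internal, leaf, or inconsistent) is determined by a constant-radius neighborhood via Definition~\ref{dfn:tree-labeling}, so both validity conditions of Definition~\ref{dfn:leaf-coloring} are locally checkable. Your extra bookkeeping (fixing $c=4$ and the boundary convention for restricted neighborhoods) is sound but not something the paper bothers to spell out.
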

\begin{proof}
  It follows directly from Definition~\ref{dfn:tree-labeling} that a node being internal, leaf, or inconsistent is locally checkable. Thus, both validity conditions in Definition~\ref{dfn:leaf-coloring} are locally checkable as well.
\end{proof}

\begin{lthm}
  \label{thm:leaf-coloring}
  The complexity of $\LeafColoring$ is
  \begin{equation*}
    \begin{split}
      \RDIST(\LeafColoring) &= \Theta(\log n),\\
      \DDIST(\LeafColoring) &= \Theta(\log n),\\
      \RVOL(\LeafColoring) &= \Theta(\log n),\\
      \DVOL(\LeafColoring) &= \Theta(n).
    \end{split}
  \end{equation*}
\end{lthm}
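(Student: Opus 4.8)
The plan is to establish the four complexity bounds by separately proving matching upper and lower bounds, exploiting the fact that the interesting behavior is entirely in the gap between $\DVOL = \Theta(n)$ and $\RVOL = \Theta(\log n)$.

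\textbf{Distance bounds.} For $\DDIST(\LeafColoring) = O(\log n)$ and hence $\RDIST(\LeafColoring) = O(\log n)$: starting from a node $v$, first spend $O(\Delta) = O(1)$ queries to check whether $v$ is consistent. If $v$ is a leaf or inconsistent, output $\chin(v)$. If $v$ is internal, walk from $v$ downward along child pointers, always following (say) the left child, until reaching a node that is a leaf or inconsistent; this takes at most the height of the tree, which is $O(\log n)$ since the tree has $n$ nodes and every internal node has exactly two children (so a consistent tree component is a full binary tree of at most $n$ nodes, hence height $O(\log n)$). Copy that terminal node's input color back up the path. One must be careful: an internal node's left child might be consistent while leading into an inconsistent region, but since "leaf or inconsistent" is what stops the walk, and each step strictly decreases the remaining subtree size, termination within $O(\log n)$ steps is guaranteed on the part of the component reachable by downward walks. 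The validity conditions of Definition~\ref{dfn:leaf-coloring} are then met because every internal node on the path outputs the color of one of its children, and the terminal node outputs $\chin$. For the matching lower bound $\DDIST, \RDIST = \Omega(\log n)$: take a path-like instance, e.g.\ a full binary tree (or even a "caterpillar" of internal nodes each with one leaf child) of depth $\Theta(\log n)$ where an internal node's output is forced to agree, along a chain of length $\Theta(\log n)$, with a leaf whose input color is not determined within radius $o(\log n)$; a standard indistinguishability argument (two instances differing only in the far leaf's color) shows a node near the root cannot fix its output with $o(\log n)$ distance. This also gives $\RVOL = \Omega(\log n)$ and $\DVOL = \Omega(\log n)$ via Lemma~\ref{lem:dist-vs-vol}.

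\textbf{Randomized volume upper bound $\RVOL(\LeafColoring) = O(\log n)$.} The idea is that an internal node need not follow a \emph{fixed} child but may follow a \emph{random} one. From an internal node $v$, at each step flip a coin to decide whether to recurse into the left or the right child, forming a random root-to-leaf walk. Because every consistent subtree component is a full binary tree on at most $n$ nodes, a uniformly random downward walk reaches a leaf (or an inconsistent node) within $O(\log n)$ steps with high probability: in a full binary tree with $n$ nodes a random walk from a node at a given level reaches a leaf in a number of steps whose expectation and tail are controlled — one can argue that each step the walk has a constant probability of being "on a short side", or more robustly bound the number of nodes visited directly, invoking the negative-binomial tail bound (Lemma~\ref{lem:neg-binom}) if the analysis is phrased as "collect $\Theta(\log n)$ successes." The subtlety is that an adversarial tree can be very unbalanced, so a specific random child may lead deep; the fix is that the visited set still forms a single downward path, and the path length is at most the height, $O(\log n)$, \emph{always} — wait, that is not true for an unbalanced full binary tree whose height is $\Theta(n)$. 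So the randomization must genuinely help: the key claim is that with a random choice at each internal node, the expected path length (equivalently volume) is $O(\log n)$, since in a full binary tree on $n$ nodes the expected depth of the leaf reached by a uniform random downward walk is $O(\log n)$ — this follows because if one child's subtree has $a$ nodes and the other $b$ nodes with $a + b + 1 = (\text{subtree size})$, the walk picks each with probability $1/2$, and an induction on subtree size shows expected remaining steps is $O(\log(\text{subtree size}))$. Then Markov plus independent repetitions, or a direct Chernoff/negative-binomial argument over $O(\log n)$ independent "halving events," boosts this to a high-probability $O(\log n)$ bound, and on the low-probability bad event the algorithm may walk the whole component (cost $O(n)$), which is consistent with the $1 - O(1/n)$ success requirement.

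\textbf{Deterministic volume lower bound $\DVOL(\LeafColoring) = \Omega(n)$.} This is the crux. A deterministic algorithm at an internal node $v$ must, in effect, locate a leaf of $v$'s subtree, and a deterministic strategy exploring a connected region can be forced to explore nearly the whole tree before finding one. I would construct a family of adversarial full binary trees on $n$ nodes — e.g.\ a "long path with pendant structure" where almost all $n$ nodes are internal nodes lying on a single long spine, each having one child leading further along the spine and one child that is a leaf — and argue that a deterministic connected-exploration algorithm starting at the top of the spine, to be correct on all inputs in the family, must visit $\Omega(n)$ nodes. Since the algorithm is deterministic and the adversary controls port numberings and which child is the "spine" child versus the "leaf" child, the adversary can answer queries so that the algorithm keeps walking along the spine (each step revealing an internal node) without ever being told it has reached a leaf until it has seen $\Omega(n)$ nodes; formally, one maintains a set of "consistent completions" of the partially-revealed graph and shows that as long as $o(n)$ nodes have been queried, at least two completions remain that disagree on $\chout(v)$ (one where the nearest leaf in the queried-but-not-fully-explored region is red, one where it is blue), so the algorithm cannot safely halt. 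I expect this adversary/indistinguishability argument — carefully handling that queries reveal degrees and IDs, and that the algorithm may query in a connected but non-path-like pattern — to be the main obstacle; the distance and randomized-volume parts are comparatively routine.
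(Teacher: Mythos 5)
Your distance upper bound has a genuine gap. You claim that a tree on at most $n$ nodes in which every internal node has exactly two children has height $O(\log n)$, and that the walk ``always follow the left child'' therefore stops within $O(\log n)$ steps. That is false: a caterpillar (a spine of internal nodes, each with one leaf child and one spine child) has $\Theta(n)$ height, and if the spine child is always the left child your walk has distance and volume $\Theta(n)$. You in fact notice exactly this fallacy later in your $\RVOL$ discussion (``wait, that is not true for an unbalanced full binary tree''), but you never go back and repair the distance argument, and the ``each step strictly decreases the remaining subtree size'' remark only yields an $O(n)$ bound. The paper's argument is different: it first proves (Lemma~\ref{lem:near-leaf}) that every internal node has \emph{some} leaf within distance $\log n$ (if a ball of radius $r$ contains only internal nodes it has at least $2^r$ nodes), then has each internal node query its entire radius-$O(\log n)$ ball and output the color of its \emph{nearest} leaf, breaking ties by the lexicographically left-most path; the tie-breaking rule plus an induction on the distance to the nearest leaf is what guarantees that adjacent nodes output the same color, which is the validity condition.

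Your $\RVOL$ upper bound has a second gap of the same flavor: the volume analysis (halving steps, negative-binomial tail) matches the paper, but you never specify whose coins drive the walk, and you never argue that the outputs of the $n$ separately initiated executions are mutually consistent---yet validity demands that each internal node output the color of one of its children. If each initiating node flips fresh coins along its walk, $v$ and $\lc(v)$ can reach differently colored leaves and the global output is invalid, even though every individual walk is short. The paper's $\RWtoLeaf$ makes the step out of a node $w$ depend on $w$'s \emph{own} private random bit, so every walk passing through $w$ continues identically and hence all nodes on a walk inherit the color of the same leaf; this is the essential design point, not an implementation detail. Relatedly, in the non-promise problem $G_T$ is only a pseudo-forest, and a directed cycle of consistent internal nodes is possible; with per-node fixed bits the walk is deterministic and would circle forever, which is why the paper adds the ``if the walk returns to its start, take the other child'' rule---your sketch treats the instance as a tree and omits this entirely. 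By contrast, your distance lower bound (indistinguishability on a balanced tree with far leaves) and your deterministic volume lower bound (an adaptive adversary that reveals only internal nodes and, after the algorithm halts with $o(n)$ queries, appends leaves colored opposite to the root's output) are essentially the paper's Propositions~\ref{prop:leaf-coloring-rdist-lb} and~\ref{prop:leaf-coloring-dvol-lb}.
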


Before proving Theorem~\ref{thm:leaf-coloring} in detail, we provide a high level overview of the proof. The upper bounds on $\RDIST$ and $\DDIST$ following from the observation that all internal nodes are within distance $O(\log n)$ of a leaf node. Thus, in $O(\log n)$ distance, each internal node finds its \emph{nearest} leaf (breaking ties by choosing the left-most leaf at minimal distance) and outputs the color of that leaf. In Section~\ref{sec:lc-dist-ub}, we show that this deterministic process correctly solves $\LeafColoring$ with distance complexity $O(\log n)$. This upper bound is tight, as in a balanced binary tree, the root has distance $\Omega(\log n)$ from its closest leaf. Thus, in order to distinguish the cases where all leaves are red vs blue, the root must query a node at distance $\Omega(\log n)$. (Note that if all leaves have input color, say, red, then the root must output red in any legal solution.) See Section~\ref{sec:lc-lb} for details.

The idea of the $\RVOL$ upper bound is that a ``downward'' random walk in a binary tree in which every internal node has two children will reach a leaf after $O(\log n)$ steps with high probability. To simulate such a random walk, each node in our volume-efficient algorithm chooses a single child at random. An execution of the algorithm from $v$ follows the path of chosen children until a leaf is found, and $v$ outputs the color of this child. Since all nodes along this path reach the same leaf, they all output the same color. The only complication that may arise is if $v$ encounters a (necessarily unique) cycle, in which case the path of chosen children returns to $v$. In this case, $v$ follows the edge to its child not chosen in the first step, and continues until a leaf is encountered. This second path is guaranteed to be cycle-free. Details are given in Section~\ref{sec:lc-rvol-ub}.

Finally, the argument for the lower bound on $\DVOL$ is as follows. Given any deterministic algorithm $A$ purporting to solve $\LeafColoring$ in using $q \ll n$ queries, we can adaptively construct a binary tree $G$ rooted at $v$ with $n \leq 3 q$ such that the execution $A$ initiated from $v$ never queries a leaf of $G$. By giving each leaf the input color that is the opposite of $v$'s output, we conclude that some node in $G$ must output incorrectly in this instance. See Section~\ref{sec:lc-lb} for details.

\subsection{Distance Upper Bounds}
\label{sec:lc-dist-ub}

\begin{obs}
  \label{obs:leaf-coloring-pforest}
  Let $G = (V, E)$ be a graph and $\calL$ a tree labeling of $G$. Define the directed graph $G_T = (V_T, E_T)$ by
  \[
  V_T = \set{v \in V \sucht v \text{ is internal or a leaf}}
  \]
  and
  \[
  E_T = \set{(u, v) \in V_T \times V_T \sucht u \text{ is internal and } u = \parent(v)}.
  \]
  That is, $G_T$ is the subgraph of internal nodes and leaves in $G$ where we consider only edges directed from internal parents to children. Then every node in $G_T$ has out-degree $0$ or $2$, and in-degree $0$ or $1$. In particular, this implies that $G_T$ is a (directed) pseudo-forest, and each connected component of $G_T$ contains at most one (directed) cycle. Moreover, all internal nodes have two descendants in $G_T$, and $v \in V$ is a leaf in the sense of Definition~\ref{dfn:tree-labeling} if and only if $v$ is a leaf in $G_T$.
\end{obs}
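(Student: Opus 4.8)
The observation packages several structural facts about $G_T$, and the plan is to derive them all from two basic degree bounds — every node of $G_T$ has out-degree $0$ or $2$, and in-degree $0$ or $1$ — by routine unwinding of Definitions~\ref{dfn:tree-labeling} and~\ref{dfn:consistent}. Once those bounds are in hand, the pseudo-forest property, the ``two children per internal node'' clause, and the characterization of leaves of $G_T$ are all short consequences.

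\textbf{Out-degrees.} First note that the ``internal'' and ``leaf'' conditions of Definition~\ref{dfn:consistent} are mutually exclusive: being internal forces $\lc(v) \neq \bot$ while being a leaf forces $\lc(v) = \bot$. Hence a leaf $u \in V_T$ is not internal, so no edge of $E_T$ leaves it and its out-degree is $0$. For an internal $u \in V_T$, both $\lc(u), \rc(u) \neq \bot$, and well-formedness makes these two ports distinct, so they designate two distinct neighbours $w_L, w_R$ of $u$, with $\parent(w_L) = \parent(w_R) = u$ by the internal conditions. Since the parent of each of $w_L, w_R$ is the internal node $u$, Definition~\ref{dfn:consistent} places each of them in $V_T$, so $(u, w_L), (u, w_R) \in E_T$. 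It then remains to check, again from Definition~\ref{dfn:consistent}, that these are the only out-edges of $u$, which gives out-degree exactly $2$. The ``two children'' clause is then immediate, and so is the equivalence ``$v$ is a leaf in the sense of Definition~\ref{dfn:consistent} if and only if $v$ has out-degree $0$ in $G_T$'': the forward direction is the leaf case just treated, and the backward direction holds because a node of $V_T$ that is not a leaf is internal, hence has out-degree $2 \neq 0$.

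\textbf{In-degrees and the pseudo-forest property.} The label $\parent(v) \in \ports$ is a single value, so it designates at most one neighbour of $v$; hence at most one node can appear as $u$ in an edge $(u, v) \in E_T$, giving in-degree at most $1$ (and exactly $1$ precisely when $\parent(v)$ is internal). From this the pseudo-forest property follows at once: reversing every edge of $G_T$ yields a digraph in which every vertex has out-degree at most $1$, so every weakly connected component has at most one directed cycle; reversing back, each connected component of $G_T$ contains at most one cycle.

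\textbf{Main obstacle.} The proof is essentially bookkeeping, and the one step that needs genuine care is showing that an internal node $u$ has out-degree exactly $2$ rather than more — that is, verifying that the local conditions of Definition~\ref{dfn:consistent} not only force $\lc(u)$ and $\rc(u)$ into $V_T$ but also prevent any \emph{other} neighbour of $u$ from simultaneously lying in $V_T$ and naming $u$ as its parent. With that settled, every remaining clause of the observation follows directly from the two degree bounds.
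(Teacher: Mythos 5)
The paper states this as an unproven observation, so your write-up has to stand on its own, and as written it has a genuine gap at exactly the point you yourself flag as the ``main obstacle'' and then never resolve. There are two concrete problems with the out-degree step. First, your claim that Definition~\ref{dfn:consistent} automatically ``places each of $\lc(u), \rc(u)$ in $V_T$'' for an internal $u$ does not follow from the definitions: a child $w = \lc(u)$ is a leaf only if \emph{its own} ports satisfy $\lc(w) = \rc(w) = \bot$, and internal only if its own children point back to it, and nothing about $u$ being internal forces either condition. For instance, if $\lc(w) \neq \bot$ with $\parent(\lc(w)) \neq w$ and $\rc(w) = \bot$, then $w$ is inconsistent, $w \notin V_T$, and $(u, w) \notin E_T$. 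So even the ``at least two out-edges'' half of the claim --- and with it the clauses ``all internal nodes have two descendants in $G_T$'' and the backward direction of the leaf characterization, both of which you derive from out-degree exactly $2$ --- is not established by your argument. Second, the check you defer (that $\lc(u)$ and $\rc(u)$ are the \emph{only} out-neighbours) cannot simply be ``verified from Definition~\ref{dfn:consistent}'' under the literal definition of $E_T$: a neighbour $x \notin \set{\lc(u), \rc(u)}$ with $\parent(x) = u$ and $\lc(x) = \rc(x) = \bot$ satisfies both leaf conditions (its parent $u$ is internal), so $x \in V_T$ and $(u, x) \in E_T$, giving $u$ out-degree $3$. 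Making the bound work requires reading $E_T$ as consisting only of the child edges $(u, \lc(u))$ and $(u, \rc(u))$, as the prose ``edges directed from internal parents to children'' suggests, and obtaining ``$0$ or $2$'' rather than ``at most $2$'' additionally needs the children of internal nodes to be consistent --- a point that has to be argued or assumed, not waved at.

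The parts you do carry out are fine: the in-degree bound (a single $\parent(v)$ port names at most one in-neighbour) is correct, and deducing the pseudo-forest property by reversing edges to get out-degree at most one per vertex is a clean and valid way to conclude that each component has at most one directed cycle. But since the entire remainder of the observation is, in your own structure, a consequence of the two degree bounds, and the out-degree bound is exactly what is missing, the proposal is not yet a proof.
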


\begin{lem}
  \label{lem:near-leaf}
  Let $G = (V, E)$ and be a graph and $\calL$ a tree labeling of $G$. Suppose $v_0 \in V$ is an internal node. Then there exists a path $P = (v_0, v_1, \ldots, v_\ell)$ in $G_T$ with $\ell \leq \log n$ such that $v_\ell$ is a leaf and for all $i \in [\ell]$, $v_{i-1} = \parent(v_i)$.
\end{lem}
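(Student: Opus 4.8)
The plan is a ``widest-first'' descent in the directed pseudo-forest $G_T$ of Observation~\ref{obs:leaf-coloring-pforest}. For $i \ge 0$, let $A_i \subseteq V$ be the set of nodes reachable from $v_0$ by a directed path of $G_T$ of length exactly $i$, so $A_0 = \set{v_0}$. I would first establish two facts. \emph{Fact (a):} if none of $A_0, \dots, A_{i-1}$ contains a leaf, then every node in these layers is internal, and $A_i$ is exactly the set of $G_T$-children of the nodes of $A_{i-1}$; in particular the (unique) $G_T$-parent of an $A_i$-node lies in $A_{i-1}$. The substance is that every node has in-degree at most $1$ in $G_T$, so the unique $G_T$-parent of a node at distance $i$ must itself lie at distance $i-1$; hence a child of an $A_{i-1}$-node cannot fall into an earlier layer, and conversely every $A_i$-node is such a child. \emph{Fact (b):} in the same situation $|A_i| = 2^i$. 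This follows by induction on $i$: an internal node has exactly two children, which are distinct by well-formedness, and children of distinct $A_{i-1}$-nodes are distinct by the in-degree bound, so an all-internal layer doubles.

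Next I would observe that a leaf is reachable from $v_0$ at all: otherwise every $A_i$ is all-internal, so Fact~(b) gives $|A_i| = 2^i$ for all $i$, contradicting $|A_i| \le |V| = n$. Let $d$ be the least index with $A_d$ containing a leaf. By minimality, $A_0, \dots, A_{d-1}$ are all-internal, so Facts~(a) and~(b) give $|A_{d-1}| = 2^{d-1}$ and hence $|A_d| = 2^d$; since $A_d \subseteq V$, we get $2^d \le n$, i.e.\ $d \le \log n$.

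Finally I would read off the path. Pick any leaf $z \in A_d$ and walk up its $G_T$-parents: set $v_d = z$ and $v_{i-1} = \parent(v_i)$ for $i = d, d-1, \dots, 1$. Uniqueness of the parent comes from the in-degree bound, and since every $A_i$-node is a child of a (unique) $A_{i-1}$-node (Fact~(a)), we stay inside the layers, so $v_i \in A_i$ at every step and after $d$ steps land in $A_0 = \set{v_0}$, recovering the starting node. The resulting path $P = (v_0, v_1, \dots, v_d)$ satisfies $v_{i-1} = \parent(v_i)$ for $i = 1, \dots, d$, every edge $(v_{i-1}, v_i)$ lies in $E_T$ (each $v_{i-1}$ with $i - 1 < d$ sits in an all-internal layer and hence is internal), and $v_d = z$ is a leaf; taking $\ell = d \le \log n$ finishes the argument.

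The one point I expect to need care is Fact~(a)---specifically, excluding that a $G_T$-child of a layer-$(i-1)$ node reappears in an earlier layer. This is exactly where the ``each component of $G_T$ has at most one directed cycle'' structure of Observation~\ref{obs:leaf-coloring-pforest} (equivalently, $G_T$ has in-degree at most $1$) does the work: it makes the layer sizes genuinely double even when $v_0$ lies on a directed cycle, so that the counting bound $2^d \le n$ goes through. Everything else is a routine induction and a counting estimate.
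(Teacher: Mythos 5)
Your argument is correct and rests on the same counting contradiction as the paper, but it gets there by a somewhat different route. The paper neutralizes the cycle at the outset: by Observation~\ref{obs:leaf-coloring-pforest}, $v_0$ has a child $v_1$ such that the edge $(v_0,v_1)$ lies on no directed cycle, so the descendants of $v_1$ form a genuine binary tree, and the ball $B(r)$ consisting of $v_0$ and the descendants of $v_1$ to depth $r-1$ has size exactly $2^r$ whenever it contains only internal nodes. You instead work in the whole pseudo-tree and count exact-length reachability layers $A_i$ from $v_0$; the doubling $\abs{A_i} = 2\abs{A_{i-1}}$ then needs only that each internal node has two distinct $G_T$-children and that children of distinct parents are distinct (in-degree at most $1$), which is exactly what Observation~\ref{obs:leaf-coloring-pforest} supplies. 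Your explicit parent-chasing extraction of the path, which the paper leaves implicit, is also sound: the unique parent of an $A_i$-node lies in $A_{i-1}$, so after $d$ steps you land in $A_0 = \set{v_0}$, and minimality of $d$ even forces the resulting walk to be a simple path.

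One side remark of yours is off, though harmlessly so. You claim (and flag as the delicate point) that a $G_T$-child of an $A_{i-1}$-node ``cannot fall into an earlier layer,'' crediting the in-degree bound. With your definition of $A_i$ (nodes admitting a directed walk of length exactly $i$ from $v_0$) the layers are not disjoint, and the claim is false precisely when $v_0$ lies on the unique directed cycle: if the cycle has length $m$, then $v_0$ is a child of a node of $A_{m-1}$ while also belonging to $A_0$ (and to $A_m$). Fortunately your proof never uses this exclusion: a child's membership in an earlier layer does not remove it from $A_i$, so $A_i$ is still exactly the set of children of $A_{i-1}$-nodes and the disjointness-of-children count $\abs{A_i} = 2\abs{A_{i-1}}$ goes through even when $v_0$ sits on the cycle. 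Had you instead meant $A_i$ to be disjoint BFS layers at shortest distance $i$, the exact doubling would genuinely fail at the wrap-around layer; that is the complication the paper's choice of a cycle-free child $v_1$ is designed to avoid, and it is worth stating which reading you intend.
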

\begin{proof}
  Fix $v_0$ to be an internal node in $V$, and take $G_T$ as in Observation~\ref{obs:leaf-coloring-pforest}. By Observation~\ref{obs:leaf-coloring-pforest}, $v_0$ has at least one child $v_1$ such that the (directed) edge $(v_0, v_1)$ is not contained in any cycle in $G_T$. Thus, the set of descendants of $v_1$ forms a (directed) binary tree rooted at $v_1$.

  For each $r \in \N$, $r \geq 1$ define $B(r) \subseteq V_T$ to be the set of nodes containing $v_0$ and all descendants of $v_1$ (in $G_T$)  up to distance $r - 1$ (from $v_1$). Observe that if $B(r)$ contains only internal nodes, then
  \[
  \abs{B(r)} = 1 + \sum_{i = 0}^{r-1} 2^i = 2^r.
  \]
  In particular, if $r \geq \log n$, then $B(r)$ must contain a non-internal node, $w$. By Observation~\ref{obs:leaf-coloring-pforest}, $w$ is a leaf, which gives the desired result.
\end{proof}

We are now ready to prove the claims of Theorem~\ref{thm:leaf-coloring}.

\begin{prop}
  \label{prop:leaf-coloring-ddist-ub}
  Let $G = (V, E)$ be a graph on $n$ nodes, $\calL$ a colored tree labeling, and $v \in V$. Then there exists a deterministic algorithm $A$ that solves $\LeafColoring$ on $G$ with $\DIST(A, G, \calL, v) = O(\log n)$. Thus
  \[
  \DDIST(\LeafColoring), \RDIST(\LeafColoring) = O(\log n).
  \]
\end{prop}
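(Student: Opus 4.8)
The plan is to exhibit a single deterministic algorithm $A$ that, started at any node $v$, gathers a connected region of $O(\log n)$ nodes and outputs a color that is provably consistent with a valid global solution. First I would handle the two easy cases: if $v$ is inconsistent or a leaf (in the sense of Definition~\ref{dfn:consistent}), then by the validity condition $v$ must output $\chin(v)$, and $A$ can decide this after only $O(\Delta) = O(1)$ queries (it needs to inspect $v$'s claimed parent, left child, and right child to determine consistency status). So the substantive case is when $v$ is internal.

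For an internal node $v$, the algorithm follows the path guaranteed by Lemma~\ref{lem:near-leaf}: starting from $v_0 = v$, repeatedly move to a child $v_{i+1}$ of the current node $v_i$ along an edge of $G_T$ that is not on a cycle, preferring (say) the left child when both children work, until a leaf $v_\ell$ is reached; then output $\chout(v) = \chin(v_\ell)$. The cleanest deterministic realization is: from the current internal node, always descend to the left child; if at some point descending would revisit a node already on the path (the unique cycle in this component has been closed), back up to the most recent node where the right child led off the cycle and descend right from there instead. Lemma~\ref{lem:near-leaf} tells us this terminates at a leaf within $\ell \le \log n$ steps, and since each node on the path has degree $\le \Delta$, the total number of queried nodes is $O(\Delta \log n) = O(\log n)$, giving $\DIST(A, G, \calL, v) = O(\log n)$ and hence also volume $O(\log n)$ (the latter not needed here but consistent with the stated $\RVOL$ bound).

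The remaining obligation is correctness: I must check that the labeling $\calL'(v) = A(v,G,\calL)$ is globally valid per Definition~\ref{dfn:leaf-coloring}. Leaves and inconsistent nodes output $\chin$ by construction. For an internal node $v$, I claim $\chout(v) = \chout(\lc(v))$ or $\chout(v) = \chout(\rc(v))$: indeed, the deterministic descent rule is \emph{local and history-independent in a suitable sense} — the first step from $v$ goes to one of its $G_T$-children $w$, and the path $A$ traces from $v$ is exactly the single edge $(v,w)$ followed by the path $A$ would trace from $w$ (one must verify the cycle-avoidance choices made from $v$ agree with those made from $w$, which holds because the ``is this edge on a cycle'' predicate depends only on $G_T$, not on the starting node). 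Hence $A(v,\cdot) = A(w,\cdot)$, so $v$ and its child $w$ output the same color. This is the step I expect to need the most care: pinning down a descent rule whose output genuinely satisfies the "self-consistency" $A(v) \in \{A(\lc v), A(\rc v)\}$ for every internal $v$, including correctly handling the component that contains a cycle (where naive "always go left" can loop forever). Once that rule is fixed and shown to reach a leaf in $\le \log n + O(1)$ steps, the proposition follows, and the matching $\RDIST, \DDIST = O(\log n)$ bounds are immediate since a deterministic algorithm is in particular a randomized one.
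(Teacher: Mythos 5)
Your plan breaks at the distance bound, and the break is fundamental rather than cosmetic. Lemma~\ref{lem:near-leaf} only asserts that \emph{some} descending path of length at most $\log n$ from $v$ to a leaf exists; it says nothing about the path traced by a fixed local descent rule such as ``always go left (with cycle avoidance).'' Consider a caterpillar-shaped instance in which every internal node's left child is again internal and its right child is a leaf: the nearest leaf is at distance $1$, but your walk descends the left spine for $\Theta(n)$ steps, so $\DIST(A,G,\calL,v)=\Theta(n)$, not $O(\log n)$. Your parenthetical claim that the algorithm also has volume $O(\log n)$ is the clearest warning sign: a deterministic $O(\log n)$-volume algorithm would contradict Proposition~\ref{prop:leaf-coloring-dvol-lb} and the $\DVOL(\LeafColoring)=\Theta(n)$ entry of Theorem~\ref{thm:leaf-coloring}. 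The ``descend one path'' idea only yields $O(\log n)$ with \emph{randomness} (Proposition~\ref{prop:leaf-coloring-rvol-ub}), where each step halves the remaining subtree with probability at least $1/2$; no deterministic choice of child can guarantee this. The paper's deterministic algorithm is structured differently: it exploits the fact that distance complexity is cheap in breadth, queries the entire radius-$(\log n + O(1))$ ball $N_v(r)$ (exponential volume, but only $O(\log n)$ distance), and outputs the input color of the nearest descendant leaf, breaking ties toward the leftmost one; Lemma~\ref{lem:near-leaf} is used only to guarantee such a leaf exists inside the ball. Self-consistency is then an induction on the distance $d$ to that leaf: the child of $v$ on the path to $v$'s leftmost nearest leaf has the same leaf as \emph{its} leftmost nearest leaf, so parent and child output the same color.

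Your correctness step also has an unresolved problem in the cycle case, which you flagged but did not fix. The rule ``back up to the most recent node whose right child leads off the cycle'' is starting-node dependent: on a cycle where every node's left child is its cycle successor, the walk started at $v$ diverts at $v$'s cycle \emph{predecessor} $u$ and outputs the color of a leaf below $\rc(u)$, which need not lie below either child of $v$, violating the validity condition $\chout(v)\in\set{\chout(\lc(v)),\chout(\rc(v))}$. (The paper's randomized walk diverts at the starting node itself, which is why its consistency argument $\pi'_v = v \circ \pi'_w$ goes through.) A start-independent fix such as ``avoid the child edge that lies on the cycle'' would restore consistency but requires detecting cycle membership, which again cannot be done within $O(\log n)$ distance. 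So both the complexity claim and the cycle handling need to be replaced; the ball-scanning argument of the paper resolves both at once.
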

\begin{proof}
  The algorithm solving $\LeafColoring$ works as follows. In $O(1)$ rounds, $v$ determines if it is internal, a leaf, or inconsistent. If it is not internal, $v$ outputs $\chout(v) = \chin(v)$. If $v$ is internal, in $r = O(\log n)$ rounds, $v$ queries its distance $\log n$ neighborhood, $N_v(r)$, and for each $w \in N_v(r)$, $v$ determines if $w$ is internal, leaf, or inconsistent. From this information, $v$ computes
  \[
  d = \min \set{d(v, u) \sucht u \text{ is a descendant of } v \text{ and } u \text{ is a leaf}}. 
  \]
  For any leaf $w$ that is a descendant of $v$ at distance $d$, we associate the path \[(v = w_d, w_{d-1}, \ldots, w_1, w_0 = w)\] from $v$ to $w$ with the sequence $P_w \in \set{\lc, \rc}^d$ where the $i\th$ term in the sequence indicates if $w_{d - i}$ is the left or right child of $w_{d - i + 1}$. The node $v$ then takes $w_0$ to be its ``left-most'' descendant leaf at distance $d$ and outputs $\chout(v) = \chin(w_0)$. That is $w_0$ is $v$'s distance $d$ leaf descendant that minimizes the associated sequence $P_{w_0}$ with respect to the lexicographic ordering.

  Let $(v = w_d, w_{d-1}, \ldots, w_1, w_0)$ denote the path from $v$ to $w_0$ described above. We claim that for all $i = 0, 1, \ldots, d$, $\chout(w_i) = \chin(w_0)$. In particular, this implies that $\chout(v) = \chout(w_{d-1})$, hence the second validity condition in Definition~\ref{dfn:leaf-coloring} is satisfied. We prove the claim by induction on $d$.

  For base case $d = 0$, $w_0$ is a leaf hence $\chout(w_0) = \chin(w_0)$ by the algorithm description. For the inductive step, suppose the lemma holds for all internal nodes having a descendant leaf at distance less than $d$. Suppose $v$ is a node who's nearest descendant leaf is at distance $d$. Let $w_0$ be the left-most such leaf, and let $(v = w_d, w_{d-1}, \ldots, w_0)$ be the path from $v$ to $w_0$ as above. Observe that $w_{d-1}$'s nearest descendant leaf is at distance $d - 1$, and $w_0$ is also the left-most such leaf for $w_{d-1}$. Therefore, $\chout(w_{d-1}) = \chin(w_0)$, as required.
\end{proof}

\subsection{Randomized Volume Upper Bound}
\label{sec:lc-rvol-ub}

\begin{prop}
  \label{prop:leaf-coloring-rvol-ub}
  Let $G = (V, E)$ be a graph on $n$ nodes. Then there exists a randomized algorithm that solves $\LeafColoring$ on $G$ in $O(\log n)$ volume. Thus, $\RVOL(\LeafColoring) = O(\log n)$.
\end{prop}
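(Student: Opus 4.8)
The plan is to implement the "downward random walk" sketch outlined before Section~\ref{sec:lc-dist-ub}, and to bound its length with the negative binomial tail bound from Lemma~\ref{lem:neg-binom}. Concretely, every node $w$ uses its private random string to select a single child $c(w) \in \{\lc(w), \rc(w)\}$ uniformly at random (this only requires reading $O(1)$ of $w$'s random bits, so the bound-on-random-bits assumption of the model is satisfied). An execution initiated at a node $v$ first spends $O(1)$ queries to determine whether $v$ is internal, a leaf, or inconsistent; if $v$ is not internal it immediately outputs $\chout(v) = \chin(v)$, which is correct by Definition~\ref{dfn:leaf-coloring}. If $v$ is internal, the algorithm follows the chain $v, c(v), c(c(v)), \dots$ in $G_T$, querying one new node per step, until it either reaches a leaf — whose input color it outputs — or it returns to $v$ (i.e.\ detects that it has entered the unique cycle of $v$'s component). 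In the latter case it restarts from $v$'s \emph{other} child, i.e.\ the child not taken on the first step, and follows the chain of chosen children from there; by Observation~\ref{obs:leaf-coloring-pforest} the subtree hanging off that other child is a genuine directed binary tree, so this second walk is cycle-free and must terminate at a leaf.

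The correctness argument has two parts. First, \textbf{consistency}: I claim all internal nodes visited along the (final, cycle-free) walk reach the same leaf and hence output the same color, so the internal-node validity condition $\chout(v) \in \{\chout(\lc(v)), \chout(\rc(v))\}$ holds everywhere. The point is that $c(\cdot)$ is a fixed function of the random strings, shared by all executions: if an execution from $v$ goes $v \to c(v) \to \cdots \to \text{leaf }w$, then the execution from $c(v)$ follows the suffix of exactly this chain and also reaches $w$; and since the first step of $v$'s walk is precisely the edge $(v, c(v))$ of $G_T$, we get $\chout(v) = \chout(c(v))$ with $c(v) \in \{\lc(v), \rc(v)\}$. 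The only subtlety is the cycle case: if $v$ lies on the cycle, the first walk may loop back to $v$; but then $v$ takes its other child $v'$, and $v'$'s own execution — whose first step is whatever $c(v')$ is — must be handled so that it too eventually agrees with $v$. I would argue this by noting that a node on the cycle has exactly one child off the cycle (the cycle uses one of its two out-edges), the subtree below that off-cycle child is a finite binary tree, and every execution that ever steps into that subtree stays in it and reaches the same leaf; a careful case analysis on whether each node's $c(\cdot)$ points along the cycle or into the subtree closes the argument. This bookkeeping around the cycle is the main obstacle: one must make sure the tie-breaking rule ("take the other child upon detecting the cycle") is defined consistently enough that neighbouring internal nodes never disagree.

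For the \textbf{volume bound}, consider the cycle-free walk in a directed binary tree (either $v$'s component is a tree, or we are in the subtree below $v$'s off-cycle child). At each internal node the walk picks a uniformly random child; since the tree has at most $n$ nodes, any root-to-leaf path has length at most $\log n$, so the walk is confined to a tree of depth $\le \log n$. I would model the walk as follows: view one "level descent" as a trial, and observe that the walk reaches a leaf once it has made enough descents; more directly, at \emph{every} internal node the walk has probability at least $1/2$ of stepping to a child that is "closer to the nearest leaf in that subtree" (since at least one of the two children's subtrees has strictly smaller depth, or one child is itself a leaf). Thus after $m$ steps the number of "good" descents stochastically dominates a sum of $m$ independent Bernoulli$(1/2)$ variables, and once we accumulate $\log n$ good descents we have hit a leaf. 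Setting $N \sim \calN(\log n, 1/2)$ and applying Lemma~\ref{lem:neg-binom} with, say, $c = 4$, gives $\Pr(N > 8 \log n) \le e^{-\log n \cdot 9/8} = n^{-\Omega(1)}$. (If necessary I would take a larger constant $c$ to push the failure probability below $O(1/n)$, or union-bound appropriately; a single walk already suffices since the model asks for correctness with probability $1 - O(1/n)$.) Since the algorithm queries $O(1)$ nodes for the initial classification, at most one backtrack to $v$'s other child, and then $O(\log n)$ nodes along each of at most two walks, the total volume is $O(\log n)$ with probability $1 - O(1/n)$, and can be capped at $O(\log n)$ deterministically by halting and outputting arbitrarily on the low-probability event — which only affects the $O(1/n)$ error budget. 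Hence $\RVOL(\LeafColoring) = O(\log n)$.
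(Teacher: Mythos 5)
Your algorithm and the consistency argument are essentially the paper's: private coins fix a child $c(w)$ at every node, the walk follows chosen children, and upon returning to the start it switches to the other (cycle-free) child; the key suffix observation ($\pi_v$ after its first step coincides with the walk of $v$'s first successor, so $\chout(v)$ equals the output of one of its children) is exactly how the paper closes the cycle case, so that part is fine even if you leave some bookkeeping implicit. The problem is the volume bound. Your claim that ``since the tree has at most $n$ nodes, any root-to-leaf path has length at most $\log n$'' is false: a tree in which every internal node has $0$ or $2$ children can have depth $\Theta(n)$ (e.g.\ a caterpillar whose spine has one leaf child per node), and indeed if all root-to-leaf paths had length $O(\log n)$ then a \emph{deterministic} walk down arbitrary children would solve the problem with $O(\log n)$ volume, contradicting Proposition~\ref{prop:leaf-coloring-dvol-lb}. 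Only the distance to the \emph{nearest} leaf is at most $\log n$ (Lemma~\ref{lem:near-leaf}).

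The deeper gap is that your progress measure does not accumulate. Defining a step as ``good'' when it moves to the child closer to its nearest leaf, it is not true that $\log n$ good steps force termination: a bad step can move the walk into the sibling subtree, whose nearest leaf may be far deeper, so the distance-to-nearest-leaf can be pumped back up (by as much as $\log$ of the current subtree size) and the earlier good steps are wasted. One can build inputs in which some realizable walks make $\omega(\log n)$ good steps before reaching a leaf, so the stochastic-domination/negative-binomial step, which needs ``at most $\log n$ good steps on \emph{every} path,'' does not go through as stated. The fix is to choose a potential that is monotone along the walk: the paper uses $n_w$, the number of nodes reachable from the current node $w$ in $G_T$, calls an edge good if the subtree size at least halves, notes that sizes never increase so any path contains at most $\log n$ good edges, and that each step is good with probability at least $1/2$; then Lemma~\ref{lem:neg-binom} applies exactly as you intended. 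You also need (as the paper does) a separate short argument that a walk starting on a long cycle leaves it within $O(\log n)$ steps with high probability, and a union bound over all $n$ starting nodes (a single-walk failure probability of $O(1/n)$ is not enough, since every node must produce a correct output), after which truncating at $O(\log n)$ queries gives the stated worst-case volume.
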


To prove Proposition~\ref{prop:leaf-coloring-rvol-ub} consider the algorithm, $\RWtoLeaf(v, \bot)$ (Algorithm~\ref{alg:rw-to-leaf}). If $v$ is a leaf or inconsistent, it outputs $\chin(v)$. Otherwise, if $v$ is internal, $\RWtoLeaf$ performs a (directed) random walk towards $v$'s descendants in $G_T$. When the random walk is currently at a node $w$, $w$'s private randomness is used to determine the next step of the random walk. This ensures that all walks visiting $w$ choose the same next step of the walk, hence all such walks will reach the same leaf.

The only complication arises if $G_T$ contains cycles (in which case, each connected component of $G_T$ contains at most one cycle by Observation~\ref{obs:leaf-coloring-pforest}). In this case, the random walk may return to the initial node $v_0$. If the walk returns to $v_0$, the algorithm steps towards the previously unexplored child of $v_0$. Since $G_T$ contains at most one directed cycle, the branch below $v_0$'s second child is cycle-free, thus guaranteeing that the walk eventually reaches a leaf.

\begin{rem}
  To simplify the presentation, we give an algorithm where the runtime (number of queries) is random, and may be linear in $n$. We will show that the runtime is $O(\log n)$ with high probability. In order to get a worst-case runtime of $O(\log n)$, an execution can be truncated after $O(\log n)$ steps---as $n$ is known to each node---with the node producing arbitrary output. 
\end{rem}

\begin{algorithm}
  \caption{$\RWtoLeaf(v, v_0)$. Random walk step from $v$ towards leaves. $v_0$ is the starting node of the walk, and $v$ is the current location of the walk.}
  \label{alg:rw-to-leaf}
  \begin{algorithmic}[1]
    \IF{$v$ is a leaf or inconsistent}
    \RETURN $\chin(v)$ \label{ln:rw-output}
    \ENDIF
    \IF{$v$ revisits $v_0$}\label{ln:rw-revisit}
    \STATE $b \leftarrow 1 - r_v(0)$ 
    \ELSE
    \STATE $b \leftarrow r_v(0)$\label{ln:select-child}
    \ENDIF
    \IF{b = 0}
    \RETURN $\RWtoLeaf(\lc(v), v_0)$\label{ln:lc-step}
    \ELSE
    \RETURN $\RWtoLeaf(\rc(v), v_0)$\label{ln:rc-step}
    \ENDIF
  \end{algorithmic}
\end{algorithm}

\begin{proof}[Proof of Proposition~\ref{prop:leaf-coloring-rvol-ub}]
  Consider the algorithm where each node $v \in V$ outputs $\chout(v) \leftarrow \RWtoLeaf(v, v)$. If $v$ is a leaf or inconsistent, then Line~\ref{ln:rw-output} ensures that the first validity condition of Definition~\ref{dfn:leaf-coloring} is satisfied.

  Now consider the case where $v$ is internal. Let $\pi_v = (v = v_0, v_1, v_2, \ldots)$ denote the sequence of nodes visited by the random walk in the invocation of $\RWtoLeaf(v, v)$. Thus $\pi_v$ is a directed path in $G_T$. Suppose there exist indices $i < j$ with $v_i = v_j$, so that $\pi_v$ contains a cycle. By Observation~\ref{obs:leaf-coloring-pforest}, $G_T$ contains at most one cycle $C$, and all non-cycle edges are directed away from $C$ in $G_T$. Therefore, it must be the case that $i = 0$, so that the condition of Line~\ref{ln:rw-revisit} was satisfied when $\RWtoLeaf(v_j, v_0)$ was called. Thus, $v_{j+1} \neq v_1$, and $v_{j+1}$ is not contained in any cycle in $G_T$. Accordingly, define
  \[
  \pi'_v = 
  \begin{cases}
    \pi_v &\text{if } \pi \text{ is cycle-free}\\
    (v_0, v_{j+1}, v_{j+2}, \ldots) &\text{otherwise}
  \end{cases}
  \]
  Since $v_{j+1}$ is not contained in any cycle, $\pi'_v$ is a finite sequence, and by the description of $\RWtoLeaf$, $\pi'_v$ terminates at a leaf $v_\ell$. Let $w = \pi'_v(1)$ be the second node in the path $\pi'_v$. Then a straightforward induction argument (on $\ell$) shows that $\pi'_v = v \circ \pi'_w$. In particular, $\pi'_v$ and $\pi'_w$ terminate at the same leaf $v_\ell$ so that $\chout(v) = \chout(w) = \chin(v_\ell)$. Thus the second validity condition in Definition~\ref{dfn:leaf-coloring} is satisfied, as desired.

  It remains to bound the number of queries made by an invocation of $\RWtoLeaf$. Since checking if a node is internal, a leaf, or inconsistent can be done with $O(1)$ queries, each recursive call to $\RWtoLeaf$ can be performed with $O(1)$ queries. Thus, the total number of queries used by $\RWtoLeaf(v, v_0)$ is $O(\abs{\pi_v})$, where $\abs{\pi_v}$ denotes the length of $\pi_v$. We will show that with high probability for all $v \in V$, $\abs{\pi_v} = O(\log n)$, whence the desired result follows.

  First consider the case where an internal node $v$ is not contained in any cycle. Let $n_v$ denote the number of nodes reachable from $v$ in $G_T$. Since $v$ is not contained in any cycle, $n_v = 1 + n_u + n_w$ where $u$ and $w$ are $v$'s (distinct) children. Therefore, $n_u$ or $n_w$ is at most $n_v / 2$. For any edge $(w, w')$ in the path $\pi_v'$, we call the edge \emph{good} if $n_{w'} \leq n_{w} / 2$. Since $n_v \leq n$, there cannot be more than $\log n$ good edges in $\pi_v'$. Moreover, by the selection in Line~\ref{ln:select-child}, each edge in $\pi_v'$ is good independently with probability at least $1/2$.

  \begin{description}
  \item[Claim.] $\Pr(\abs{\pi_v'} \geq 16 \log n) \leq 1/n^3$
  \item[Proof of Claim.] For $i = 1, 2, \ldots$, let $Z_i$ be the indicator random variable for the event that the $i\th$ step of the random walk crosses a good edge. The $Z_i$ are independent, and we have $\Pr(Z_i = 1) \geq 1/2$ for all $i$. In fact, $\Pr(Z_i = 1) \in \set{1/2, 1}$. We define a coupled sequence of random variables $Y_i$ as follows. For $i \leq \abs{\pi_v'}$ we set
    \[
    Y_i =
    \begin{cases}
      Z_i &\text{if } \Pr(Z_i = 1) = 1/2\\
      \text{independent Bernoulli r.v. with } $p = 1/2$ &\text{if } \Pr(Z_i = 1) = 1.
    \end{cases}
    \]
    For $i > \abs{\pi_v'}$, $Y_i$ is an independent Bernoulli random variable with $p = 1/2$. Thus, the sequence $Y_1, Y_2, \ldots$ is a sequence of independent Bernoulli random variables with $p = 1/2$ (coupled to the sequence $Z_1, Z_2, \ldots$).

    Since every good edge $(v, w)$ satisfies $n_w \leq \frac 1 2 n_v$, $\pi_v'$ can contain at most $\log n$ good edges. Therefore, we have
    \[
    \sum_{i = 1}^{\abs{\pi_v'}} Y_i \leq \sum_{i = 1}^{\abs{\pi_v'}} Z_i \leq \log n,
    \]
    where the first equality holds because $Y_i \leq Z_i$ (by construction) for $i \leq \abs{\pi_v'}$. Now define the random variable $N$ by
    \[
    N = \inf\set{m \sucht \sum_{i = 1}^m Y_i \geq \log n}.
    \]
    Thus $N \sim \calN(\log n, 1/2)$ has a negative binomial distribution. Thus, by Lemma~\ref{lem:neg-binom}, $\Pr(N > 16 \log n) \leq e^{- 7^2 (\log n) / 2 \cdot 8} < n^{-3}$. The claim follows by observing that $\abs{\pi_v'} > 16 \log n$ implies that $N > 16 \log n$.
  \end{description}

  Applying a simple union bound, the claim shows that all $v$ not contained in some cycle in $G_T$ will output after $O(\log n)$ queries with high probability. If $v$ is contained in a cycle $C$, we consider two cases separately. If $\abs{C} \leq 16 \log n$, then the random walk will leave the cycle $C$ after at most $16 \log n$ steps (if it returns to the initial node). On the other hand, if $\abs{C} > 16 \log n$, essentially the same argument as given in the proof of the claim shows that the random walk started at $v$ will leave the cycle after at most $16 \log n$ steps with probability at least $1 - 1 / n^3$. Combining these observations with the conclusion of the claim, we obtain that for all $v \in V$
  \[
  \Pr(\abs{\pi_v'} \geq 32 \log n) \leq 2 / n^3.
  \]
  Taking the union bound over all $v$, we find that all nodes output after $O(\log n)$ queries with probability at least $1 - O(1/ n^2)$, which gives the desired result.
\end{proof}

\subsection{Lower Bounds}
\label{sec:lc-lb}

\begin{prop}
  \label{prop:leaf-coloring-rdist-lb}
  There exists a graph $G$ on $n$ nodes and a probability distribution on colored tree labelings of $G$ such that for any (randomized) algorithm $A$ whose distance complexity is less than $\log n - 1$, the probability that $A$ solves $\LeafColoring$ is at most $1/2$.
\end{prop}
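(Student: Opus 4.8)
The plan is to exhibit a single graph together with a two-point distribution on its colored tree labelings on which any algorithm that never queries beyond distance $\log n - 2$ must fail with probability $\tfrac12$, by an indistinguishability (Yao-style) argument. Concretely, take $G$ to be the complete binary tree of depth $d$, so that $n = 2^{d+1} - 1$ and $d = \Theta(\log n)$; choose $d$ so that $d - 1 > $ the claimed distance bound, i.e. the adversary's algorithm has distance complexity at most $d - 2$. Fix the tree labeling (parents, left/right children, identifiers, port ordering) once and for all, so the only randomness is in the input colors. Define the distribution by flipping a single fair coin: with probability $\tfrac12$ every node gets input color $R$, and with probability $\tfrac12$ every node gets input color $B$. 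Note that in the all-$R$ instance the only valid output at the root is $\chout(\text{root}) = R$ (every node, being internal except the leaves, must copy a child, and inductively the leaves force $R$ everywhere), and symmetrically the all-$B$ instance forces $\chout(\text{root}) = B$.

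The key step is the observation that the root $v$ of this balanced tree has all of its leaves at distance exactly $d$, so an algorithm initiated at $v$ with distance complexity at most $d - 2$ only ever queries internal nodes. Since every internal node's input label is identical in the two instances (same tree labeling; and input color $R$ vs.\ $B$ is the only difference, but internal nodes are never queried for their color in a way that distinguishes — wait, internal nodes \emph{do} carry input colors that differ). I need to be slightly more careful: I should instead make the two instances differ \emph{only} at the leaves. So refine the construction: in one instance every \emph{leaf} is colored $R$ and every internal node gets some fixed color (say $R$); in the other, every \emph{leaf} is colored $B$ while internal nodes keep the \emph{same} fixed color $R$. Then the all-internal-nodes portion of the labeling is literally identical across the two instances, and since an algorithm at the root with distance $< d$ never reaches a leaf, its entire view (responses to all queries, including the deterministic-given port structure) is identical in the two instances. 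Hence for any randomized algorithm $A$ (fixing its private random string $r_v$ at the root, and noting queried nodes have no private randomness that differs since their inputs agree), the output distribution $A(v, G, \calL_R) $ equals $A(v, G, \calL_B)$. In the $R$-instance the root must output $R$ to be correct; in the $B$-instance it must output $B$; no fixed output distribution can match both with probability exceeding $\tfrac12$ simultaneously, so averaging over the coin flip, $A$ errs with probability at least $\tfrac12$. (One must also confirm the leaf-colored tree labeling with a fixed internal color is a legal instance: all internal nodes are internal and all leaves are leaves in the sense of Definition~\ref{dfn:consistent}, so validity just requires leaves to output their input color and internal nodes to copy a child — achievable only with the color forced by the leaves.)

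The main obstacle — really the only subtlety — is ensuring the indistinguishability is airtight with respect to \emph{randomized} algorithms and the private-randomness model: I must argue that conditioned on the root's own random string, the view is a deterministic function of the two inputs, and that this function agrees on the two instances because (i) the tree labeling including identifiers and ports is fixed and common, (ii) every node within distance $d - 2$ of $v$ is internal and carries the identical input label in both instances, and (iii) no node outside $N_v(d-2)$ is ever queried. Then the error probability bound follows by taking expectation over $r_v$ and over the coin. I would also remark that the same argument shows $\DDIST, \RDIST = \Omega(\log n)$, matching the upper bound from Proposition~\ref{prop:leaf-coloring-ddist-ub}; the quantitative ``$\log n - 1$'' in the statement just reflects choosing $d$ with $2^{d+1} - 1 \le n$ and bounding the root-to-leaf distance by $d \ge \log((n+1)/2) > \log n - 1$, so an algorithm with distance $< \log n - 1 \le d$ indeed cannot reach a leaf.
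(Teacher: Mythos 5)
Your proposal is correct and follows essentially the same route as the paper: the same complete binary tree with a fixed tree labeling, the same two-point distribution that colors all leaves uniformly $R$ or $B$ (with internal colors fixed), the observation that the unique valid solution forces the root to output the leaf color, and the fact that a root execution of distance $< \log n - 1$ never reaches a leaf. The only cosmetic difference is that you argue indistinguishability for randomized algorithms directly by conditioning on the random strings, whereas the paper handles the deterministic case and then invokes Yao's minimax principle—both are fine.
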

\begin{proof}
  Let $G$ be a complete (rooted) binary tree of depth $k$, so that $n = 2^{k+1} - 1$. Consider the port ordering where the parent of each (non-root) node has port $1$, and the children of each (non-leaf) node have ports $2$ and $3$. Suppose the node identities are $1$ through $n$, where the root has ID $1$, its left and right children are $2$ and $3$, and so on. Finally, fix $\calL$ to be the tree labeling where the root has $\lc(v) = 1, \rc(v) = 2$, all (non-root) nodes have $P(v) = 1$, and all internal, non-root nodes have $\lc(v) = 2$, $\rc(v) = 3$. That is, $\calL$ is the tree labeling consistent with the tree structure of $G$. Finally, consider the distribution $D$ over input colorings where all internal nodes have $\chin(v) = R$, while all leaves have the same color $\chi_0$ chosen to be $R$ or $B$ each with probability $1/2$.

  Since every leaf $v$ in $G$ has $\chin(v) = \chi_0$, the first validity condition of Definition~\ref{dfn:leaf-coloring} stipulates that $\chout(v) = \chi_0$. A simple induction argument on the height of a node (i.e., the distance from the node to a leaf) shows that the unique solution to $\LeafColoring$ is for all $v \in V$ to output $\chout(v) = \chi_0$. 

  Suppose $A$ is any deterministic algorithm whose distance complexity is at most $k - 1$. Then an execution of $A$ initiated at the root $r$ of $G$ will not query any leaf. Therefore, $\Pr_D(\chout(r) = \chi_0) = 1/2$. By the conclusion of the preceding paragraph, the probability that $A$ solves $\LeafColoring$ is therefore at most $1/2$. By Yao's minimax principle, the no randomized algorithm with distance complexity at most $k - 1$ solves $\LeafColoring$ with probability better than $1/2$ as well.
\end{proof}

\begin{prop}
  \label{prop:leaf-coloring-dvol-lb}
  For any deterministic algorithm $A$, there exists a graph $G = (V, E)$ on $n$ nodes, a colored tree labeling $\calL$ on $G$ such that if $A$ uses fewer than $n / 3$ queries, then $A$ fails to solve $\LeafColoring$. Thus, $\DVOL(\LeafColoring) = \Omega(n)$.
\end{prop}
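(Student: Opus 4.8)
The plan is to use an adversary argument that builds the hard instance adaptively against a given deterministic algorithm $A$. Fix $A$ and consider its execution initiated from a designated root node $r$. I will simulate $A$'s queries while simultaneously constructing a rooted binary tree $G$ (with the associated consistent tree labeling $\calL$) in such a way that $A$ never learns the color of any leaf. The key point is that $A$ is deterministic, so once the responses to its queries are fixed, its behavior is fixed; we get to choose those responses, subject to the constraint that they must be extendable to a genuine instance of $\LeafColoring$ on at most $n$ nodes.

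First I would set up the adversary bookkeeping. At each step $A$ issues a query $\query(w, j)$ for some already-visited $w$ and port $j$. If $w$ has already been assigned a neighbor on port $j$, answer consistently. Otherwise, we must reveal a ``new'' neighbor: the adversary creates a fresh node $u$, connects it to $w$ via the appropriate ports, and assigns it a tree-labeling role. The invariant to maintain is that every node revealed so far is internal in the tree labeling $\calL$ and that the revealed subgraph is an initial portion of a binary tree rooted at $r$, with $r$'s ancestor port pointing to $\bot$. Concretely: whenever $A$ queries the parent port of a node $w$, reveal a new internal parent (unless $w = r$, in which case $\parent(r) = \bot$ was fixed from the start, so $A$ just sees that); whenever $A$ queries a child port, reveal a new internal child. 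Since each query reveals at most one new node, after $q < n/3$ queries we have revealed at most $q$ new nodes, so the revealed region $V_r$ has $\abs{V_r} \le q + 1$ nodes, all of them internal in the labeling constructed so far. In particular $A$ has never queried a leaf.

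Next I would close off the construction into a legitimate finite instance. After $A$ halts with output $\chout(r) \in \set{R, B}$, take the partially-built binary tree and extend every ``dangling'' child port that was never revealed, as well as complete any internal node that is missing a child, by attaching small complete binary subtrees so that the whole thing becomes a finite binary tree $T$ in which every node is internal or a leaf. We need to ensure the total size is at most $n$ and that none of the newly-added nodes were queried by $A$ --- the latter is automatic since $A$ has already halted. To control the size: each of the $\le q+1$ revealed internal nodes needs at most two child-subtrees attached, and we can make each attached subtree a single edge to a leaf, so $n \le 3(q+1) \le 3 \cdot (n/3) = n$ works out with the bound $q < n/3$ (I would pick the exact constants carefully so that the arithmetic $n \le 3q$ or so closes; this is the routine part). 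Finally, color all leaves of $T$ with the color opposite to $\chout(r)$, and all internal nodes arbitrarily (say red). By the induction argument already used in the proof of Proposition~\ref{prop:leaf-coloring-rdist-lb}, the unique valid solution forces $\chout(r)$ to equal the common leaf color; but $A$ outputs the opposite color on $r$, so $A$ fails. Since $A$'s execution on this final instance queries exactly the set $V_r$ we tracked (responses are all consistent with $T$), $A$ used fewer than $n/3$ queries yet failed, giving $\DVOL(\LeafColoring) = \Omega(n)$.

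The main obstacle I anticipate is maintaining consistency of the adaptively-revealed port structure together with the tree-labeling ``internal'' predicate: when $A$ probes various ports of the same node in different orders, we must make sure the eventual assignment of $\parent$, $\lc$, $\rc$ labels (and of actual graph edges to ports) is well-formed (Definition~\ref{dfn:tree-labeling}) and that the node genuinely qualifies as internal in the sense of Definition~\ref{dfn:consistent} --- this requires that the revealed parent-child relationships are mutually consistent ($\parent(\lc(w)) = w$, etc.). The clean way to handle this is to decide in advance, for each node we ever create, a fixed ``slot'' in an (a priori infinite but only finitely explored) complete binary tree, with a fixed port convention (parent on port $1$, children on ports $2,3$), so that all answers are forced and consistency is never in question; the adversary's only real freedom is the leaf colors, which are revealed last. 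Getting the counting constant exactly right ($n \le 3q$ vs.\ $n < 3q$ vs.\ extra additive slack for the root and for degree-$\Delta$ padding) is fiddly but not deep.
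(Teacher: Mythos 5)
Your proposal is correct and follows essentially the same route as the paper's proof: an adaptive adversary that answers each query by revealing a fresh internal node (so $A$ never sees a leaf), completes the revealed region to a binary tree of size roughly $3q$ by hanging leaves on unassigned child ports, and colors all leaves with the color opposite to the root's output, forcing a violation. The only point to state carefully is that the input colors of internal nodes must be fixed at reveal time (e.g.\ always red, as in the paper), not after $A$ halts---which is exactly what your closing remark that the leaf colors are the adversary's only deferred freedom amounts to.
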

\begin{proof}
  Suppose $A$ uses fewer than $n / 3$ queries on all graphs $G$ on $n$ nodes. We define a process $\calP$ that interacts with $A$ and constructs a graph $G_A$ and a labeling $\calL$ such that $A$ does not solve $\LeafColoring$ on $G_A$. The basic idea is that $\calP$ constructs a binary tree $G_A$ such that $A$ never queries a leaf of the tree. The leaves of $G_A$ are then given input colors that disagree with $A$'s output.

  $\calP$ constructs a sequence of labeled binary trees $G_0, G_1, G_2, \ldots$ where $G_t$ is the tree constructed after $A$'s $t\th$ query. Initially $G_0$ is the graph consisting of a single node $v_0$ with ID $0$ and two ports, $1$ and $2$. The label of $v_0$ is $P(v_0) = \bot$, $\lc(v_0) = 1$, $\rc(v_0) = 2$. Suppose $\calP$ has constructed $G_{t-1}$, and $A$'s $t\th$ query asks for the neighbor of $v$ from port $i \in [3]$. If $i = 1$, $\calP$ returns $v$'s parent, and we set $G_t = G_{t-1}$. If $i = 2$ or $3$, $\calP$ forms $G_t$ by adding a node $w$ to $G_{t-1}$ together with an edge $\set{v, w}$. The ordered edge $(w, v)$ gets assigned port $1$, while $w$ has two ``unassigned'' ports $2$ and $3$. The label of $w$ is $P(w) = 1, \lc(w) = 2, \rc(w) = 3$, and $w$'s input color is $\chin(w) = R$.

  It is straightforward to verify (by induction on $t$) that at each step, $G_t$ is a subgraph of a binary tree $G_t'$ on at most $3 t$ nodes: take $G_t'$ to be the tree formed by appending a leaf to each unassigned port in $G_t$. If $A$ halts after $T$ queries and outputs $\chout(v_0) = C$, define $G_A = G_T'$, and complete the tree labeling of $G_A$ by assigning $P(w) = 1, \lc(w) = \bot, \rc(w) = \bot$ for all new leaves. Finally, for each leaf $w$ in $G_A$, set $\chin(w) = \chi_1 \neq \chi_0$ (the color not output by $v_0$). Since all leaves have $\chin(w) = \chi_1$, validity of $\LeafColoring$ requires that all nodes in $G_A$ output $\chi_1$. However, $\chout(v_0) = \chi_0 \neq \chi_1$, so that $A$ does not solve $\LeafColoring$ on $G_A$.
\end{proof}


\section{Balanced Tree Labeling}
\label{sec:balanced-tree}

Here, we introduce an LCL called $\BTL$. The input labeling, which we call a ``balanced tree labeling'' extends a tree labeling (Definition~\ref{dfn:tree-labeling}) by additionally specifying ``lateral edges'' between nodes. We define a locally checkable notion of ``compatibility'' (formalized in Definition~\ref{dfn:compatible}) such that the subgraph $G_T$ of consistent nodes (in the underlying tree labeling) admits a balanced tree labeling in which all nodes are compatible if and only if $G_T$ is a balanced (complete) binary tree and $G$ contains certain additional edges between nodes at each fixed depth in $G_T$.

To solve $\BTL$, each node $v$ outputs a pair $(\beta(v), p(v))$, where $\beta(v)$ is a label in the set $\set{B, U}$ (for $B$alanced, $U$n-balanced) and $p(v) \in \calP$ is a port number. The interpretation is that if every vertex $w$ in the sub-tree of $G_T$ rooted at $v$ is compatible, then $v$ should output $(B, \parent(v))$. If $v$ is incompatible, it outputs $(U, \bot)$. Finally, if $v$ is compatible, but some descendant of $v$ is incompatible, then $v$ outputs $(U, p)$, where $p \in \set{\rc(v), \lc(v)}$ is a port number corresponding to the first hop on a path to an incompatible node below $v$. Thus, a valid output has the following global interpretation: Starting from any vertex, following the path of port numbers (edges) output each subsequent node terminates either at the root of a balanced binary tree, or at an incompatible node.

\begin{dfn}
  \label{dfn:btl}
  Let $G = (V, E)$ be a graph of maximum degree at most $\Delta$. A \dft{balanced tree labeling} consists of a tree labeling (Definition~\ref{dfn:tree-labeling}) together with the following labels for each node $v \in V$:
  \begin{itemize}[noitemsep]
  \item a \dft{left neighbor} $\ln(v) \in \calP$,
  \item a \dft{right neighbor} $\rn(v) \in \calP$.
  \end{itemize}
\end{dfn}

\begin{dfn}
  \label{dfn:compatible}
  Let $G = (V, E)$ be a graph and $\calL$ a balanced tree labeling on $G$. Suppose $v$ is consistent in the sense of Definition~\ref{dfn:consistent}. We say that $\calL$ is \dft{compatible} at a node $v$ if the following conditions hold:
  \begin{itemize}
  \item \dft{type-preserving}: If $v$ is internal (respectively a leaf), then $\rn(v)$ and $\ln(v)$ are internal (respectively leaves) or $\bot$.
  \item \dft{agreement}: If $\ln(v) \neq \bot$ then $\rn(\ln(v)) = v$; if $\rn(v) \neq \bot$ then $\ln(\rn(v)) = v$. 
  \item \dft{siblings}: If $\lc(v), \rc(v) \neq \bot$ (i.e., $v$ is internal) then $\rn(\lc(v)) = \rc(v)$ and $\ln(\rc(v)) = \lc(v)$.
  \item \dft{persistence}: If $v$ is internal and $w = \rn(v) \neq \bot$, then $w$ is internal and $\rn(\rc(v)) = \ln(\lc(w))$. Symmetrically, if $u = \ln(v) \neq \bot$ then $u$ is internal and $\ln(\lc(v)) = \rn(\rc(v))$.
  \item \dft{leaves}: If $v$ is a leaf then $\ln(v) \neq \bot \implies \ln(v)$ is a leaf and $\rn(v) \neq \bot \implies \rn(v)$ is a leaf.
  \end{itemize}
  The labeling $\calL$ is \dft{globally compatible} if every consistent vertex $v$ is compatible.
\end{dfn}

\begin{dfn}
  \label{dfn:btl-problem}
  The problem $\BTL$ consists of the following:
  \begin{description}
  \item[Input:] a balanced tree labeling $\calL$
  \item[Output:] for each $v \in V$, a pair $(\beta(v), p(v)) \in \set{B, U} \times \ports$
  \item[Validity:] for each consistent $v \in V$ we have
    \begin{enumerate}
    \item if $v$ is not compatible, then $v$ outputs $(U, \bot)$
    \item if $v$ is a compatible leaf then $v$ outputs $(B, \parent(v))$
    \item if $v$ is compatible and internal then
      \begin{enumerate}
      \item if $\lc(v)$ and $\rc(v)$ output $(B, \parent(\lc(v)))$ and $(B, \parent(\rc(v)))$, respectively, then $v$ outputs $(B, \parent(v))$
      \item if $\lc(v)$ (resp.\ $\rc(v)$) outputs $(U, \cdot)$, then $v$ outputs $(U, \lc(v))$ (resp.\ $(U, \rc(v))$)
      \end{enumerate}
    \end{enumerate}    
  \end{description}
\end{dfn}

\begin{lem}
  \label{lem:btl-lcl}
  $\BTL$ is an LCL.
\end{lem}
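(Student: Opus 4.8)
The plan is to exhibit an explicit constant radius $c$ such that membership $(G,\calI,\calO)\in\BTL$ can be certified by inspecting every $N_v(c)$. First I would observe, exactly as in the proof of Lemma~\ref{lem:leaf-coloring-lcl}, that whether a node $v$ is internal, a leaf, or inconsistent depends only on the tree-labeling of $v$ and of its $\parent(v),\lc(v),\rc(v)$; hence this classification is determined within $N_v(1)$. Next I would check that the \emph{compatibility} predicate of Definition~\ref{dfn:compatible} is locally checkable: each of the five conditions (type-preserving, agreement, siblings, persistence, leaves) refers only to $v$, to $\ln(v)$, $\rn(v)$, $\lc(v)$, $\rc(v)$, and to one further hop among these — e.g.\ persistence mentions $\rn(\rc(v))$ and $\ln(\lc(\rn(v)))$, and the tree-consistency of $\ln(v)$ and $\rn(v)$ themselves. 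So ``$v$ is compatible'' is a function of the labels appearing in $N_v(2)$. Since ``$v$ is consistent'' is already decided in $N_v(1)$, the hypothesis ``$v$ is a consistent, compatible, internal node'' and all its variants are decided within $N_v(2)$.

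Then I would verify that each of the validity clauses in Definition~\ref{dfn:btl-problem} is local. Clause~1 is a constraint purely on $\calO(v)$ given the (locally decidable) status ``$v$ not compatible.'' Clause~2 is a constraint on $\calO(v)$ and on the port label $\parent(v)$, given ``$v$ a compatible leaf.'' Clause~3 is the only one that inspects neighbors' outputs: it compares $\calO(v)$ with $\calO(\lc(v))$ and $\calO(\rc(v))$, and references the port labels $\parent(v),\parent(\lc(v)),\parent(\rc(v))$ — all visible in $N_v(2)$ once we know $v$ is consistent, compatible, and internal. Taking $c=2$ (or, to be safe, any constant at least as large as the maximum locality of the predicates above), the global condition $(G,\calI,\calO)\in\BTL$ holds if and only if, for every $v$, the restriction $(N_v(c),\calI|_{N_v(c)},\calO|_{N_v(c)})$ satisfies every applicable validity clause. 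The forward implication is immediate since the clauses are stated per-node; the reverse implication holds because each clause for $v$ is \emph{already fully witnessed} inside $N_v(c)$, so if every radius-$c$ ball is locally valid then in particular every node's own clause is satisfied.

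I do not expect a genuine obstacle here — the lemma is essentially bookkeeping, parallel to Lemma~\ref{lem:leaf-coloring-lcl}. The one point that needs a moment's care is the \emph{persistence} condition, which chains through $\rn(v)$ and then through $\lc$/$\rc$ of that node, giving a radius-$2$ dependence rather than radius-$1$; I would make sure the stated constant $c$ accommodates this (and any similar nesting), and I would phrase the equivalence in the form required by the definition of LCL, namely that $\BTL$ is closed under the operation ``replace the instance by the disjoint union of its radius-$c$ balls,'' which is what the iff in the LCL definition formally asserts.
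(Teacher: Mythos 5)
Your proposal is correct and follows essentially the same route as the paper's proof: observe that the internal/leaf/inconsistent classification and all five compatibility conditions of Definition~\ref{dfn:compatible} depend only on labels within constant radius, and hence the validity clauses of Definition~\ref{dfn:btl-problem} are locally checkable. The paper states this in two sentences; your version merely spells out the constant (radius $2$, driven by the persistence condition), which is a harmless elaboration of the same bookkeeping.
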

\begin{proof}
  As noted before, checking if a node is internal, a leaf, or inconsistent can be done locally. Also, it is clear that all of the conditions for compatibility (Definition~\ref{dfn:compatible}) are locally checkable. Thus, the validity conditions of $\BTL$ are also locally checkable.
\end{proof}

\begin{lthm}
  \label{thm:btl}
  The complexity of $\BTL$ is
  \begin{equation*}
    \begin{split}
      \RDIST(\BTL) &= \Theta(\log n),\\
      \DDIST(\BTL) &= \Theta(\log n),\\
      \RVOL(\BTL) &= \Theta(n),\\
      \DVOL(\BTL) &= \Theta(n).
    \end{split}
  \end{equation*}
\end{lthm}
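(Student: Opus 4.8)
The plan is to establish four bounds, reusing the machinery from the $\LeafColoring$ analysis for the $\Theta(\log n)$ distance complexities and a communication-complexity reduction (via Theorem~\ref{thm:query-lb-from-cc}) for the $\Omega(n)$ volume lower bounds. For the distance upper bound $\DDIST(\BTL) = O(\log n)$: observe that $\BTL$ can be solved by exactly the same local strategy as $\LeafColoring$ -- since the underlying consistent subgraph $G_T$ is a pseudo-forest by Observation~\ref{obs:leaf-coloring-pforest}, each consistent node is within distance $O(\log n)$ of a leaf of $G_T$ (Lemma~\ref{lem:near-leaf}). In $O(\log n)$ distance each node gathers its entire subtree down to the nearest leaves, checks compatibility of every descendant, and outputs $(B, \parent(v))$ if all descendants are compatible, or $(U, p)$ pointing toward the nearest incompatible descendant otherwise; the validity conditions of Definition~\ref{dfn:btl-problem} are then satisfied by construction. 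This also gives $\RDIST(\BTL) = O(\log n)$. The matching distance lower bound $\DDIST(\BTL), \RDIST(\BTL) = \Omega(\log n)$ follows essentially verbatim from Proposition~\ref{prop:leaf-coloring-rdist-lb}: on a complete binary tree of depth $k$ with a globally compatible balanced tree labeling, a node must look up to distance $\Omega(\log n)$ to distinguish ``all descendants compatible'' from ``some descendant at the bottom is incompatible,'' and hence to decide between outputting $B$ and $U$.

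The heart of the theorem is the volume lower bound $\RVOL(\BTL) = \Omega(n)$ (which also gives $\DVOL(\BTL) = \Omega(n)$, and the upper bounds $\RVOL, \DVOL = O(n)$ are trivial). The idea is to embed set disjointness. Fix a complete binary tree shape of $N = \Theta(n)$ leaves. Alice holds $x \in \{0,1\}^N$ and Bob holds $y \in \{0,1\}^N$, with the promise $\sum_i x_i y_i \in \{0,1\}$. The embedding $\calE(x,y)$ builds the balanced tree labeling on this tree where coordinate $i$ controls whether the $i$-th leaf (or the lateral edge structure near it) is locally compatible: we arrange the construction so that the whole labeling is globally compatible iff $\disj(x,y) = 1$, and when $\sum_i x_i y_i = 1$ there is exactly one incompatible location, determined jointly by $x$ and $y$. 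The lateral $\ln/\rn$ edges are exactly the gadget that lets a single bit from each player ``collide'': e.g.\ Alice's bit sets whether a node's $\rn$-pointer is present and Bob's bit sets whether the corresponding $\ln$-pointer points back, so the \textbf{agreement} condition fails at coordinate $i$ precisely when $x_i = y_i = 1$. The function $g$ reads the root's output: $g(\calE(x,y)) = 1$ iff the root outputs $(B, \parent)$. Then $g \circ \calE = \disj$. Each query into $\calE(x,y)$ -- revealing one node's identity, degree, and input label -- can be answered by Alice and Bob exchanging $O(\log n)$ bits (the tree shape and IDs are public; only $O(1)$ bits of $x$ or $y$ are relevant to any single node's label). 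By Theorem~\ref{thm:query-lb-from-cc} and Theorem~\ref{thm:disj-lb}, any algorithm solving $\BTL$ needs $\Omega(R(\disj)/\log n) = \Omega(N/\log n) = \widetilde\Omega(n)$ queries; a more careful encoding -- packing the relevant bit of $x_i$ directly into the $i$-th leaf's label so that each query costs only $O(1)$ bits beyond the public IDs -- should remove the logarithmic loss and yield $\Omega(n)$.

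The main obstacle is designing the gadget so that three things hold simultaneously: (i) the labeling is a legal balanced tree labeling and is globally compatible exactly when $\disj(x,y)=1$; (ii) under the disjointness promise, a single incompatibility forces the root to output $U$ (this is where the ``path of port numbers to an incompatible node'' semantics of $\BTL$ and the \textbf{persistence} condition must propagate the defect all the way up without introducing spurious incompatibilities elsewhere); and (iii) the per-query communication cost stays $O(1)$, which constrains how $x_i$ and $y_i$ may be distributed across node labels. Getting (ii) right is delicate because the compatibility conditions couple a node to its siblings and lateral neighbors, so a naive construction might make an entire level incompatible rather than a single node, which would break the promise-version of the disjointness lower bound. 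I expect the construction to localize each coordinate $i$ to a bounded neighborhood of the $i$-th leaf and to use the lateral pointers only as a ``parity check'' between the $x$-side and $y$-side so that exactly one defect appears, propagating cleanly upward via the output port chain.
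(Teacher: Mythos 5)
Your overall plan coincides with the paper's (nearest-leaf search for the distance bounds, a disjointness embedding via Theorem~\ref{thm:query-lb-from-cc} for the volume lower bound), but two steps are genuinely incomplete. First, the distance upper bound: your algorithm checks compatibility only within distance $O(\log n)$ of $v$, yet validity condition 3(b) of Definition~\ref{dfn:btl-problem} forces $v$ to output $(U,\cdot)$ whenever \emph{any} descendant in $G_T$, at any depth, is incompatible, and the subtree below a consistent node can have depth $\Theta(n)$. So ``the validity conditions are satisfied by construction'' is exactly the point that needs proof: one must show that if the subtree rooted at $v$ is not a complete balanced binary tree, then an incompatible node already occurs at distance at most that of $v$'s nearest leaf (hence at most $\log n$), while if it is balanced the entire subtree lies within distance $\log n$ and is fully visible. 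This is Lemma~\ref{lem:btl-global}, whose proof uses the ``laterally connected levels'' claim derived from the \emph{siblings} and \emph{persistence} conditions; Lemma~\ref{lem:near-leaf} only gives you a nearby leaf, not a nearby witness of unbalancedness, so this structural lemma is missing from your argument (one also has to verify that the tie-breaking rule makes parent and child outputs consistent, as the paper does in Proposition~\ref{prop:btl-ddist-ub}).

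Second, the volume lower bound gadget as you describe it does not compute an AND per coordinate: if Alice's bit alone decides whether $\rn(u_i)$ is present and Bob's bit alone decides whether $\ln(w_i)$ points back, then already for $x_i=1$, $y_i=0$ the \emph{agreement} condition (or the \emph{siblings} condition at $v_i$) fails, so the instance becomes incompatible on a non-intersecting coordinate and $g\circ\calE\neq\disj$. The paper avoids this by making both leaf labels depend on the pair $(a_i,b_i)$ --- $\rn(u_i)=\ln(w_i)=\bot$ precisely when $a_i=b_i=1$, and mutual pointers otherwise --- so that only the siblings condition at $v_i$ fails, exactly on intersections; the cost is that answering a query at $u_i$ or $w_i$ requires exchanging the two bits $a_i,b_i$, giving communication cost $2$ per query and hence $\Omega(N)=\Omega(n)$ with no $\log n$ loss (your $O(\log n)$-per-query accounting is unnecessarily lossy, as you note). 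Relatedly, your worry about preserving a unique-intersection promise and avoiding ``spurious'' incompatibilities is a non-issue: by Lemma~\ref{lem:btl-compatible} the root's forced output is $(B,\cdot)$ iff the labeling is globally compatible and $(U,\cdot)$ as soon as any descendant is incompatible, so multiple defects are harmless --- but that upward-propagation statement is itself something you must prove (an easy induction on the path to the defect), rather than list as an open obstacle. Finally, the paper extracts the $\Omega(\log n)$ distance lower bound from the same construction (queries of depth at most $k-1$ can be simulated with zero communication); your Yao-style two-instance argument at the root also works, provided you again invoke the forced-output lemma.
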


The proof of Theorem~\ref{thm:btl} is as follows. In Section~\ref{sec:btl-valid}, we show that in a valid output for $\BTL$, any consistent internal node is either the root of a balanced binary tree, or it has an inconsistent descendant within distance $\log n$. Thus, each node can determine its correct output by examining its $O(\log n)$ radius neighborhood. The upper bound is tight, as a node may need to see up to distance $\Omega(\log n)$ in order to see its nearest incompatible or inconsistent node.

For the volume lower bounds, consider a balanced binary tree with lateral edges such that there exists a globally compatible labeling, and let $\calL$ be such a labeling (see Figure~\ref{fig:balanced-tree}). By modifying the input of a single pair of sibling leaves, we can form an input labeling $\calL'$ which is not globally compatible. The validity conditions of $\BTL$ imply that the root of the tree must be able to distinguish $\calL$ from $\calL'$ to produce its output. Therefore, to solve $\BTL$, the root must query a large fraction of the leaves---i.e., $\Omega(n)$ nodes. We formalize this argument using the communication complexity framework of Eden and Rosenbaum~\cite{Eden2018-lower} in Section~\ref{sec:btl-vol-lb}.

\subsection{Structure of Valid Outputs}
\label{sec:btl-valid}

\begin{lem}
  \label{lem:btl-global}
  Suppose $G = (V, E)$ is a graph, $\calL$ a balanced tree labeling of $G$, and $v \in V$ is consistent. Then either the sub-(pseudo)tree of $G_T$ rooted at $v$ is a balanced binary tree (i.e., all leaves below $v$ are at the same distance from $v$), or there exists a descendant $w$ of $v$ with $\dist(v, w) < \log n$ such that $w$ is incompatible. 
\end{lem}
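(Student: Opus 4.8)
The plan is to argue by induction on the depth of the subtree rooted at $v$, with the key leverage coming from the \emph{persistence} and \emph{siblings} conditions of Definition~\ref{dfn:compatible}, which together force the ``lateral'' structure at each level of $G_T$ to be a doubly-linked list once all nodes are compatible. First I would dispose of the easy case: if $v$ itself is incompatible, then $w = v$ works (with $\dist(v,w) = 0 < \log n$), so assume $v$ is compatible. Similarly if either child of $v$ is inconsistent or incompatible, we are immediately done. So the interesting case is when $v$ and both its children are compatible; I want to recurse.

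The heart of the argument is the following claim, which I would isolate as a sub-lemma: if $v$ is compatible and internal, and $u$ is any compatible internal node reachable from $v$ by a chain of $\rn$/$\ln$ pointers staying at the same ``level,'' then the children of $u$ lie at the same level as the children of $v$, and moreover the left-child-of / right-child-of maps commute with the lateral pointers (this is exactly what \emph{persistence} plus \emph{siblings} encode: $\rn(\rc(v)) = \ln(\lc(\rn(v)))$, etc.). Iterating, the entire level below $v$ — the set of all nodes at a fixed distance from $v$ in $G_T$, reachable via lateral hops — is either entirely made of internal nodes or contains a leaf, and by the \emph{type-preserving} and \emph{leaves} conditions a lateral chain cannot mix internal nodes with leaves. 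So at each depth $d$ below $v$, \emph{either} every node at depth $d$ in the subtree is internal, \emph{or} every such node is a leaf. In the former case the subtree has at least $2^{d+1}$ nodes; since the whole graph has $n$ nodes, this cannot persist for $d \geq \log n$. Hence some depth $d < \log n$ consists entirely of leaves. Combined with the dichotomy, this means every leaf below $v$ sits at depth $d < \log n$, i.e.\ the subtree is a balanced binary tree of depth $d < \log n$ — unless somewhere along the way we hit an incompatible descendant, in which case that descendant is the desired $w$ at distance $< \log n$ (it lies within the first $d < \log n$ levels, since the inductive hypothesis is applied only to subtrees of strictly smaller depth, and the recursion bottoms out within $\log n$ levels by the counting argument).

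I would organize the induction so that the inductive statement is: \emph{for every consistent $v$, either the subtree of $G_T$ rooted at $v$ is balanced with all leaves at depth $< \log n$ from $v$, or there is an incompatible $w$ with $\dist(v,w) < \log n$}. The inductive step: given compatible internal $v$ with compatible children $\lc(v), \rc(v)$, apply the hypothesis to each child; if either yields an incompatible descendant $w'$ at distance $< \log n - 1$ from the child, then $\dist(v, w') < \log n$ and we are done; otherwise both subtrees are balanced, and I must check they have \emph{equal} depth — this is where I invoke the lateral-structure claim, since $\rn(\lc(v)) = \rc(v)$ by \emph{siblings} puts the two children on a common level and the persistence condition propagates balance across siblings. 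The depth bound $< \log n$ then follows from the $2^{d+1} \leq n$ counting.

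The main obstacle I anticipate is the bookkeeping in the lateral-structure claim: making precise that ``following $\rn$ pointers stays within one level of $G_T$'' and that the children-maps commute with lateral hops, all while some nodes in the graph might be inconsistent (hence not in $V_T$) or incompatible. I expect to handle this by only ever reasoning about maximal lateral chains of \emph{compatible} internal nodes rooted within the subtree below $v$, using \emph{persistence} to extend such chains and \emph{agreement} to ensure the pointers are genuinely bidirectional; the moment a chain would need to leave the set of compatible nodes, we have found our incompatible witness $w$ and stop. A secondary subtlety is the off-by-one in the distance bound — leaves are at distance $d$ from $v$ and incompatible nodes could be at depth $d$ as well, and I need the strict inequality $\dist(v,w) < \log n$; this should come out cleanly because a fully-internal subtree of depth $\log n$ already has $> n$ nodes, so the first "all-leaves" or "incompatible" level occurs at depth at most $\lfloor \log n \rfloor - 1 < \log n$ (when $n$ is not a power of two one must be slightly careful, but the bound $\abs{B(r)} = 2^r$ for an all-internal ball, exactly as in Lemma~\ref{lem:near-leaf}, gives the needed slack).
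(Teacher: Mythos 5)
Your proposal is correct and is essentially the paper's own argument: the key claim in both is that, as long as all nodes involved are compatible, the \emph{siblings} and \emph{persistence} conditions make each depth level below $v$ laterally connected, so a level containing both a leaf and an internal node forces an incompatible node (via \emph{type-preserving}/\emph{leaves}), and the $2^d \le n$ counting bounds the depth of the nearest leaf by $\log n$. The only difference is organizational: the paper has no outer induction on subtree depth---it argues directly at the level $D_v(d)$ of the nearest leaf---which also sidesteps the off-by-one in the inductive distance bound that you flag and then repair by falling back on exactly this counting argument.
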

\begin{proof}
  Let $H$ denote the sub-(pseudo)tree of $G_T$ rooted at $v$. Suppose $H$ is not a balanced binary tree. We will show that there is an incompatible $w$ in $H$ within $\log n$ distance from $v$. To this end for $d \in \N$, let $D_v(d)$ denote the set of descendants $w$ of $v$ in $H$ such that there exists a path from $v$ to $w$ of length $d$. That is, $D_v(0) = \set{v}$, and for each $d \geq 1$, $D_v(d)$ consists of the children of nodes in $D_v(d - 1)$.\footnote{Since $G_T$ is a pseudo-forest, and hence, could contain cycles, it may be that the same vertex is contained in $D_v(d_i)$ for different values of $d_1, d_2, \ldots$. Since each node has at most a single parent, there is still a unique path from $v$ to $w$ of each length $d_i$.} Let $H_d$ be the (induced) subgraph of $G$ with vertex set $D_v(0) \cup D_v(1) \cup \cdots \cup D_v(d)$.
  \begin{description}
  \item[Claim.] Suppose every vertex $v$ in $H_d$ is compatible. Then $D_v(d)$ is \emph{laterally connected}. That is, for every $u, w$ in $D_v(d)$, there exists a path connecting $u$ and $w$ consisting only of edges $\set{x, y}$ such that $y = \rn(x)$ (and symmetrically $x = \ln(y)$).
  \item[Proof of claim.] We argue by induction on $d$. The case $d = 0$ is trivial as $D_v(0)$ consists of a single vertex. Now suppose the claim holds for $d - 1$, and take $u, w \in D_v(d-1)$. By the inductive hypothesis, there exists a path $v_0, v_1, \ldots v_\ell$ with $\parent(u) = v_1$ and $\parent(w) v_\ell$, and without loss of generality (by possibly exchanging the roles of $u$ and $w$) we have $v_{i} = \rn(v_{i-1})$ for $i = 1, 2, \ldots, \ell$. For each $i$, let $u_i = \lc(v_i)$ and $w_i = \rc(v_i)$. By the siblings property of compatibility, we have $w_i = \rn(v_i)$, and by persistence, $v_{i+1} = \rn(w_i)$. Therefore, the sequence $v_0, w_0, v_1, \ldots, v_\ell, w_\ell$ forms a path. Since $u \in \set{v_0, w_0}$ and $w \in \set{v_\ell, w_\ell}$, the claim follows.
  \end{description}
  Using the claim, we will show that $H$ contains an incompatible node $w$. Since $H$ is assumed not to be balanced, there exist leaves $u$ and $u'$ at distances $d$ and $d'$ (respectively) from $v$ with $d' > d$. In particular, take $u$ to be the nearest leaf to $v$, and $u''$ be $u'$'s (unique) ancestor in $D_v(d)$. By the claim, there exists a path $v_0, v_1, \ldots, v_\ell$ between $u$ and $u''$ in $D_v(d)$ such that for each $i$ we have $v_i = \rn(v_{i-1})$. Without loss of generality, assume $u = v_0$ and $u'' = v_\ell$. Since $v_0$ is a leaf and $v_\ell$ is internal, there exists some $i \in [\ell]$ such that $v_{i-1}$ is a leaf, and $v_{i}$ is internal or inconsistent. However, this implies that $w = v_{i-1}$ is incompatible. Moreover, $\dist(v, w) = \dist(v, u) = d$ which is at most $\log n$ (as $u$ was chosen to be the nearest leaf to $v$), which gives the desired result.
\end{proof}

\begin{figure}
  \centering
  \includegraphics[width=0.75\textwidth]{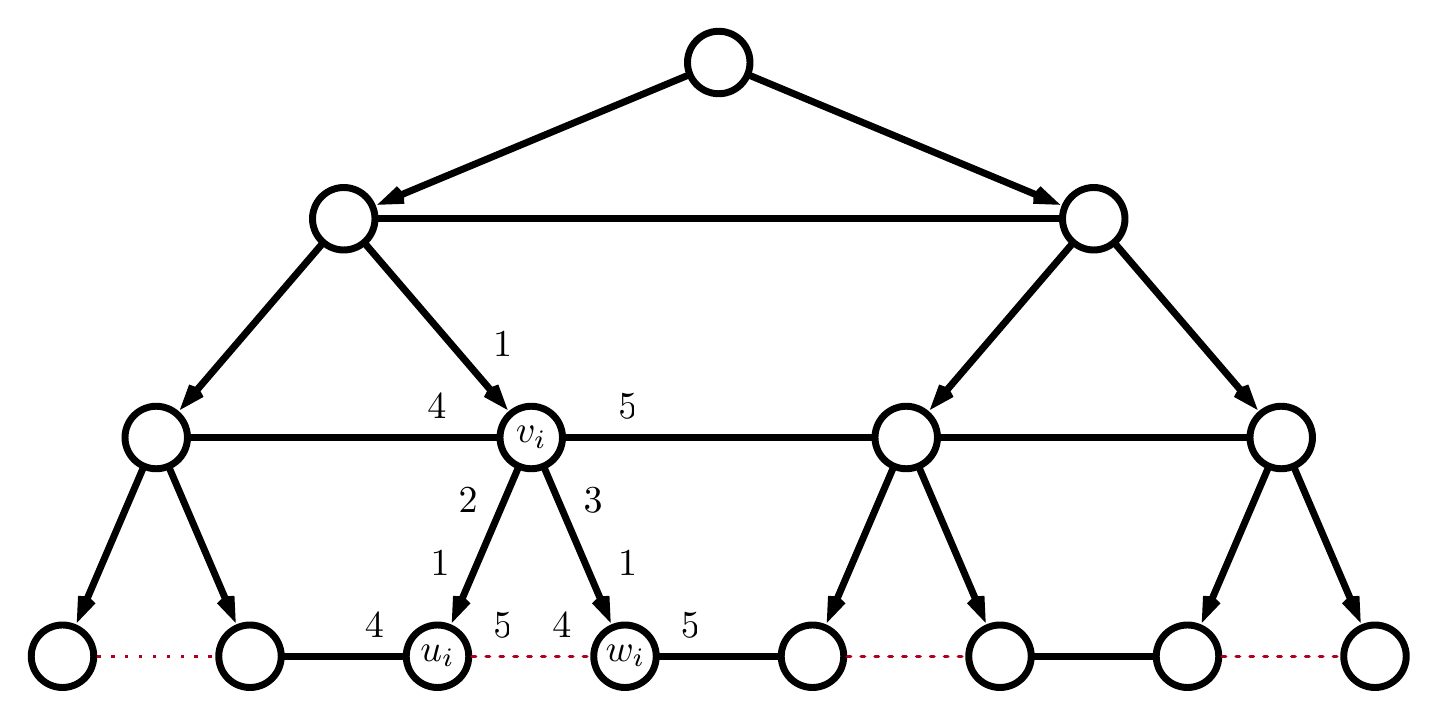}
  \caption{An instance of $\BTL$ constructed in the proof of Proposition~\ref{prop:btl-rvol-lb}. The diagonal edges are directed from parent to child, as specified by the underlying binary tree labeling. The horizontal edges are left and right neighbors. The dashed edges between leaves are included in $G$. Given strings $a, b \in \set{0,1}^N$, the labeling of the leaves are determined as follows: we have $\rn(u_i) = \ln(w_i) = \bot$ if $a_i = b_i = 1$; otherwise $\rn(u_i) = 5$ and $\ln(w_i) = 4$. The labeling constructed in this way is globally compatible if and only if $\disj(a, b) = 1$. In this case, the unique valid output is to label each node $(U, p)$.}
  \label{fig:balanced-tree}
\end{figure}

\begin{lem}
  \label{lem:btl-compatible}
  Suppose $G = (V, E)$ is a graph and $\calL$ a globally compatible labeling. Then in every valid solution to $\BTL$, every consistent node $v$ outputs $(B, \parent(v))$. Conversely, if $v$ has a descendant $w$ in $G_T$ that is incompatible, then $v$ outputs $(U, \cdot)$ in any valid solution to $\BTL$.
\end{lem}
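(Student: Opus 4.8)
The plan is to prove both directions by structural induction on the sub-pseudotree $G_T$ rooted at a consistent node $v$. For the first claim, assume $\calL$ is globally compatible. Then by Lemma~\ref{lem:btl-global}, the sub-pseudotree rooted at any consistent $v$ must be a balanced binary tree (the alternative—an incompatible descendant within distance $\log n$—is excluded by global compatibility). So I would argue by induction on the height $h$ of $v$ in this balanced tree. For the base case, $v$ is a compatible leaf, and validity condition~2 in Definition~\ref{dfn:btl-problem} forces $v$ to output $(B, \parent(v))$. For the inductive step, $v$ is compatible and internal with children $\lc(v)$ and $\rc(v)$, each of height $h-1$ and each consistent; by the inductive hypothesis each outputs $(B, \parent(\cdot))$. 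Then validity condition~3(a) forces $v$ to output $(B, \parent(v))$, completing the induction.

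For the converse, suppose $v$ has an incompatible descendant $w$ in $G_T$. Here I would induct on $\dist(v, w)$ along the (unique) directed path in $G_T$ from $v$ down to $w$—call it $v = x_0, x_1, \ldots, x_d = w$. For the base case $d = 0$: $v = w$ is incompatible, so by validity condition~1, $v$ outputs $(U, \bot)$, which is of the form $(U, \cdot)$. For the inductive step, $d \geq 1$: the child $x_1$ of $v$ is either incompatible itself, or has $w$ as a proper descendant at distance $d - 1$; either way, by the inductive hypothesis (or the base case), $x_1$ outputs $(U, \cdot)$. Now $v$ is consistent. If $v$ is a leaf it has no children, contradicting $d \geq 1$; so $v$ is internal, and $x_1 \in \set{\lc(v), \rc(v)}$. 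If $v$ itself is incompatible, condition~1 gives output $(U, \bot)$ and we are done; if $v$ is compatible, then condition~3(b) applies to the child $x_1$ that output $(U, \cdot)$, forcing $v$ to output $(U, x_1) = (U, \cdot)$. This closes the induction.

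The main subtlety to watch is the pseudo-forest structure: $G_T$ may contain a directed cycle (Observation~\ref{obs:leaf-coloring-pforest}), so "descendant of $v$" and "the sub-pseudotree rooted at $v$" need to be read in terms of directed paths, and a node could in principle be reached by paths of several lengths. But each node has in-degree at most one in $G_T$, so from any starting point the downward walk is deterministic and the notions above are well-defined along that walk; moreover, the root of a genuine directed cycle is itself inconsistent or incompatible in any configuration that reaches here (indeed a cycle in $G_T$ has every node internal, so following lateral-edge arguments as in Lemma~\ref{lem:btl-global} one finds an incompatibility), so the inductions terminate correctly. A second point to verify is that the inductive hypothesis in the first claim genuinely applies to the children—this needs that children of a consistent internal node are themselves consistent, which is immediate from Observation~\ref{obs:leaf-coloring-pforest} (every node of $G_T$ is internal or a leaf, hence consistent).
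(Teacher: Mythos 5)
Your proof is correct and takes essentially the same route as the paper's: an induction on height using validity conditions~2 and~3(a) for the globally compatible case, and an induction along the path from $v$ to the incompatible descendant using conditions~1 and~3(b) for the converse. The additional care you take---invoking Lemma~\ref{lem:btl-global} to ensure the subtree below $v$ is a finite balanced tree so that height is well defined, and handling possibly incompatible intermediate nodes on the path via condition~1---only refines the paper's argument, which carries out the same two inductions directly.
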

\begin{proof}
  First consider the case where $\calL$ is globally compatible. We argue by induction on the height of $v$ that $v$ outputs $(B, \parent(v))$. For the base case, the height of $v$ is $0$, hence $v$ is a leaf. Then by Condition~2 of validity $v$ outputs $(B, \parent(v))$. For the inductive step, suppose all nodes at height $h - 1$ output $(B, \parent(v))$, and $v$ is at height $h$. In particular, the children of $v$ both output $(B, v)$. Therefore, $v$ outputs $(B, \parent(v))$ by Condition~3(a) of validity. This gives the first conclusion of the lemma.

  Now suppose $v$ has a descendant $w$ in $G_T$ that is incompatible. Let $v = u_\ell, u_{\ell-1}, \ldots, u_0 = w$ be the path from $v$ to $w$ in $G_T$. We argue by induction on $i$ that $u_i$ outputs $(U, \cdot)$. The base case $i = 0$ follows from Condition~1 of validity. The inductive step follows from Condition~3(b) of validity: since $u_i$ has a child that outputs $(U, \cdot)$ (namely, $u_{i-1}$), $u_i$ must output $(U, \cdot)$ as well. 
\end{proof}

\begin{prop}
  \label{prop:btl-ddist-ub}
  There exists an algorithm $A$ with the following property. Let $G = (V, E)$ be a graph on $n$ nodes and $\calL$ a balanced tree labeling.  Then $A$ solves $\BTL$ on $(G, \calL)$ and for all $v \in V$, $\DIST(A, G, \calL, v) = O(\log n)$. Therefore,
  \[
  \DDIST(\BTL), \RDIST(\BTL) = O(\log n).
  \]
\end{prop}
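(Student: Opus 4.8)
The plan is to combine the two structural lemmas just proven (Lemma~\ref{lem:btl-global} and Lemma~\ref{lem:btl-compatible}) into an explicit deterministic algorithm with distance cost $O(\log n)$, which then yields both the $\DDIST$ and $\RDIST$ upper bounds. The key point is that Lemma~\ref{lem:btl-global} localizes all the information a consistent node needs: either the sub-pseudotree rooted at $v$ in $G_T$ is a perfectly balanced binary tree — in which case, by the argument of Lemma~\ref{lem:btl-compatible} (iterated), all of its nodes are compatible and $v$ should output $(B, \parent(v))$ — or there is an incompatible descendant of $v$ within distance $\log n$, in which case $v$ should output $(U, p)$ where $p$ points toward the nearest such descendant.

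First I would handle the easy cases: in $O(1)$ queries $v$ checks whether it is inconsistent (output $(U, \bot)$), a consistent leaf (output $(B, \parent(v))$), or consistent internal. For a consistent internal $v$, the algorithm queries the radius-$(\log n + c)$ neighborhood $N_v(\log n + c)$, which has size $\Delta^{O(\log n)}$ and hence is gathered in $O(\log n)$ distance (the extra constant $c$ is the locality radius needed to evaluate compatibility of each node, per Definition~\ref{dfn:compatible}). Within this neighborhood $v$ reconstructs $G_T$ restricted to its descendants up to distance $\log n$, and for each descendant $w$ it checks consistency and compatibility. Then $v$ decides as follows: if every descendant of $v$ in $G_T$ within distance $\log n$ is consistent and compatible, and moreover all leaves among these descendants occur at exactly the same distance from $v$ with the subtree being complete down to that depth, then $v$ outputs $(B, \parent(v))$; otherwise $v$ locates the nearest incompatible (or inconsistent) descendant $w$ — breaking ties, say, lexicographically by the left/right path from $v$ — and outputs $(U, p)$ where $p \in \set{\lc(v), \rc(v)}$ is the first hop on the path from $v$ to $w$.

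For correctness I would verify the validity conditions of Definition~\ref{dfn:btl-problem} directly. Condition~1 holds since an inconsistent-detection or incompatibility-detection at $v$ itself makes $v$ output $(U, \bot)$. Condition~2 is immediate. For Condition~3, note that if $\lc(v)$ and $\rc(v)$ both output $(B, \cdot)$, then (by the algorithm) each of their sub-pseudotrees within distance $\log n$ is balanced, complete, and compatible; since $v$ is itself compatible, gluing these gives that $v$'s sub-pseudotree is balanced and complete down to the relevant depth, so $v$ also outputs $(B, \parent(v))$ — here I use that the depth bound $\log n$ is never binding on a balanced complete tree with $n$ nodes. If instead $\lc(v)$ (say) outputs $(U, \cdot)$, then either $\lc(v)$ is itself incompatible/inconsistent, or it has an incompatible descendant within distance $\log n$; either way $v$ has an incompatible descendant within distance $\log n + 1$, and the tie-breaking ensures $v$ outputs $(U, \lc(v))$, satisfying Condition~3(b). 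The main subtlety — and the step I expect to need the most care — is reconciling the "within distance $\log n$" window with the exact validity conditions, i.e.\ ensuring the window is wide enough that a node outputs $(B, \cdot)$ \emph{only} when its entire (possibly deep, via a cycle in the pseudo-forest) sub-pseudotree is genuinely balanced and compatible; this is precisely what Lemma~\ref{lem:btl-global} guarantees, since non-balancedness or incompatibility anywhere below $v$ forces a witness within distance $\log n$. Finally, $\DIST(A,G,\calL,v) = O(\log n)$ by construction, and since a deterministic algorithm is in particular a randomized one, $\RDIST(\BTL) \le \DDIST(\BTL) = O(\log n)$.
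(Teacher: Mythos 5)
Your proposal is correct and follows essentially the same route as the paper: an $O(1)$ check for inconsistent/leaf/internal status, a scan of the radius-$(\log n + O(1))$ neighborhood, and an appeal to Lemma~\ref{lem:btl-global} to either certify the subtree balanced and compatible (output $(B,\parent(v))$) or point toward the nearest, leftmost incompatible descendant (output $(U,p)$), with the same verification of the validity conditions. One small nit: your algorithm description sends every \emph{consistent} leaf to $(B,\parent(v))$, but Condition~1 requires a consistent-yet-incompatible leaf to output $(U,\bot)$ --- your correctness paragraph already assumes this behavior, so just add the $O(1)$-query compatibility check at leaves, as the paper's algorithm does.
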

\begin{proof}
  Consider the following algorithm, $A$. Starting from a vertex $v$, $A$ searches the $O(1)$ neighborhood to determine if $v$ is internal, a leaf, or inconsistent. If $v$ is inconsistent, it outputs $(B, \bot)$. If $v$ is a leaf, it outputs $(B, \parent(v))$ if all of the conditions of compatibility (Definition~\ref{dfn:compatible}) are satisfied, and $(U, \bot)$ otherwise. Finally, if $v$ is internal, $A$ queries $v$'s $\log n + O(1)$ neighborhood in order to find its nearest leaf, which is at distance $d \leq \log n$. If $v$ sees a descendant within distance $d$ that is incompatible, then $v$ outputs $(U, p)$ where $p$ is the port towards the nearest such in compatible node, breaking ties by choosing the left-most descendant. Otherwise, $v$ outputs $(B, \parent(v))$.

  Conditions~1 and~2 in the validity of Definition~\ref{dfn:btl-problem} are trivially satisfied for all node. Consider the case where $v$ is internal and compatible. By Lemma~\ref{lem:btl-global}, if the subtree $H$ of $G_T$ rooted at $v$ is not balanced, then there is a (closest, leftmost) incompatible node $w$ in $H$ at distance at most $\log n$. Thus, in this case $v$ outputs $(U, p(v))$ where $p(v)$ is the port towards $w$. Similarly, the child $v' = p(v)$ will output $(U, p(v'))$ so that condition 3(b) of validity is also satisfied. Finally, if $H$ is balanced and all nodes in $H$ are compatible, then $v$ will output $(B, \parent(v))$, as will all other nodes $w$ in $H$. Thus condition 3(a) of validity is also satisfied.
\end{proof}

\subsection{Volume Lower Bounds}
\label{sec:btl-vol-lb}

\begin{prop}
  \label{prop:btl-rvol-lb}
  Any (randomized) algorithm $A$ that solves $\BTL$ with probability bounded away from $1/2$ requires $\Omega(n)$ queries in expectation. Thus
  \[
  \RVOL(\BTL), \DVOL(\BTL) = \Omega(n).
  \]
\end{prop}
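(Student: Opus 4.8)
The plan is to apply the communication-complexity framework of Theorem~\ref{thm:query-lb-from-cc} with an embedding of the disjointness function $\disj$ on $N = \Theta(n)$ bits, using the gadget construction sketched just before Section~\ref{sec:btl-valid} and depicted in Figure~\ref{fig:balanced-tree}. First I would fix a balanced binary tree $T$ of depth $k$ (so $n = \Theta(2^k)$) together with a specific globally compatible balanced tree labeling: diagonal parent/child edges come from the tree structure, and the horizontal $\ln/\rn$ edges connect, at each depth, the nodes in left-to-right order so that the siblings and persistence conditions are satisfied. In particular the leaves are linked in a path $u_1, w_1, u_2, w_2, \ldots$ where $u_i, w_i$ are the $i\th$ sibling pair (so there are $N = \Theta(n)$ such pairs). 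The embedding $\calE(a,b)$ takes $a, b \in \{0,1\}^N$ and, at the $i\th$ sibling pair, sets $\rn(u_i) = \ln(w_i) = \bot$ if $a_i = b_i = 1$, and otherwise keeps the ``true'' lateral edges $\rn(u_i) = 5$, $\ln(w_i) = 4$. Everything else in the labeling is fixed independently of $(a,b)$. One checks locally that this labeling is globally compatible exactly when no sibling pair has its lateral edges deleted, i.e.\ when $\sum_i a_i b_i = 0$; when $\disj(a,b) = 1$ the unique valid $\BTL$ output has the root outputting $(B, \parent(\mathrm{root})) = (B, \bot)$, and when $\disj(a,b) = 0$ some leaf pair is incompatible, which by Lemma~\ref{lem:btl-compatible} forces the root to output $(U,\cdot)$. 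So $g(\calE(a,b))$, defined to read off the first coordinate of the root's output label, equals $\disj(a,b)$, giving a valid embedding.

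Next I would bound the communication cost of simulating a single query. Partition the node identifiers and port assignments so that Alice ``owns'' the part of each leaf gadget determined by $a$ (i.e.\ the ports of $u_i$) and Bob owns the part determined by $b$ (the ports of $w_i$); all structural information about $T$ and about internal nodes is known to both players with no communication. A query $\query(x, j)$ of the embedded graph asks for the $j\th$ neighbor of some discovered node $x$, its degree, and its input label. If $x$ is an internal node or the query concerns a tree edge, both players already know the answer. The only queries needing communication are those touching a leaf's lateral port, and the answer depends on at most one bit $a_i$ (if the queried node is $u_i$) and one bit $b_i$ (if it is $w_i$): e.g.\ the degree and the $\rn$-label of $u_i$ are determined by $a_i \wedge b_i$. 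Hence each such query can be answered by exchanging $O(1)$ bits, so $\cost_{\calE}(q) \le B$ with $B = O(1)$ for every query $q$ an algorithm could make. (If one prefers, use the promise version of Theorem~\ref{thm:disj-lb}, which already restricts to $\sum a_i b_i \in \{0,1\}$, matching the instances we build.)

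Now apply Theorem~\ref{thm:query-lb-from-cc}: any algorithm (in particular a randomized one succeeding with probability bounded away from $1/2$) that computes $g$ on $\calG_n$ must use $T = \Omega(R(\disj)/B) = \Omega(N/O(1)) = \Omega(n)$ queries in expectation. Since solving $\BTL$ on the embedded instances lets the root compute $g$, and the number of queries made by the whole algorithm is at least the number made from the root, this gives $\RVOL(\BTL) = \Omega(n)$, and the deterministic bound $\DVOL(\BTL) = \Omega(n)$ follows a fortiori (a deterministic algorithm is a special case, or simply because $\DVOL \ge \RVOL$). Combined with the trivial upper bound $\DVOL(\BTL) \le n$ and $\RVOL(\BTL) \le \DVOL(\BTL)$, this pins both volume complexities at $\Theta(n)$; together with Proposition~\ref{prop:btl-ddist-ub} and the matching $\Omega(\log n)$ distance lower bound (a balanced-tree argument identical in spirit to Proposition~\ref{prop:leaf-coloring-rdist-lb}), Theorem~\ref{thm:btl} follows.

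The main obstacle I anticipate is verifying rigorously that the embedding is ``locally faithful'': namely that deleting the lateral edges of a single sibling pair really does make exactly the two leaves of that pair incompatible (via the \textbf{leaves} and \textbf{agreement} clauses of Definition~\ref{dfn:compatible}) while leaving every other consistent node compatible, so that Lemma~\ref{lem:btl-compatible} applies cleanly and the root's forced output genuinely encodes $\disj$. A secondary subtlety is the bookkeeping that $g$ is computable from a single query to the \emph{root's} output together with the algorithm's queries, and that ports, identifiers, and the known parameter $n$ can all be set up so Alice and Bob can carry out the simulation with only the claimed $O(1)$ bits per query; both are routine but must be done carefully.
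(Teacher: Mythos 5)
Your proposal is correct and follows essentially the same route as the paper's proof: the identical disjointness embedding into a balanced binary tree whose leaf-pair lateral labels encode $a_i,b_i$, the use of Lemma~\ref{lem:btl-compatible} so that the root's output reads off $\disj(a,b)$, and an $O(1)$-bit-per-query simulation plugged into Theorems~\ref{thm:query-lb-from-cc} and~\ref{thm:disj-lb} to get $\Omega(N)=\Omega(n)$ expected queries. The only slip is cosmetic: when $a_i=b_i=1$ the node that becomes incompatible is the parent $v_i$ (it violates the \emph{siblings} condition), not the two leaves via the \emph{leaves}/\emph{agreement} clauses, but this does not affect the argument.
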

\begin{proof}
  For any $k \in \N$ form the graph $G$ by starting with the complete binary tree of depth $k$. Assign IDs, port numbers, and labels as in the proof of Proposition~\ref{prop:leaf-coloring-rdist-lb}, so that the root has ID $1$, its left child has ID $2$, its right child has ID $3$, and so on. In particular, the nodes at depth $d$ have IDs $2^{d}, 2^{d} + 1, \ldots, 2^{d+1} - 1$. For each $d$, add lateral edges between nodes with IDs $2^{d} + i - 1$ and $2^{d} + i$ for all $i = 1, 2, \ldots, 2^{d} - 1$, and assign port numbers so that $2^{d} + i$'s port $4$ leads to $2^{d} + i - 1$, and port $5$ leads to $2^{d} + i - 1$ (for $1 \leq i \leq 2^d - 1$). Finally, for all nodes $v$ at depths $d \leq k - 1$, assign labels $\ln(v)$ and $\rn(v)$ to be consistent with the lateral edges described above. Thus, a balanced tree labeling $\calL$ has been determined at all nodes except the leaves of $G$. Note that $\calL$ is constructed such that all nodes at depth $d \leq k - 2$ are compatible. See Figure~\ref{fig:balanced-tree} for an illustration.

  Let $N = 2^{k-1} (= \Omega(n))$. We complete the labeling $\calL$ to be an embedding of the disjointness function $\disj : \set{0, 1}^N \times \set{0, 1}^N \to \set{0, 1}$, as follows. Given $i \in [N]$, let $v_i$ be the $i\th$ left-most node at depth $k - 1$ in $G$, and let $u_i$ and $w_i$ be its left and right child, respectively. Thus $u_1$ is the left-most leaf in $G$, $w_1$ its right sibling, and so on. For $i \leq N - 1$, assign $\rn(w_i) = v_{i+1}$ and $\ln(v_{i+1}) = w_i$, and take $\ln(v_1) = \rn(w_N) = \bot$. Finally, given any $a, b \in \set{0, 1}^N$, we complete the balanced tree labeling $\calL$ as follows:
  \begin{align*}
      \rn(u_i) &= \ln(w_i) = \bot &&\text{if } a_i = b_i = 1\\
      \rn(u_i) &= w_i,\ \ln(w_i) = u_i &&\text{otherwise.}
  \end{align*}

  For the labeling $\calL$ constructed as above, it is straightforward to verify that all nodes satisfy all conditions of compatibility with one possible exception: $v_i$ fails to satisfy the siblings condition if and only if $a_i = b_i = 1$. That is, $\calL$ is globally compatible if and only if $\disj(a, b) = 1$. Thus, by Lemma~\ref{lem:btl-compatible}, for any solution to $\BTL$ on input $\calL$, the root outputs $(B, \bot)$ if and only if $\disj(a, b) = 1$.

  Fix $v$ to be the root of $G$, and consider an execution of any algorithm $A$ solving $\BTL$ from $v$. We will apply Theorem~\ref{thm:query-lb-from-cc}. The observation that $v$ outputs $(B, \bot)$ if and only if $\disj(a, b) = 1$ shows that our construction of $\calE : (a, b) \mapsto \calL$ and $g(G, \calL) = 1$ if and only if $\calL(v) = (B, \bot)$ gives an embedding of $\disj$ in the sense of Definition~\ref{dfn:embedding}. Moreover, all labels in $\calL$ are independent of $a$ and $b$ except for the leaves, and for each $i$, the labels of $u_i$ and $w_i$ depend only on the values of $a_i$ and $b_i$. Therefore, all queries to $(G, \calL)$ have communication cost $0$, except the queries of the form $\query(v_i, \lc(v_i))$ and $\query(v_i, \rc(v_i))$. The latter queries can be answered by exchanging $a_i$ and $b_i$, hence the communication cost of such queries is $2$. Therefore, by Theorems~\ref{thm:query-lb-from-cc} and~\ref{thm:disj-lb}, the expected query complexity of any algorithm $A$ that solves $\BTL$ with probability bounded away from $1/2$ is $\Omega(N) = \Omega(n)$, as desired.
\end{proof}

\subsection{Proof of Theorem~\ref{thm:btl}}

\begin{proof}[Proof of Theorem~\ref{thm:btl}]
  By Proposition~\ref{prop:btl-ddist-ub} give the upper bounds on $\DDIST$ and $\RDIST$. The corresponding lower bound of $\Omega(\log n)$ follows by analyzing the same construction used in the proof of Proposition~\ref{prop:btl-rvol-lb}. Starting from the root $v$ of $G$, any algorithm that queries nodes only up distance $d \leq k - 1$ can be simulated by Alice and Bob without communication. Thus, such an algorithm cannot solve disjointness (hence $\BTL$) with probability bounded away from $1/2$.

  Finally, the lower bounds of Proposition~\ref{prop:btl-rvol-lb} are tight, as all LCLs trivially have $\RVOL,\allowbreak \DVOL = O(n)$.
\end{proof}


\section{\boldmath Hierarchical \texorpdfstring{$2 \frac 1 2$}{2 1/2} Coloring}
\label{sec:hierarchical-coloring}

In this section, we describe a variant of the family of ``hierarchical $2 \frac 1 2$ coloring problems'' introduced by \citet{Chang2019}. Like the original problem, our variant, $\HTHC(k)$, has randomized and deterministic distance complexities $\Theta(n^{1/k})$. We will show that the problem has randomized volume complexity $O(n^{1/k} \log^{O(k)}(n))$, and deterministic volume complexity $\Omega(n / \log n)$.

Like the problems $\LeafColoring$ and $\BTL$, the input labels $\HTHC(k)$ induce a pseudo-forest structure on (a subgraph of) the input graph $G$. In the case of $\HTHC(k)$, the individual pseudo-trees have the following hierarchical structure: At level $1$ of the hierarchy connected components consist of directed paths and cycles. At a level $\ell > 1$, each node is the parent of the ``root'' of a level $\ell - 1$ component. To solve $\HTHC(k)$, each node $v$ produces an output color $\chout(v) \in \set{R, B, D, X}$. Nodes outputting $D$ are said to \dft{decline}, and nodes outputting $X$ are said to be \dft{exempt}. Nodes are only allowed to be exempt under certain locally checkable conditions, described below. Upon removal of exempt nodes, the nodes in each connected component at each level of the hierarchy are required to output $R$, $B$, or $D$ unanimously, with each ``leaf'' outputting its input color or $D$ if $\ell < k$.

\begin{figure}[t]
  \centering
  \includegraphics[scale=0.75]{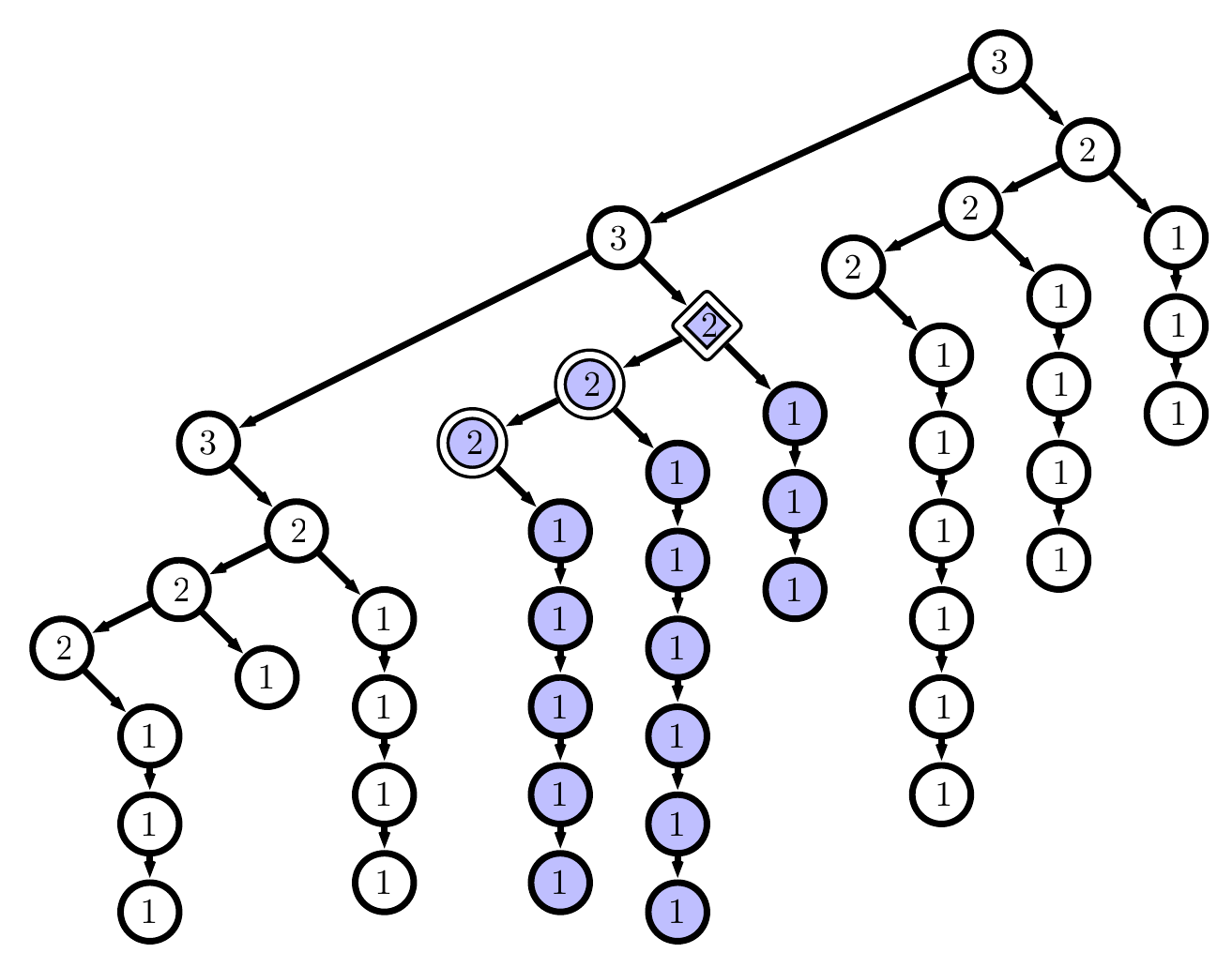}
  \caption{An example of the hierarchical forest induced by a tree labeling. Edges are oriented from parent to child. Left-diagonal and vertical edges indicate that a node is a left child, while right-diagonal edges indicate right children. The number in each node gives the node's level. The shaded region indicates a subtree $H_2$ up to hierarchy level~$2$. The level $2$ root in $H_2$ is indicated by a diamond-shaped node. The double struck node in $H_2$ (at level exactly $2$) comprise a level~$2$ backbone, $C_2$. The lower left node in $C_2$ is a level~$2$ leaf.}
  \label{fig:hf}
\end{figure}

\begin{dfn}
  Let $\calL$ be a (colored) tree labeling. Let $G'$ be the subgraph of $G$ consisting of edges $\set{u, v}$ where $u = \parent(v)$ and $v = \lc(u)$ or $\rc(u)$. The \dft{level} of a node $v$, denoted $\lvl(v)$, is defined inductively as follows: If $\rc(v) = \bot$, then $\lvl(v) = 1$. Otherwise, $\lvl(v) = 1 + \lvl(\rc(v))$. The \dft{hierarchical forest to level $k$}, denoted $G_k = (V_k, E_k)$, is the sub-(pseudo)-forest of $G'$ consisting of edges $\set{u, v}$ with $\lvl(u), \lvl(v) \leq k$ satisfying one of the following properties:
  \begin{itemize}
  \item $v = \parent(u)$, $u = \lc(v)$, and $\lvl(v) = \lvl(u)$, or
  \item $v = \parent(u)$, $u = \rc(v)$, and $\lvl(v) = \lvl(u) + 1$.
  \end{itemize}
\end{dfn}

\begin{obs}
  \label{obs:hf-local}
  The hierarchical forest to level $k$, $G_k$, is locally computable in the sense that each node $v$ can determine $\lvl(v)$, and which of its incident edges are in $G_k$ by examining its $O(k)$-radius neighborhood. Moreover, we assume without loss of generality that every non-$\bot$ label $\parent(v)$, $\lc(v)$, and $\rc(v)$ corresponds to an edge in $G_k$. That is, for example, we have $\set{v, \parent(v)} \in E_k$. 
\end{obs}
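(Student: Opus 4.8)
The plan is to establish the two claims of Observation~\ref{obs:hf-local} separately: local computability of levels and of membership in $G_k$, and then the ``clean labels'' normalization.

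\emph{Local computation of levels.} The key observation is that $\lvl(v)$ is read off from the chain of right children $v_0 = v$, $v_{i+1} = \rc(v_i)$: by the inductive definition, $\lvl(v_i) = \lvl(v) - i$, so if this chain first reaches a node with $\rc = \bot$ after exactly $j$ hops then $\lvl(v) = j + 1$. I would have a node follow this chain, but truncate it after $k$ hops: since every edge of $G_k$ has both endpoints of level at most $k$, if no node with $\rc = \bot$ appears within $k$ steps then $\lvl(v) > k$, the node lies in no edge of $G_k$, and it may report ``level $> k$'' and output a trivial label. This truncation also sidesteps the possibility that the right-child pointers contain a cycle. In all cases the node stays within its radius-$(k+1)$ neighborhood. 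To decide whether an incident edge $\set{v, w}$ belongs to $E_k$, the node checks with $O(1)$ queries which of the parent/left-child/right-child relations holds between $v$ and $w$, computes $\lvl(v)$ and $\lvl(w)$ as above (the latter by walking $w$'s right-child chain, which remains within distance $O(k)$ of $v$), and then verifies the level constraints in the definition of $G_k$ --- $\lvl(v) = \lvl(w)$ for a left-child edge, $\lvl(v) = \lvl(w)+1$ for a right-child edge, both at most $k$. Everything happens inside the radius-$O(k)$ ball around $v$, which proves the first assertion.

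\emph{Normalization.} For the ``moreover'' I would argue exactly as in Remark~\ref{rm:tree-labeling-well-formed}: using the procedure above, each node tests each of its non-$\bot$ labels $\parent(v)$, $\lc(v)$, $\rc(v)$ for membership of the corresponding edge in $E_k$ and resets any failing label to $\bot$; this costs $O(k)$ queries and so is absorbed for free into any algorithm with radius $\Omega(k)$. The point that needs care --- and which I expect to be the main obstacle in the full write-up --- is that this really is without loss of generality: $E_k$ is computed once, from the original (well-formed) labels, and zeroing a label of a node $v$ can change $\lvl(v)$, which threatens to cascade. The resolution is that any node whose label gets zeroed had a non-reciprocated pointer, i.e., sat at a structural boundary of the hierarchical forest; such nodes carry no constraint (or only the trivial/exempt one) in $\HTHC(k)$, exactly as inconsistent nodes do for $\LeafColoring$. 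Hence the sub-pseudo-forest $G_k$ on which the problem genuinely imposes its requirements is unchanged, while after the normalization every non-$\bot$ label corresponds to an edge of $G_k$, which is all the later sections rely on.
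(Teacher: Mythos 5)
The paper states this observation without any proof, so there is nothing to compare line by line; your first paragraph supplies exactly the argument the authors evidently intend, and it is correct: $\lvl(v)$ is read off the right-child chain followed for at most $k$ hops (declaring ``level $>k$'' otherwise, which also disposes of cyclic right-child chains), each hop is a $G$-edge, and membership of an incident edge in $E_k$ reduces to $O(1)$ reciprocation checks plus two such level computations, all inside the radius-$O(k)$ ball.

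The ``moreover'' part of your write-up has a genuine flaw. Your justification rests on the claim that a label is reset only when it is a non-reciprocated pointer. That is false: a fully reciprocated left-child edge $\set{v, \lc(v)}$ is excluded from $E_k$ whenever $\lvl(\lc(v)) \neq \lvl(v)$ (take $\rc(v) = \bot$, so $\lvl(v) = 1$, while $\lc(v)$ has a reciprocated right child and hence level $2$), and reciprocated edges whose endpoints have level exceeding $k$ are excluded as well. Consequently the follow-up claim that the affected nodes ``carry no constraint'' in $\HTHC(k)$ does not follow as stated; read literally, Definition~\ref{dfn:hthc} together with Definition~\ref{dfn:l-root-leaf} requires $\chout(v) = \chout(\lc(v))$ whenever $\lc(v) \neq \bot$, so a raw label across a non-$E_k$ edge would formally bind nodes of different levels or different components. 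Unlike $\LeafColoring$, whose validity conditions explicitly neutralize inconsistent nodes, nothing in Definition~\ref{dfn:hthc} does this for you, so the analogy cannot be invoked as a proof step. The honest content of the ``without loss of generality'' is a definitional convention: the constraints of $\HTHC(k)$ are meant to bind only along edges of $G_k$, equivalently one restricts attention to inputs all of whose non-$\bot$ labels lie in $E_k$ (as do all instances constructed in the paper). What legitimizes the convention---keeping $\HTHC(k)$ an LCL and leaving every complexity bound unaffected---is precisely your first paragraph: each node decides in radius $O(k) = O(1)$ which of its labels correspond to $E_k$-edges, so ignoring the rest costs $O(1)$ and is locally checkable.

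Your cascading worry is real, and your fix is the right one: $\lvl$ and $E_k$ must be taken as computed once from the original labeling, with the reset only changing which labels are heeded; if one instead recomputed levels from the reset labels, zeroing $\rc(v)$ would drop $\lvl(v)$ to $1$ and could invalidate edges that were in $E_k$, so that version of the normalization is not even idempotent. Keep that framing, but replace the erroneous ``only non-reciprocated pointers get reset'' characterization with the definitional reading above.
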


\begin{dfn}
  \label{dfn:l-root-leaf}
  Suppose $v$ is a vertex with $\lvl(u) = \ell$. Then we call $v$ a \dft{level $\ell$ root} if $\parent(u) = \bot$ or $u = \rc(\parent(u))$ (and hence $\lvl(\parent(u)) = \ell + 1$). We call $u$ a \dft{level $\ell$ leaf} if $\lc(v) = \bot$.
\end{dfn}

\begin{obs}
  \label{obs:hf-structure}
  Since each node in $G_k$ has at most one parent, $G_k$ is a pseudo-forest. Moreover, it has the following structure: For every $\ell \leq k$, each connected component of $G_k$ consisting of nodes $v$ with $\lvl(v) = \ell$ is a path or cycle, and every (directed) edge is of the form $(v, \lc(v))$. If $\ell = 1$, then for all such $v$ we have $\rc(v) = \bot$. If $1 < \ell \leq k$, then each $\rc(v)$ is the level $\ell - 1$ root of a (directed) subtree of $G_k$. Finally, if $\ell > k$, then $v$ is an isolated vertex in $G_k$.
\end{obs}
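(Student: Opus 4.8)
The plan is to verify the claimed structure of $G_k$ by a careful bookkeeping of how the two admissible edge types of $E_k$ relate the levels of their endpoints; once this is in hand, everything follows from the single fact that each node has at most one parent. First I would record the in-degree observation: since the tree labeling is well-formed, $\lc(v) \neq \rc(v)$ whenever both are non-$\bot$, and $\parent(u)$ is a single port, so each node $u$ is the child endpoint of at most one edge of $G_k$ --- the edge to $\parent(u)$ --- and at most one of the cases ``$u = \lc(\parent(u))$ at the same level'' or ``$u = \rc(\parent(u))$ one level down'' can apply. Hence in every weakly connected component of $G_k$ the number of edges is at most the number of vertices, so the component contains at most one cycle, which is the pseudo-forest claim. (If one worries that the recursive definition of $\lvl$ might diverge, one restricts to inputs on which $\lvl$ is finite, enforced by local preprocessing as in Remark~\ref{rm:tree-labeling-well-formed}; by the convention of Observation~\ref{obs:hf-local} every non-$\bot$ label is then an edge of $G_k$.)

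Next I would treat the level-$\ell$ subgraph. Directly from the definition of $E_k$, an edge joining two nodes of the \emph{same} level $\ell$ must be of the first type, $(v, \lc(v))$ with $\parent(\lc(v)) = v$, because the second type forces a level gap of exactly one. Each level-$\ell$ node is the tail of at most one such edge (to $\lc(v)$, if any) and, by the in-degree observation, the head of at most one such edge; so the subgraph induced on level-$\ell$ vertices has maximum in- and out-degree $1$ and is therefore a disjoint union of simple directed paths and cycles, with every edge of the form $(v, \lc(v))$ --- exactly the assertion.

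The two sub-cases add only information about $\rc$. For $\ell = 1$: a trivial induction shows $\lvl(v) = 1 + \lvl(\rc(v)) \geq 2$ whenever $\rc(v) \neq \bot$, so $\lvl(v) = 1$ forces $\rc(v) = \bot$, straight from the definition of $\lvl$. For $1 < \ell \leq k$: if $\lvl(v) = \ell$ then $\rc(v) \neq \bot$ and $\lvl(\rc(v)) = \ell - 1 \geq 1$; by Observation~\ref{obs:hf-local} the pair $\{v, \rc(v)\}$ is an edge of $G_k$, and since $\rc(v) \neq \lc(v)$ it is an edge of the second type, so $\parent(\rc(v)) = v$ and hence $\rc(v) = \rc(\parent(\rc(v)))$, which is precisely the defining condition of Definition~\ref{dfn:l-root-leaf} for $\rc(v)$ to be a level-$(\ell-1)$ root. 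To justify that $\rc(v)$ sits atop a ``subtree'', I would note that along any directed edge of $G_k$ the level is non-increasing (preserved on $\lc$-edges, dropping by one on $\rc$-edges), so a directed cycle, having zero net level change, consists only of $\lc$-edges at a single level; in particular $\rc(v)$, whose only incoming edge is the $\rc$-edge from $v$ (an incoming $\lc$-edge would have to issue from $\parent(\rc(v)) = v$, forcing $\rc(v) = \lc(v)$), cannot lie on a cycle, and the descendants of $\rc(v)$ form the connected sub-pseudo-tree of $G_k$ rooted there, in the sense of Observation~\ref{obs:leaf-coloring-pforest}. Finally, for $\ell > k$: membership of any edge in $E_k$ requires both endpoints to have level at most $k$, so no edge of $E_k$ touches $v$ and $v$ is isolated in $G_k$.

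The hard part, such as it is, is purely organizational: getting the direction and type of each $E_k$-edge straight in the case analysis above, since once that is settled the path/cycle structure and the pseudo-forest property are immediate from ``at most one parent.'' The only genuinely delicate points are the well-definedness of the recursive level function and the precise meaning of ``subtree'' in a pseudo-forest, both of which I would dispose of with short remarks rather than extended arguments.
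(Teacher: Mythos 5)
Your proposal is correct: the paper states this as an unproved observation that is meant to follow directly from the definitions of $\lvl$, $E_k$, and the well-formedness/convention of Observation~\ref{obs:hf-local}, and your argument is exactly that direct verification (in-degree at most one from the unique parent, classification of the two edge types by level, and the level-monotonicity argument ruling out cycles through $\rc(v)$), just written out in more detail than the paper bothers to. No gaps worth flagging; the brief glosses (well-definedness of $\lvl$, why the edge $\set{v,\rc(v)}$ must be of the second type, and why no descendant of $\rc(v)$ lies on a cycle) are all patched by the well-formedness assumption and the in-degree bound you already established.
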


\begin{dfn}
  \label{dfn:hthc}
  For any fixed constant $k \in \N$, the problem $\HTHC(k)$ consists of the following:
  \begin{description}
  \item[Input:] a colored tree labeling $\calL$
  \item[Output:] for each $v \in V$, a color $\chout(v) \in \set{R, B, D, X}$
  \item[Validity:] for each $v \in V$, and $\ell = \lvl(v)$
    \begin{enumerate}
    \item if $\ell > k$, then $\chout(v) = X$
    \item if $v$ is a level $\ell$ leaf then $\chout(v) \in \set{\chin(v), D, X}$
    \item if $\ell = 1$ then
      \begin{enumerate}
      \item $\chout(v) \in \set{R, B, D}$, and
      \item if $v$ is not a level $1$ leaf, then $\chout(v) = \chout(\lc(v))$
      \end{enumerate}
    \item if $1 < \ell < k$ and $v$ is not a level $\ell$ leaf then either
      \begin{enumerate}
      \item $\chout(v) = \chout(\lc(v)) \in \set{R, B, D}$,
      \item $\chout(v) = X$ and $\chout(\rc(v)) \in \set{R, B, X}$, or
      \item $\chout(v) \in \set{\chin(v), D}$ and $\chout(\lc(v)) = X$
      \end{enumerate}
    \item if $\ell = k$ then $\chout(v) \in \set{R, B, X}$ and 
      \begin{enumerate}
      \item if $\chout(v) = X$ then $\chout(\rc(v)) \in \set{R, B, X}$, and
      \item if $v$ is not a level $\ell$ leaf and $\chout(v) \neq X$, then either
        \begin{itemize}
        \item $\chout(\lc(v)) \neq X$ and $\chout(v) = \chout(\lc(v))$, or
        \item $\chout(\lc(v)) = X$ and $\chout(v) = \chin(v)$
        \end{itemize}
      \end{enumerate}
    \end{enumerate}
  \end{description}
\end{dfn}

The following observation gives some intuition about valid outputs of $\HTHC(k)$.

\begin{obs}
  \label{obs:hthc-structure}
  Consider a valid output for $\HTHC$. Then Conditions~2 and~3 imply that each connected component of level 1 vertices in $G_k$ is unanimously colored either $D$ or $\chin(u)$ where $u$ is the (unique) level 1 leaf in the connected component. Similarly, Conditions~2 and~4 characterize valid colorings of each connected component of $G_k$ at levels $\ell$ satisfying $1 < \ell < k$, although components are no longer required to output unanimous colors. Instead, Condition 4(b) allows nodes $v$ to ``choose'' to output $X$ if $\rc(v)$ outputs a color in $\set{R, B, X}$. However, Conditions~4(a) and~4(c) require that nodes that are not allowed to choose $X$ must either output $\chout(\lc(v))$, or $\chin(v)$ (if $\chout(\lc(v)) = X$). Finally, Condition~5 restricts valid outputs at level $k$. By Condition~5(a) a level $k$ node $v$ is only allowed to output $X$ if $\chout(\rc(v)) \in \set{R, B}$. Meanwhile, Conditions~2 and~5(b) stipulate that on a path between $X$'s at level $k$, all nodes output $\chin(u)$, where $u$ is the parent of the left vertex outputting $X$.
\end{obs}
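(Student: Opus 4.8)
The observation is essentially a transparent reformulation of Definition~\ref{dfn:hthc}, and the plan is to derive it one hierarchy level at a time, using the structural description of $G_k$ from Observation~\ref{obs:hf-structure}. In each case that observation tells us that the vertices of a fixed level in a component of $G_k$ form a directed path or cycle whose edges are exactly the pairs $(v,\lc(v))$, and all of the work reduces to a short induction that propagates colors along such a path.

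For level~$1$: in a path component the unique endpoint with $\lc(v)=\bot$ is the level-$1$ leaf $u$; Condition~2 together with Condition~3(a) forces $\chout(u)\in\set{\chin(u),D}$ (using that $\chin(u)\in\set{R,B}$), and Condition~3(b) forces $\chout(v)=\chout(\lc(v))$ at every non-leaf $v$, so induction on the distance to $u$ gives that the whole component outputs $\chout(u)$. For a cycle component there is no leaf, but the same Condition~3(b) argument around the cycle gives a constant color lying in $\set{R,B,D}$; I would state this case explicitly, since the observation as written presumes a leaf. For a level $\ell$ with $1<\ell<k$ nothing beyond reading Definition~\ref{dfn:hthc} is needed: along a level-$\ell$ backbone, Conditions~4(a)--(c) form an exhaustive case split on the pair $(\chout(v),\chout(\lc(v)))$ at a non-leaf $v$ (copy a non-$X$ color of $\lc(v)$; become $X$ provided the level-$(\ell-1)$ root $\rc(v)$ is colored $R$, $B$, or $X$; or, when $\chout(\lc(v))=X$, take $\chin(v)$ or $D$), while Condition~2 pins down the level-$\ell$ leaf. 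This is precisely the claimed characterization, with unanimity no longer forced because of~4(b).

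For level~$\ell=k$: $G_k$ restricted to level-$k$ vertices is again a path or cycle of $(v,\lc(v))$-edges, and Condition~5 constrains every such node to $\set{R,B,X}$. Consider two nodes $x,x'$ that both output $X$ and are consecutive along the directed backbone, with $x'$ the one appearing later (equivalently, the more ``left'' one), and let $u=\parent(x')$, so that $x'=\lc(u)$. Since $u$ lies strictly between the two consecutive $X$-nodes it does not output $X$, so Condition~5(b) applied with $\chout(\lc(u))=X$ gives $\chout(u)=\chin(u)\in\set{R,B}$; walking back from $u$ toward $x$, every further node $v$ has $\chout(\lc(v))\neq X$, so Condition~5(b) forces $\chout(v)=\chout(\lc(v))$, and an induction shows that every node strictly between $x$ and $x'$ outputs $\chin(u)$ — the stated conclusion, with $u=\parent(x')$ the parent of the left endpoint outputting $X$. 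Condition~5(a) records directly the constraint on when a level-$k$ node is permitted to output $X$. There is no genuine obstacle here; the only care needed is bookkeeping — keeping the orientation of $G_k$-edges straight so that $\lc(v)$ is the ``next'' node and $\parent(v)$ the ``previous'' one along a backbone, exploiting that $\chin(\cdot)\in\set{R,B}$ so the copy-propagation in Conditions~3(b) and~5(b) never stalls on a $D$ or $X$, and explicitly covering the degenerate cases (cycle components, and level-$k$ backbones containing no $X$ at all, where the conclusion becomes a single color in $\set{R,B,D}$ or $\set{R,B}$ respectively).
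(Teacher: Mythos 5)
Your proposal is correct, and it follows exactly the route the paper intends: the observation is stated without proof as a direct unwinding of Definition~\ref{dfn:hthc} along the path/cycle backbones described in Observation~\ref{obs:hf-structure}, which is precisely what you carry out (Conditions 2 and 3(a)--(b) with induction toward the level-1 leaf, the case split of Condition 4 at intermediate levels, and Condition 5(b) propagated back from the parent of the left $X$-vertex at level $k$). Your explicit handling of cycle components and of $X$-free level-$k$ backbones is a reasonable tightening of the observation's wording, not a deviation in approach.
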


\begin{figure}[t]
  \centering
  \begin{minipage}{0.45\textwidth}
    \includegraphics[scale=0.75]{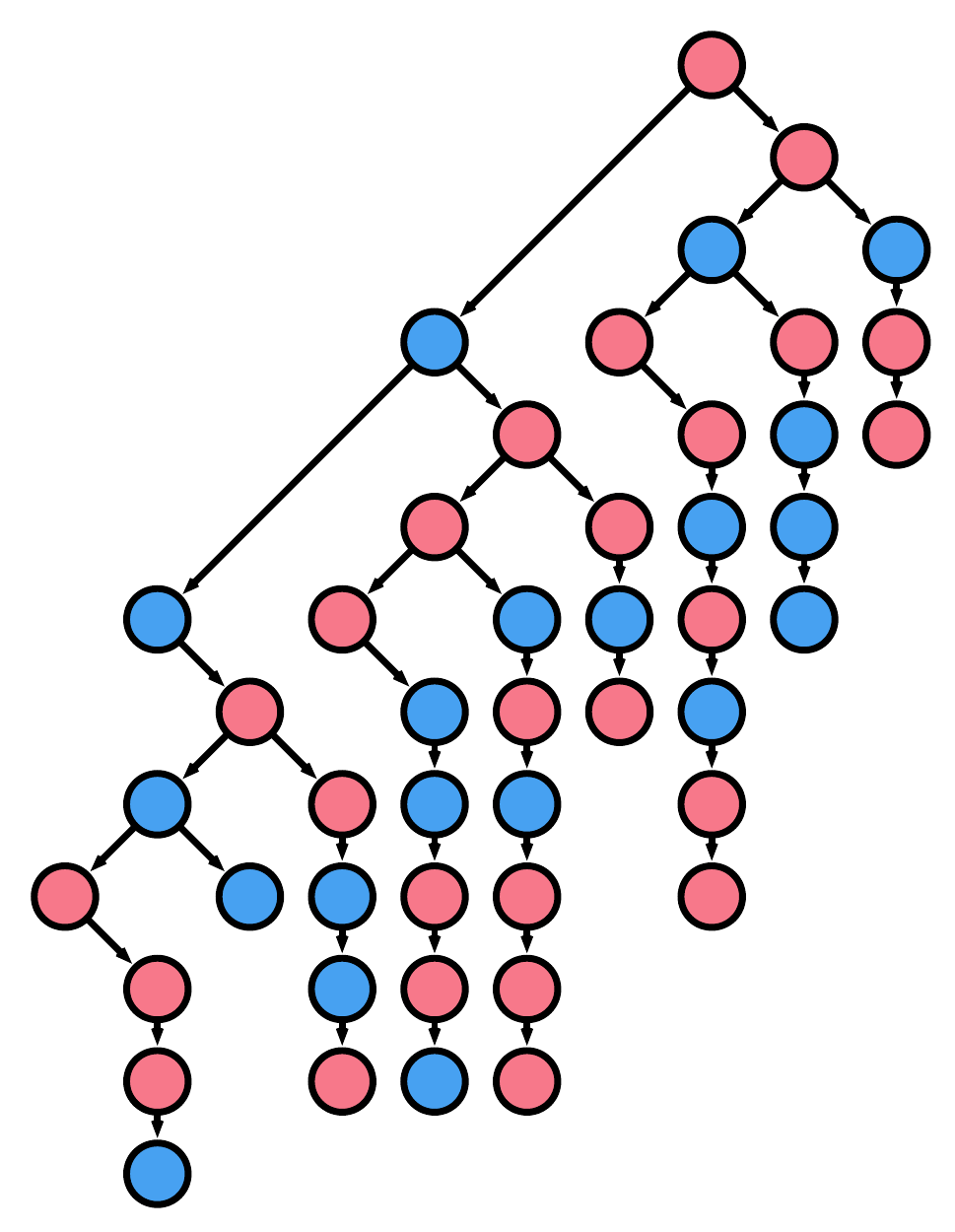}
  \end{minipage}
  \begin{minipage}{0.45\textwidth}
    \includegraphics[scale=0.75]{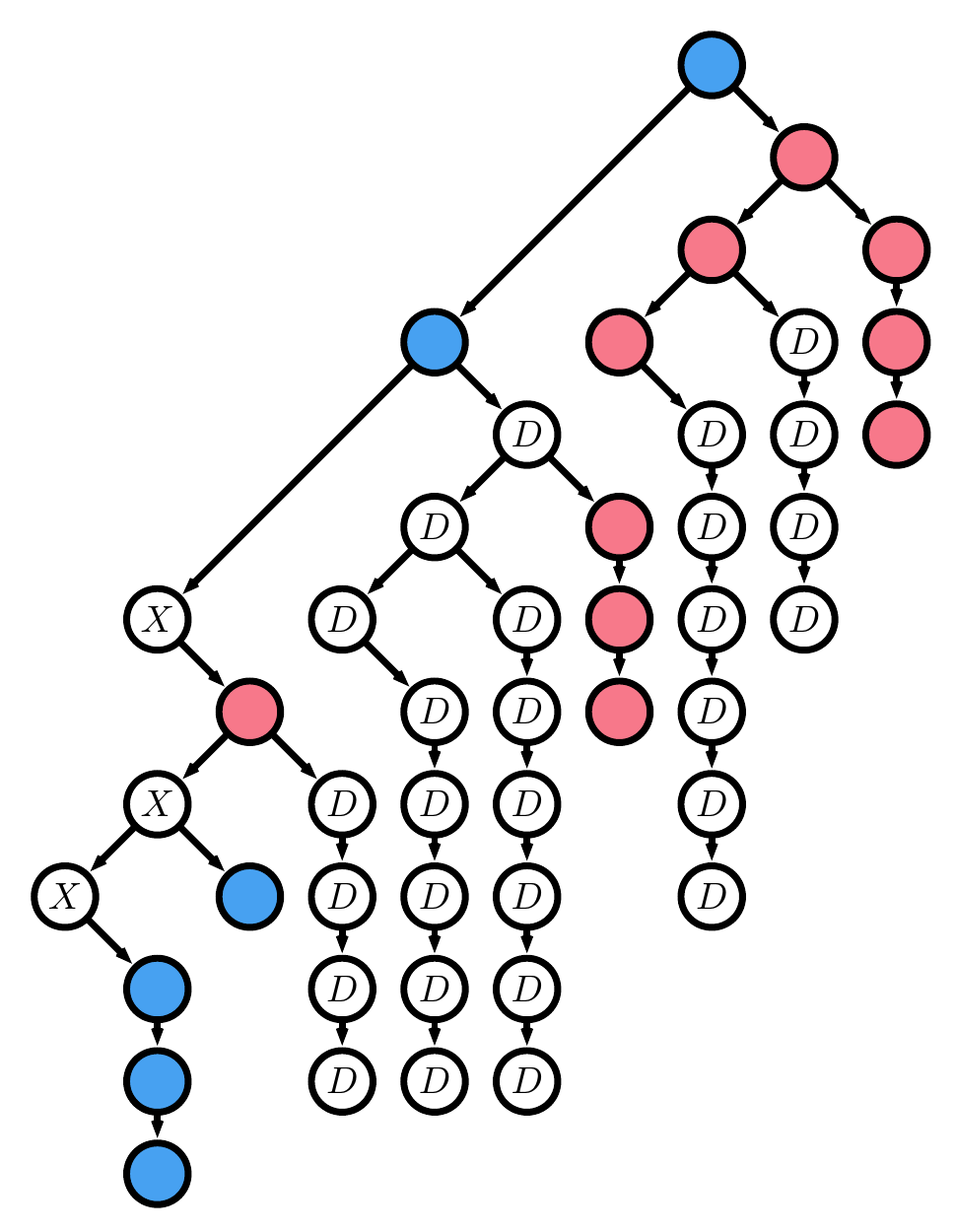}
  \end{minipage}
  \caption{Input (left) and valid output (right) for $\HTHC(3)$.}
  \label{fig:h-thc-input-output}
\end{figure}

\begin{rem}
  \label{rem:hthc-vs-chang}
  Our problem $\HTHC$ differs from the version of hierarchical $2 \frac 1 2$ coloring described by Chang and Pettie~\cite{Chang2019} in two respects. First, we require that connected components of non-exempt vertices are unanimously colored $R$, $B$, or $D$, whereas in~\cite{Chang2019}, such components must either be unanimously colored $D$ or \emph{properly} colored by $R$ in $B$ (i.e., an $R$ node's non-exempt neighbors must output $B$). By using unanimous (rather than proper) colorings in all cases, our version of $\HTHC$ allows us to impose more restrictions on valid outputs by designating input colors of nodes. This is helpful, for example, in the proof of Proposition~\ref{prop:hthc-vol-lb} (the deterministic volume lower bound), where the main claim in the proof relies on unanimous coloring of components. The second difference between $\HTHC$ and that of~\cite{Chang2019} is that in the latter problem, a node $v$ with $\chout(\rc(v)) \neq D$ is \emph{required} to output $X$, whereas our Conditions~4(b) and~5(a) merely allow $v$ to output $X$ if $\rc(v) \neq D$. Our relaxation of the exemption conditions does not affect the distance complexity of the problem, however our modification seems necessary in order for the problem to have small volume complexity.
\end{rem}

The following lemma is clear from previous discussion.

\begin{lem}
  \label{lem:hthc-lcl}
  For every fixed constant $k \in \N$, $\HTHC(k)$ is an LCL.
\end{lem}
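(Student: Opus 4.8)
The plan is to verify directly that the validity conditions of Definition~\ref{dfn:hthc} are locally checkable with a checking radius $c = O(k)$; since $k$ is a fixed constant this is $O(1)$, which is exactly what the definition of an LCL requires. First I would invoke Observation~\ref{obs:hf-local}: the hierarchical forest $G_k$ is locally computable, so every node $v$ can determine $\lvl(v)$ and which of its incident edges lie in $E_k$ by inspecting $N_v(O(k))$. In particular the labels $\parent(v)$, $\lc(v)$, $\rc(v)$ correspond to edges of $G_k$ within distance $O(k)$ of $v$, and consequently whether $v$ is a level~$\ell$ root or a level~$\ell$ leaf in the sense of Definition~\ref{dfn:l-root-leaf} is also determined by $N_v(O(k))$, since those notions refer only to $\lvl(v)$, $\parent(v)$, $\lc(v)$, and $\rc(v)$.

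Next I would walk through Conditions~1--5 of Definition~\ref{dfn:hthc} one at a time and observe that each refers only to data visible within a constant-radius ball around $v$: the level $\ell = \lvl(v)$, the input color $\chin(v)$, the output color $\chout(v)$, and the output colors $\chout(\lc(v))$ and $\chout(\rc(v))$ of the at most two children of $v$ in $G_k$. Since $\lc(v)$ and $\rc(v)$ lie within distance $O(k)$ of $v$, and each of Conditions~1--5 is a Boolean combination of the quantities just listed, the conjunction ``validity holds at $v$'' is a predicate on $(N_v(c), \calI|_{N_v(c)}, \calO|_{N_v(c)})$ for a suitable constant $c = O(k)$.

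Finally I would conclude: with this $c$, we have $(G, \calI, \calO) \in \HTHC(k)$ if and only if validity holds at every $v \in V$, and each such local condition is precisely the statement $(N_v(c), \calI|_{N_v(c)}, \calO|_{N_v(c)}) \in \HTHC(k)$; hence $\HTHC(k)$ satisfies the definition of an LCL. I do not expect any genuine obstacle here—the proof is routine bookkeeping. The only mild subtlety is that ``child in $G_k$'' must itself be locally identifiable, but that is exactly the content of Observation~\ref{obs:hf-local}, which we take as given, so the argument is a direct inspection of the finitely many clauses in Definition~\ref{dfn:hthc}.
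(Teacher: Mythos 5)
Your proposal is correct and matches the paper's intent: the paper dispenses with this lemma as ``clear from previous discussion,'' meaning precisely the combination of Observation~\ref{obs:hf-local} (levels and $G_k$-edges are determined within radius $O(k)$) with the fact that each clause of Definition~\ref{dfn:hthc} only references $v$, $\lc(v)$, and $\rc(v)$, which is exactly the verification you spell out. Since $k$ is a fixed constant, your checking radius $c = O(k) = O(1)$ is what the LCL definition requires, so there is nothing to add.
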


We now state the main result of this section.

\begin{lthm}
  \label{thm:hthc}
  For each fixed positive integer $k$, the complexity of $\HTHC(k)$ satisfies
  \begin{equation*}
    \begin{split}
      \RDIST(\HTHC(k)) &= \Theta(n^{1/k}),\\
      \DDIST(\HTHC(k)) &= \Theta(n^{1/k}),\\
      \RVOL(\HTHC(k)) &= O(n^{1/k} \log^{O(k)}(n)),\\
      \DVOL(\HTHC(k)) &= \Omega(n / k \log n).
    \end{split}
  \end{equation*}
\end{lthm}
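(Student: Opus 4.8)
The plan is to establish the four bounds separately, reusing the proof ideas that worked for $\LeafColoring$ and $\BTL$ but now adapted to the $k$-level hierarchical structure.

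\textit{Distance bounds.} For the upper bound on $\DDIST$ and $\RDIST$, I would argue that in a balanced instance a level-$k$ root sits at distance $\Theta(n^{1/k})$ from the nearest level-$k$ leaf along its backbone, and more generally that in $O(n^{1/k})$ distance a node can gather enough of its hierarchical component to determine a valid output. Concretely: a level-$\ell$ component (path or cycle of $\ell$-level nodes) through $v$ together with the immediately hanging level-$(\ell-1)$ roots can be explored, and by Conditions~3--5 the correct output at $v$ is a deterministic function of whether exempt nodes appear and what input colors they carry; one shows by induction on $\ell$ that seeing distance $O(\ell \cdot n^{1/\ell})$ suffices when the relevant component has size $O(n^{1/\ell})$ at each level, and the worst case (a perfectly balanced tree on $n$ nodes with all $k$ levels of length $\sim n^{1/k}$) gives $O(n^{1/k})$. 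For the matching $\Omega(n^{1/k})$ lower bound I would use a hard instance that is a balanced hierarchical tree (analogous to the $\BTL$ construction): make all level-$k$ leaves share one input color $\chi_0 \in \set{R,B}$ chosen uniformly, so that by Condition~5(b) and the unanimity forced by Conditions~3--4 the level-$k$ root is forced to output $\chi_0$; an algorithm seeing distance $o(n^{1/k})$ from the root cannot reach any leaf, so it errs with probability $1/2$, and Yao's principle finishes it. This essentially quotes the distance complexity of Chang--Pettie's original hierarchical $2\tfrac12$-coloring \cite{Chang2019}; Remark~\ref{rem:hthc-vs-chang} notes our modifications do not affect distance.

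\textit{Randomized volume upper bound.} Here is the heart of the construction. I would give a recursive randomized algorithm $\RTHC$: to evaluate $v$ at level $\ell$, first determine $\lvl(v)$ and whether $v$ is a level-$\ell$ leaf with $O(\ell)$ queries; then perform a random-walk-to-endpoint along the level-$\ell$ backbone (using each node's private randomness to pick the direction, exactly as in $\RWtoLeaf$, with the one-cycle escape trick of Algorithm~\ref{alg:rw-to-leaf}), which by Lemma~\ref{lem:neg-binom} reaches a level-$\ell$ leaf (or an exempt node) in $O(\log n)$ steps with high probability --- crucially, this uses that the relaxation in Conditions~4(b)/5(a) \emph{allows} rather than \emph{requires} the $X$ output, so the walk does not need to verify anything about sibling subtrees. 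When the walk needs to know the output of some $\rc(w)$ (a level-$(\ell-1)$ root) to decide whether $w$ may output $X$, recurse. The recursion has depth $k$, and at each level only $O(\log n)$ nodes of the backbone are touched plus one recursive call per touched node, giving a recurrence $V(\ell) = O(\log n)\cdot(O(1) + V(\ell-1))$ with $V(1) = O(\log n)$, i.e.\ $V(k) = O(\log^k n)$ --- wait, that is too small; the point is that the walk at the \emph{top} level has length $\Theta(n^{1/k})$ in the worst balanced case, not $O(\log n)$, because to certify that it is legitimately at a level-$k$ leaf it must traverse the actual backbone which may have length $n^{1/k}$. So more carefully: the bound is $O(n^{1/k})$ backbone steps at level $k$ times polylog overhead per recursive descent, yielding $\RVOL = O(n^{1/k}\log^{O(k)} n)$ as claimed. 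I would prove correctness by induction on $\ell$ (all nodes on a backbone reach the same endpoint, hence output consistently, matching Observation~\ref{obs:hthc-structure}) and the volume bound by a union bound over the $O(k)$ random-walk invocations using Lemma~\ref{lem:neg-binom}.

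\textit{Deterministic volume lower bound.} I would combine the adversary argument of Proposition~\ref{prop:leaf-coloring-dvol-lb} with the communication-complexity technique: given a deterministic $A$ making $q \ll n/\log n$ queries from the level-$k$ root, adaptively build a hierarchical tree in which every level has length $\sim \log n$ except that $A$'s queries never reach a level-$k$ leaf; because components must be colored \emph{unanimously} (this is exactly why Remark~\ref{rem:hthc-vs-chang} switched from proper to unanimous colorings), the root's output is forced by the leaves it never sees, so we can set leaf input colors to contradict the output. Counting: a balanced hierarchical tree with all $k$ backbone lengths $\Theta(\log n)$ has $n = \Theta(\log^k n)$ --- that is too few nodes, so instead the adversary pads: only the top backbone is forced to be long ($\Theta(n/\log n)$ unexplored leaf-bearing paths), while lower levels are short, giving $n = \Theta((n/\log n)\cdot \log n)$ leaves that $A$ must query to pin down the root, hence $\DVOL = \Omega(n/(k\log n))$. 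I expect the \textbf{main obstacle} to be the bookkeeping in this last argument: making the adversary simultaneously (i) keep the instance a valid $\HTHC(k)$ input where the root's output is genuinely forced, (ii) ensure $A$'s $o(n/\log n)$ queries miss a leaf in \emph{every} subtree whose color the root depends on, and (iii) get the node count to work out to exactly $\Theta(n/(k\log n))$ rather than something polylog-off. The unanimity property (Remark~\ref{rem:hthc-vs-chang}) is what makes (i) tractable, and the $1/(k\log n)$ factor reflects the $O(k)$-query local overhead and the $\log n$ length of the shortest backbone needed to hide a leaf.
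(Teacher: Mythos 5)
There is a genuine gap in the randomized volume upper bound. Your plan transplants the $\RWtoLeaf$ random walk from $\LeafColoring$ onto the level-$\ell$ backbones and recurses into $\rc(w)$ for every backbone node the walk needs to certify, but this cannot give $O(n^{1/k}\log^{O(k)}n)$: backbones at \emph{every} level (not just the top one) can have length $\Theta(n^{1/k})$ in a balanced instance, and a walk along a path does not terminate in $O(\log n)$ steps (the negative-binomial argument of Lemma~\ref{lem:neg-binom} worked for $\LeafColoring$ only because each step halves a subtree; on a path there is no such contraction). If you traverse $\Theta(n^{1/k})$ backbone nodes per level and make a recursive call per traversed node, the recursion costs $\Theta(n^{\ell/k})$ at level $\ell$, i.e.\ $\Theta(n)$ overall, which is exactly the deterministic behaviour of $\RTHC$ and is why that algorithm is only \emph{distance}-efficient. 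The missing ingredient is the paper's way-point sampling: each node uses its private randomness to become a way-point with probability $p=(c\log n)/n^{1/k}$, recursive calls are made \emph{only} at way-points, Lemma~\ref{lem:waypoint-density} caps the number of recursive calls per $O(n^{1/k})$-length segment at $O(\log n)$, and Lemma~\ref{lem:shallow-light} (the shallow/light dichotomy) together with Corollary~\ref{cor:light-parent} guarantees that in a light component some sampled way-point with a \emph{light} right child lies within distance $2n^{1/k}$, so that exempt nodes can validly be placed and the in-between nodes colored; heavy components decline with $D$. This yields the recurrence $V(\ell)=O(n^{1/k})+O(\log n)\cdot V(\ell-1)$, which is where the bound $O(n^{1/k}\log^{O(k)}n)$ actually comes from. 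The same dichotomy is also what your distance upper bound sketch is silent on: for unbalanced inputs you must explain when a node declines ($D$) versus finds a nearby exempt candidate, and that is precisely Lemma~\ref{lem:shallow-light} plus the case analysis in Algorithm~\ref{alg:hthc-dist-ub}.

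The deterministic volume lower bound also has a gap. Unlike $\LeafColoring$, in $\HTHC(k)$ the root's output is \emph{not} simply forced by unseen leaf colors, because nodes have the escape options $X$ and (below level $k$) $D$; "build a tree whose leaves $A$ never queries, then flip the leaf colors" does not by itself produce a violated constraint, and the communication-complexity machinery of Theorem~\ref{thm:query-lb-from-cc} is not what the paper uses here. The paper's argument is an adaptive adversary $\proc$ working in $k$ phases: in phase $k$ it simulates $A$ from an all-blue and an all-red start (so each must output its color or $X$), glues the two components so that validity condition~5(b) forces an exempt node on the connecting path, and locates that node by binary search with $O(\log m)$ further simulations; the exempt node's right child seeds phase $k-1$, where condition~5(a)/4(b) forbids $D$, and so on down a chain $v''_k,\dots,v''_2$ of forced exemptions until at level~1 conditions~2, 3 and 4(b) leave no legal output once a single oppositely-colored leaf is appended. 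The phase structure and the binary search are what make the count $n=O(k\cdot m\log m)$ (hence $m=\Omega(n/(k\log n))$) work out, and they are exactly the bookkeeping your sketch flags as the obstacle but does not supply; unanimity of colors (Remark~\ref{rem:hthc-vs-chang}) is necessary but not sufficient without this mechanism for neutralizing the $X$/$D$ escapes.
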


\subsection{Shallow and Light Components}
\label{sec:hthc-shallow-light}

Before proving the claims of Theorem~\ref{thm:hthc}, we provide some preliminary results on the structure of $G_k$ for any colored tree labeling $\calL$. For the remainder of the section, fix some tree labeling $\calL$, positive integer $k$, and let $G_k$ be the hierarchical forest to level $k$.

\begin{dfn}
  \label{dfn:shallow-light}
  For $\ell \in \N$ with $\ell \leq k$, let $C_\ell$ be a maximal connected component of $G_k$ consisting of nodes $v$ at level $\ell$. We say that $C_\ell$ is \dft{shallow} if $\abs{C_\ell} \leq 2 n^{1/k}$. Otherwise, if $\abs{C_\ell} > 2 n^{1/k}$, we say that $C$ is \dft{deep}.

  Let $H_\ell$ be a connected component of $G_k$ consisting of $C_\ell$ and all of descendants of nodes $v \in C_\ell$ (at all levels $\ell, \ell - 1, \ldots, 1$). We call $H_\ell$ \dft{light} if $\abs{H_\ell} \leq n^{\ell / k}$. Otherwise $H_\ell$ is said to be \dft{heavy}. Similarly, if $v$ is the level $\ell$ root of $H_\ell$, we call $v$ \dft{light} (resp.\ \dft{heavy}) if $H_\ell$ is light (resp.\ heavy).
\end{dfn}

\begin{lem}
  \label{lem:shallow-light}
  Let $C_\ell$ and let $H_\ell$ be as in Definition~\ref{dfn:shallow-light} with $2 \leq \ell \leq k$, and suppose $H_\ell$ is light. Then at most $n^{1/k}$ nodes in $C_\ell$ have heavy right children. 
\end{lem}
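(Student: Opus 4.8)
The plan is a simple counting argument. Recall that $H_\ell$ consists of the backbone $C_\ell$ (a path or cycle of level-$\ell$ nodes) together with all descendants of the nodes in $C_\ell$; each node $v \in C_\ell$ has a right child $\rc(v)$ which is the level-$(\ell-1)$ root of a subtree $H_{\ell-1}^{(v)}$ of $G_k$, and these subtrees are vertex-disjoint and all contained in $H_\ell$. By Definition~\ref{dfn:shallow-light}, $\rc(v)$ is heavy precisely when $\abs{H_{\ell-1}^{(v)}} > n^{(\ell-1)/k}$.

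First I would record the disjointness: since $G_k$ is a pseudo-forest (Observation~\ref{obs:hf-structure}) and each $\rc(v)$ for $v \in C_\ell$ is the root of a distinct level-$(\ell-1)$ subtree hanging off the backbone, the subtrees $\{H_{\ell-1}^{(v)}\}_{v \in C_\ell}$ are pairwise disjoint subgraphs of $H_\ell$. Hence
\[
  \sum_{v \in C_\ell} \abs{H_{\ell-1}^{(v)}} \;\leq\; \abs{H_\ell}.
\]
Now suppose, for contradiction, that more than $n^{1/k}$ nodes $v \in C_\ell$ have a heavy right child. Each such $v$ contributes more than $n^{(\ell-1)/k}$ to the left-hand sum, so
\[
  \abs{H_\ell} \;>\; n^{1/k} \cdot n^{(\ell-1)/k} \;=\; n^{\ell/k},
\]
which contradicts the assumption that $H_\ell$ is light (i.e.\ $\abs{H_\ell} \leq n^{\ell/k}$). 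Therefore at most $n^{1/k}$ nodes in $C_\ell$ can have heavy right children, as claimed.

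There is essentially no obstacle here; the only point requiring a moment's care is to be sure the level-$(\ell-1)$ subtrees rooted at the various $\rc(v)$ are genuinely disjoint and genuinely contained in $H_\ell$ — both of which follow directly from the definition of $H_\ell$ (it contains all descendants of backbone nodes) and from the pseudo-forest structure of $G_k$ established in Observation~\ref{obs:hf-structure}. One should also note that whether $C_\ell$ is shallow or deep is irrelevant to this lemma; only the lightness of $H_\ell$ is used.
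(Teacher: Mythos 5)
Your proof is correct and follows essentially the same counting argument as the paper: each heavy right child roots a subtree of size more than $n^{(\ell-1)/k}$, these subtrees are disjoint and lie inside $H_\ell$, so lightness ($\abs{H_\ell} \leq n^{\ell/k}$) forces the number of such nodes to be at most $n^{1/k}$. The only difference is presentational --- you phrase it as a contradiction and spell out the disjointness explicitly, while the paper states the inequality $\abs{W}\cdot n^{(\ell-1)/k} \leq \abs{H_\ell}$ directly.
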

\begin{proof}
  Let $W \subseteq C_\ell$ be the nodes with heavy right children, and let $m = \abs{H_\ell}$. By the assumption that $H_\ell$ is light, we have $m \leq n^{\ell / k}$. On the other hand, we have $\abs{W} \cdot n^{(\ell - 1) / k} \leq m$, as each $w \in W$ has a heavy right child at level $\ell - 1$. Combining the two previous inequalities gives $\abs{W} \cdot n^{(\ell - 1) / k} \leq n^{\ell / k}$, which gives $\abs{W} \leq n^{1/k}$, as desired.
\end{proof}

Lemma~\ref{lem:shallow-light} implies the following dichotomy for light components, $H_\ell$: Either $C_\ell$ is shallow, or every subset $U$ of $C_\ell$ of size at least $2 n^{1 / k}$ has the property that at least half of the nodes $v \in U$ have light right children. In the case where $C_\ell$ is shallow, the nodes $v \in C_\ell$ can be validly colored according to Definition~\ref{dfn:hthc} by exploring all of $C_\ell$ using distance and volume $O(n^{1/k})$. Indeed, for any $\ell \in [k]$, it suffices for each $v \in C_\ell$ to output $\chin(u_0)$, where $u_0$ is either the (unique) leaf in $C_\ell$ (in the case $C_\ell$ is a path), or $u_0$ is the vertex with minimal ID (in the case when $C_\ell$ is a cycle).

On the other hand, if $C_\ell$ is deep (and $H_\ell$ is light, hence we must have $\ell \geq 2$), then every node $v \in C_\ell$ has a descendant $u \in C_\ell$ and ancestor $w \in C_\ell$, with $\dist(u, w) \leq n^{1/k}$, such that $u$ is a leaf or $u$ has a light right child, $u'$, and $w$ is a root or $w$ has a light right child, $w'$. In the case where $u' = \rc(u)$ is light, let $H_{\ell-1}$ be the sub-component of $H_\ell$ rooted at $u'$. Then working recursively, we will show that $H_{\ell - 1}$ can be validly colored using \emph{distance} $O(n^{1/k})$ such that $u'$ outputs a color $\chout(u') \in \set{R, B}$. Therefore, $u$ satisfies Condition~4(b) or the implication of~5(a) of validity, so that $\chout(u) = X$ satisfies validity. Similarly, if $w' = \rc(w)$ is light, $w$ can output $X$. Choosing $u$ and $w$ to be the closest descendant and ancestor of $v$ in $C_\ell$ with these properties, $v$ can then output $\chout(v) = \chin(\parent(u))$---as will all other nodes between $u$ and $w$---so that $v$ satisfies Condition~4(a/c) or~5(b). We formalize this procedure in Algorithm~\ref{alg:hthc-dist-ub}. The analysis and matching lower bound appear in Section~\ref{sec:hthc-dist}.

\begin{algorithm}[t]
  \caption{$\RTHC(v, \ell)$}
  \label{alg:hthc-dist-ub}
  \begin{algorithmic}[1]
    \STATE $C \leftarrow$ the level $\ell$ component of $G_k$ containing $v$\label{ln:backbone}
    \IF{$\abs{C} \leq 2 n^{1/k}$}\label{ln:check-shallow}
    \STATE $u_0 \leftarrow$ leaf in $C$ if $C$ is a path and node with minimal ID otherwise
    \RETURN $\chin(u_0)$
    \ELSIF{$\ell = 1$}\label{ln:check-one}
    \RETURN{$D$}
    \ELSIF{ $\RTHC(\rc(v), \ell - 1) \in \set{R, B, X}$}\label{ln:rc-recurse}
    \RETURN $X$\label{ln:out-exempt}
    \ENDIF
    \STATE $u, w \leftarrow v$
    \FOR{$i = 0$ \TO $2 n^{1/k}$}\label{ln:rthc-loop}
    \IF{$\RTHC(\rc(u)) = D$ and $u$ not a level $\ell$ leaf}\label{ln:u-recurse}
    \STATE $u \leftarrow \lc(u)$\hfill\COMMENT{no exempt left descendant found yet}
    \ENDIF
    \IF{$\RTHC(\rc(w)) = D$ and $w$ not a level $\ell$ root}\label{ln:w-recurse}
    \STATE $w \leftarrow \parent(w)$\hfill\COMMENT{no exempt ancestor found yet}
    \ENDIF
    \ENDFOR
    \IF{$u = v$}
    \RETURN $X$\hfill\COMMENT{$\chout(\rc(u)) \neq D$}\label{ln:out-x}
    \ENDIF
    \IF{$\dist(u, w) \leq 2 n^{1/k}$}\label{ln:shallow-cut}
    \IF{$\RTHC(\rc(u)) \in \set{R, B, X}$}\label{ln:leaf-recurse}
    \RETURN $\chin(\parent(u))$\hfill\COMMENT{$\chout(u) = X$}\label{ln:out-parent}
    \ELSE
    \RETURN $\chin(u)$\hfill\COMMENT{$u$ is a leaf and $\chout(u) = \chin(u)$}\label{ln:out-leaf}
    \ENDIF
    \ELSE
    \RETURN $D$\label{ln:out-decline}
    \ENDIF
  \end{algorithmic}
\end{algorithm}

The (deterministic) recursive approach to coloring nodes $v$ in deep components $C_\ell$ gives an $O(n^{1/k})$ \emph{distance} protocol. However, the volume of the protocol may still be large because all nodes between $u$ and $w$ are recursively checked for solvability with $\chout(u') \in \set{R, B}$. In order to solve $\HTHC$ in a volume-efficient manner, our next procedure samples a small fraction of candidates $u$ to try to (validly) color $\chout(u) = X$. By choosing each candidate in $C_\ell$ with probability $p = \Theta((\log n) / n^{1/k})$, the number of such candidates in any $2 n^{1/k}$ radius neighborhood of $v$ is $O(\log n)$. If $H_\ell$ is light, with high probability at least one of the candidates will correctly output $\chout(u) = X$, thus allowing $v$ to output the $\chin(\parent(u))$.\footnote{Note that sampling each candidate $u$ must be done using $u$'s private randomness to ensure that all nodes $v$ visiting $u$ agree on whether or not $u$ is sampled.} Each node $v \in C_\ell$ must visit at most $2 n^{1/k}$ nodes in $C_\ell$, and an inductive argument shows that each recursive call to a sampled $u$ incurs an additional volume of $O(n^{1/k} \log^{O(\ell)} n)$. The argument is formalized in Section~\ref{sec:hthc-vol-ub}.

For the deterministic volume lower bound, our argument essentially shows that if a deterministic algorithm $A$ has the property that many executions of $A$ on input $G$ never query a leaf of $G$, then $A$ cannot solve $\HTHC(k)$ on $G$. We formalize this idea in Section~\ref{sec:hthc-vol-lb}. Given any deterministic algorithm $A$ purporting to solve $\HTHC(k)$ with volume complexity $o(n / (k \log n))$, we describe a procedure $\proc$ that produces a labeled graph $G$ with $n$ vertices for which $A$ produces an incorrect output.

\subsection{Distance Bounds}
\label{sec:hthc-dist}

\begin{prop}
  \label{prop:hthc-dist-ub}
  There exists a deterministic algorithm $A$ such that for every graph $G$ on $n$ nodes, tree labeling $\calL$, and positive integer $k$, $A$ solves $\HTHC(k)$ with distance complexity $O(k n^{1/k})$. 
\end{prop}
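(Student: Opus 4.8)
The plan is to verify that running $\RTHC(v,\lvl(v))$ at every node $v$ (Algorithm~\ref{alg:hthc-dist-ub}) is a deterministic algorithm that solves $\HTHC(k)$ with distance complexity $O(kn^{1/k})$. By Observation~\ref{obs:hf-local} a node determines its own level and its incident $G_k$-edges with $O(k)$ queries, so every test in the algorithm is well defined, and the level-$\ell$ component $C$ on Line~\ref{ln:backbone} needs to be explored only up to $2n^{1/k}+1$ nodes, which already settles the shallow/deep test on Line~\ref{ln:check-shallow}.

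For the distance bound, I would let $T(\ell)$ bound the distance from its starting node reached by any call $\RTHC(\cdot,\ell)$ and establish the recurrence $T(\ell)\le T(\ell-1)+O(n^{1/k})$. The point is that every node $\RTHC(v,\ell)$ touches directly---the $O(n^{1/k})$ nodes of $C$, and the pointers $u,w$ moved along $C$ in the loop on Lines~\ref{ln:rthc-loop}--\ref{ln:w-recurse}---lies within $G_k$-distance $O(n^{1/k})$ of $v$, and every recursive call has the form $\RTHC(\rc(x),\ell-1)$ for such an $x$, hence starts within distance $O(n^{1/k})$ of $v$. Since $T(1)=O(n^{1/k})$ (a level-$1$ call explores its path/cycle once and then returns $\chin(u_0)$ or $D$), unwinding over $\ell=1,\dots,k$ gives $T(k)=O(kn^{1/k})$.

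Correctness is the substance of the proof, and I would prove it by induction on $\ell$, establishing simultaneously: (i) the outputs of all level-$\ell$ nodes satisfy the level-$\ell$ validity conditions of Definition~\ref{dfn:hthc}; and (ii) if $v$ is a light level-$\ell$ root (Definition~\ref{dfn:shallow-light}) then $\RTHC(v,\ell)\in\set{R,B}$. The base case $\ell=1$ is immediate: a light $H_1=C_1$ has $\abs{C_1}\le n^{1/k}\le 2n^{1/k}$, hence is shallow and gets colored unanimously by $\chin(u_0)\in\set{R,B}$, while a deep $C_1$ gets colored unanimously by $D$; Condition~3 holds in both cases. For the inductive step I would follow the case split of Observation~\ref{obs:hthc-structure} on a level-$\ell$ component $C$: if $C$ is shallow it is colored unanimously by $\chin(u_0)$ and the relevant conditions (4(a) or 5(b), and~2 at the leaf) hold; if $C$ is deep then $\ell\ge 2$, and Lemma~\ref{lem:shallow-light}---which for light $H_\ell$ bounds by $n^{1/k}$ the number of nodes of $C$ with heavy right children, and which applies at the top level since $\abs{H_k}\le n$ forces $H_k$ to be light---guarantees that when $H_\ell$ is light, within distance $2n^{1/k}$ of any $v\in C$ there is, in each direction along $C$, a node that is a level-$\ell$ leaf/root or has a \emph{light} right child; by hypothesis~(ii) such a child's recursion returns a color in $\set{R,B}$, so that node may validly be made exempt (Conditions~4(b), 5(a)), and the procedure makes every node strictly between the two boundary nodes, together with the node $\parent(u)$ just above the descendant-side boundary $u$, output $\chin(\parent(u))$, satisfying Conditions~4(a)/4(c)/5(b). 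When $H_\ell$ is heavy the search may fail and $v$ outputs $D$, which for $1<\ell<k$ is permitted by Condition~4(a) since all of $C$ agrees---and $H_k$ being always light, this case never arises at the top level. Claim~(ii) for the step then follows, since a light level-$\ell$ root lies in a light $H_\ell$, so its component is either shallow (output $\chin(u_0)\in\set{R,B}$) or the search succeeds (output $\chin(\parent(u))\in\set{R,B}$).

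The main obstacle I anticipate is the bookkeeping in the deep-component case: checking that after the loop the pointers $u$ and $w$ are precisely the nearest descendant and ancestor of $v$ in $C$ that are leaves/roots or have right children whose recursions escape $D$, that the off-by-one between $u$ and $\parent(u)$ makes the ``$X$-adjacent'' and ``leaf'' sub-cases of Conditions~4 and~5 line up, and that a level-$\ell$ component which is a cycle (Observation~\ref{obs:hf-structure}) is handled---there every node is internal, Lemma~\ref{lem:shallow-light} still forces nearby exempt-able nodes, and the cycle decomposes into monochromatic arcs between them. Matching each of the finitely many local configurations of Definition~\ref{dfn:hthc} against the outputs produced by $\RTHC$ is routine but lengthy; the distance recurrence, the base case, and claim~(ii) are short.
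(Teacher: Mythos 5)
Your route is the paper's own: analyze Algorithm~\ref{alg:hthc-dist-ub}, get the distance bound from the recurrence $T(\ell)\le T(\ell-1)+O(n^{1/k})$, and prove validity by induction on the level via Lemma~\ref{lem:shallow-light}. The genuine gap is your invariant~(ii): it is false as stated, and your inductive step does not close. The root $v$ of a light $H_\ell$ can perfectly well output $X$: if its component $C$ is deep and $\RTHC(\rc(v),\ell-1)\in\set{R,B,X}$ (e.g.\ because $\rc(v)$ roots a light level-$(\ell-1)$ subtree), then $v$ returns $X$ at Line~\ref{ln:out-exempt}, and a light root that is also a level-$\ell$ leaf can return $X$ at Line~\ref{ln:out-x}. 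Your justification of~(ii) (``shallow, or the search succeeds'') omits exactly these branches. The repair is the weakening the paper uses: the invariant should be that \emph{every} node of $C$ inside a light $H_\ell$ outputs in $\set{R,B,X}$, i.e.\ never $D$; this weaker statement is all that Conditions~4(b)/5(a) of Definition~\ref{dfn:hthc} require, since they allow the right child's output to be $X$ as well, and it is what the induction actually propagates.

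A second, quantitative point you wave at but do not nail down matters precisely at level $k$, where $D$ is forbidden. Knowing that in a light $H_\ell$ every $v\in C$ has, within distance $2n^{1/k}$ \emph{in each direction}, a leaf/root or a node with a light right child only bounds $\dist(u,w)$ by $4n^{1/k}$, which would fail the test at Line~\ref{ln:shallow-cut} and send the algorithm to Line~\ref{ln:out-decline}. The argument that works couples Lemma~\ref{lem:shallow-light} with the (weakened) inductive hypothesis: every node strictly between the stopping points $u$ and $w$ has $\RTHC(\rc(\cdot),\ell-1)=D$, hence by induction a \emph{heavy} right child, and at most $n^{1/k}$ nodes of $C$ have heavy right children, so $\dist(u,w)\le n^{1/k}+1\le 2n^{1/k}$ whenever $H_\ell$ is light. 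Finally, a small but real slip: nodes with $\lvl(v)>k$ must output $X$ by Condition~1, and the paper's algorithm special-cases them; running $\RTHC(v,\lvl(v))$ there, as you propose, would return an input color instead. With these repairs your argument coincides with the paper's proof.
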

\begin{proof}
  Consider the algorithm $A$ where each node $v$ at level $\ell \leq k$ outputs \[\chout(v) = \RTHC(v, \ell),\] and outputs $X$ if $\ell > k$. We first argue that the output of $A$ satisfies the validity conditions of Definition~\ref{dfn:hthc}. Fix a vertex $v$ at level $\ell$. If $\ell > k$, then $\chout(v) = X$ as prescribed by condition $1$. Now consider the case $\ell = 1$, and let $C$ be as in Line~\ref{ln:backbone}. By the provisions in Lines~\ref{ln:check-shallow} and~\ref{ln:check-one}, all $C$ unanimously output $D$ or $\chin(u_0)$. In particular, Conditions~2 and~3 are satisfied.

  Consider $1 \leq \ell < k$, and let $H$ be the connected component of $G_h$ consisting of $C$ together with all descendants of $C$.
  \begin{description}
  \item[Claim.] The output of each $v \in C$ is valid. Moreover, if $H$ is light, then $\chout(v) \in \set{R, B, X}$.
  \item[Proof of claim.] We argue by induction on $\ell$. The base case $\ell = 1$ is handled by the previous paragraph, and the observation that for $\ell = 1$, $H = C$ being light implies $C$ is shallow. Thus the condition of Line~\ref{ln:check-shallow} is satisfied and all nodes in $C$ output $\chin(u_0) \in \set{R, B}$.

    For the inductive step, suppose the claim holds at level $\ell - 1$. If $C$ is shallow, then all nodes $v \in C$ output $\chin(u_0) \in \set{R, B}$, so Conditions~2 and~4(a)/5(b) of validity are satisfied. Further, the level $\ell$ root (if any) outputs a color in $\set{R, B}$, so the conclusion of claim follows when $C$ is shallow. Now suppose $C$ is not shallow. Fix $v \in C$. If $\RTHC(\rc(v), \ell - 1) \in \set{R, B, X}$, then $\chout(v) = X$ and $v$ satisfies validity condition 4(b)/5(a). Otherwise let $u$ and $w$ be $v$'s descendant and ancestor (respectively) in $C$ during the execution of Line~\ref{ln:shallow-cut}.

    If the condition $\dist(u, w) \leq 2 n^{1/k}$ in Line~\ref{ln:shallow-cut} is satisfied, then all nodes $v' \in C$ on the path from $w$ to $u$ (possibly excluding $w$ or $u$ themselves) will store the same values for $w$ and $u$ when Line~\ref{ln:shallow-cut} is executed. Therefore, all such $v'$ will unanimously output $\chin(P(u))$ or $\chin(u)$ (both in $\set{R, B}$) according to Line~\ref{ln:out-parent} or~\ref{ln:out-leaf}. Thus all nodes between $w$ and $u$ in $C$ satisfy validity conditions 2/4(a)/4(c) or 5(b). Moreover, if $v$ is a root level $\ell$ root, then $v = w$ and $\chout(v) = \set{R, B, X}$ as claimed.

    Finally consider the case where $\dist(u, w) > 2 n^{1/k}$. By Lemma~\ref{lem:shallow-light} and the inductive hypothesis, this case can only occur if $H$ is heavy. In particular, this case can only occur for $\ell < k$. If $v$ is a level $\ell$ leaf, then $v = u$ and either $v$ outputs $X$ (at Line~\ref{ln:out-exempt}) or $D$ (at Line~\ref{ln:out-decline}). Either way, validity condition~2 is satisfied. If $v$ is not a leaf, then if $v$ outputs $X$ at Line~\ref{ln:out-exempt}, $v$ satisfies validity condition~4(b). Otherwise, $v$ outputs $D$ in Line~\ref{ln:out-decline}. Similarly, $\lc(v)$ will output $D$ or $X$ (according to if $\RTHC(\rc(\lc(v)), \ell - 1) \in \set{R, B, X}$), because $\lc(v)$ and $v$ agree on whether the condition $\dist(u, w) > 2 n^{1/k}$ in Line~\ref{ln:check-shallow} is satisfied (even though $v$ and $\lc(v)$ may store different values of $u$ and $w$). Accordingly, validity condition 4(a) or 4(c) is satisfied. Thus the claim holds for $v$ at level $\ell$, as desired.
  \end{description}

  By the claim, the output of $A$ satisfies validity (where again, we observe that if $\ell = k$, then $H$ is light. All that remains is to bound the distance complexity of $A$. To this end, a straightforward induction argument (on $\ell$) shows that $\RTHC(v, \ell)$ queries nodes at distance at most $O(\ell \cdot n^{1/k})$ from $v$. For $\ell = 1$, this is immediate, as $v$ can determine if $\abs{C} \leq 2 n^{1/k}$ using $O(n^{1/k})$ queries. For $\ell > 1$, the same applies. Further, $\RTHC(v, \ell)$ queries $O(n^{1/k})$ nodes in $C$ (at level $\ell$) in the loop starting at Line~\ref{ln:rthc-loop}, and each such query makes a single call to $\RTHC(v, \ell - 1)$. Thus, applying the inductive hypothesis, the distance complexity of $\RTHC(v, \ell)$ is $O(n^{1/k}) + O((\ell - 1) n^{1/k})$, which gives the desired result.
\end{proof}

\begin{prop}
  \label{prop:hthc-dist-lb}
  Any (randomized) algorithm $A$ that solves $\HTHC$ with probability bounded away from $1/2$ has distance complexity $\Omega(n^{1/k})$. 
\end{prop}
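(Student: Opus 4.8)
The plan is to mimic the lower bound strategy already used for $\LeafColoring$ (Proposition~\ref{prop:leaf-coloring-rdist-lb}): construct a fixed graph together with a distribution over input colorings in which the correct output at some designated node is forced to depend on an input label that sits at distance $\Omega(n^{1/k})$, and then invoke Yao's minimax principle. Concretely, I would build a single ``deep'' level-$k$ component $C_k$ in the hierarchical forest $G_k$ — a directed path of $\Theta(n^{1/k})$ level-$k$ nodes — and hang below each level-$k$ node $v$ a right child $\rc(v)$ that is the root of a level-$(k-1)$ subtree, recursively, with every such subtree chosen to be a single path (so that every subtree is heavy / not shallow, forcing the recursion all the way down). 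The total size is $\Theta(n)$. Choose the level-$k$ backbone $C_k$ to be a \emph{path} (not a cycle), so that it has a unique level-$k$ leaf $u_0$ at one end and a level-$k$ root at the other. By Observation~\ref{obs:hthc-structure} together with Condition~5, on a level-$k$ path with no interior $X$'s, all nodes must output $\chin$ of the appropriate endpoint; in particular, if no node in $C_k$ outputs $X$, then the root of $C_k$ must output $\chin(u_0)$.

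Next I would engineer the instance so that outputting $X$ anywhere on the level-$k$ backbone is \emph{impossible} to verify locally in a way that is cheaper than seeing $u_0$. The point is that a level-$k$ node $v$ may output $X$ only if $\chout(\rc(v)) \in \set{R,B}$ (Condition~5(a)), and $\rc(v)$ heads a heavy subtree whose valid colors must be determined recursively; by the same argument applied inductively, a level-$\ell$ node can output $X$ only after exhibiting a ``lighter'' branch, and if I make every branch equally non-shallow, the only escape is to reach an actual leaf. So I would set up the construction so that the unique legal output has the whole backbone $C_k$ unanimously colored $\chin(u_0) \in \set{R, B}$, and analogously every lower-level backbone colored $\chin$ of its unique leaf. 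Then put a distribution $D$ on inputs: $\chin(u_0)$ is $R$ or $B$ with probability $1/2$ each (all other input colors fixed, say $R$, or chosen so that no level-$\ell$ backbone for $\ell < k$ is shallow and every one is heavy). Under $D$, the correct output color of the level-$k$ root is exactly $\chin(u_0)$, a uniformly random bit, and the root has distance $\Theta(n^{1/k})$ from $u_0$.

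The conclusion then follows as in Proposition~\ref{prop:leaf-coloring-rdist-lb}: any deterministic algorithm initiated at the level-$k$ root that queries only up to distance $< \dist(\text{root}, u_0) = \Omega(n^{1/k})$ cannot see $u_0$ (nor, by the shallow/heavy choices, any other leaf whose color is randomized), so conditioned on its view its output is independent of the bit $\chin(u_0)$ and it is correct with probability at most $1/2$; by Yao's minimax principle the same bound holds for randomized algorithms, so $\RDIST(\HTHC(k)) = \Omega(n^{1/k})$, and a fortiori $\DDIST(\HTHC(k)) = \Omega(n^{1/k})$. The main obstacle I anticipate is the middle paragraph: proving that in \emph{this} instance the unique valid output really does force the backbone to be unanimously $\chin(u_0)$ and leaves no cheap way to certify an $X$ — i.e., ruling out that the algorithm short-circuits by outputting $X$ somewhere after reading only a nearby lower-level leaf. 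This requires choosing the recursive subtree sizes carefully (each level-$\ell$ backbone of length $\Theta(n^{1/k})$, each subtree heavy of size $\Theta(n^{\ell/k})$, none shallow) so that every leaf at \emph{any} level that could license an $X$ is itself at distance $\Omega(n^{1/k})$ from the root; once the geometry is right, the rest is a routine induction on $\ell$ using the validity conditions of Definition~\ref{dfn:hthc} and Observation~\ref{obs:hthc-structure}.
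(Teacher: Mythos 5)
Your choice of instance (a ``balanced'' construction in which every backbone at every level is a path of length $\Theta(n^{1/k})$) is exactly the instance the paper points to, but the engine of your argument does not work. You want to ``engineer the instance so that outputting $X$ anywhere on the level-$k$ backbone is impossible,'' so that the unique legal output colors $C_k$ unanimously $\chin(u_0)$ and a LeafColoring-style Yao argument applies. No choice of input labels can achieve this: the validity conditions 4(b)/5(a) permit a node $v$ to output $X$ whenever $\chout(\rc(v)) \in \set{R,B,X}$, and \emph{every} subtree admits a non-declining valid coloring regardless of the input (color each backbone unanimously with the input color of its own leaf; this satisfies conditions 2, 3, 4(a) and 5(b) at every level). ``Heavy,'' ``light,'' and ``shallow'' are notions used by the upper-bound algorithm $\RTHC$, not part of the LCL's validity, so making every branch non-shallow does not make exemption illegal. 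Consequently, for $k \ge 2$ the level-$k$ root always has the legal option of outputting $X$ (and more generally the topmost $X$ on the backbone, not $u_0$, determines what the nodes above it must output), so the root's correct output is \emph{not} a function of the input at distance $\Omega(n^{1/k})$; it depends on what the other executions of the same algorithm output in the subtree below $\rc(\mathrm{root})$. The step ``conditioned on its view, its output is independent of $\chin(u_0)$ and hence correct with probability at most $1/2$'' is therefore unjustified: the root can output $X$ without seeing $u_0$, and to derive a contradiction you must exhibit a violated constraint elsewhere, inside the subtree that is supposed to license that $X$.

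That missing piece is the actual content of the proof, and it is inherently recursive rather than a single forced-output argument: one shows that any algorithm of distance $o(n^{1/k})$ run on the balanced instance with random leaf colors must, with probability close to $1/2$, either have some mid-backbone node disagree with a distant random leaf color, or have an exempt node whose right-child subtree is in fact declined or inconsistently colored by the same algorithm. The hardness bottoms out at the level-$1$ components (paths of length $\Theta(n^{1/k})$ whose unanimous non-$D$ color must equal a far-away random input color), and one then argues level by level that a low-distance algorithm is forced to decline everywhere below level $k$, while declining is forbidden at level $k$. This is the argument of Chang and Pettie (Theorem~2.3 of~\cite{Chang2019}) that the paper invokes (the paper itself omits the details, noting only the balanced instance); your ``routine induction once the geometry is right'' is precisely this non-routine coordination argument, and as written your proposal does not contain it.
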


We omit a proof of Proposition~\ref{prop:hthc-dist-lb}, as the argument is essentially the same as the proof lower bound proof for Chang and Pettie's variant of hierarchical $2 \frac 1 2$ coloring (Theorem~2.3 in~\cite{Chang2019}). We note that the instance achieving the lower bound is a ``balanced'' instance of $\HTHC(k)$, where every ``backbone''---i.e., maximal connected component of $G_k$ consisting of nodes at the same level---has size $\Theta(n^{1/k})$.

\subsection{Randomized Volume Upper Bound}
\label{sec:hthc-vol-ub}

\begin{prop}
  \label{prop:hthc-vol-ub}
  There exists a randomized algorithm $A'$ that for every graph $G$ on $n$ nodes and tree labeling $\calL$, $A'$ solves $\HTHC$ with high probability using volume $O(n^{1/k} \log^{O(k)}(n))$.
\end{prop}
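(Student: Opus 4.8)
The plan is to modify the distance-efficient recursion $\RTHC$ (Algorithm~\ref{alg:hthc-dist-ub}) into a randomized, volume-efficient variant $A'$, following the sampling idea sketched at the end of Section~\ref{sec:hthc-shallow-light}. The key change is this: rather than scanning \emph{all} nodes $u$ within distance $2n^{1/k}$ along a level-$\ell$ backbone $C_\ell$ to find one whose right child can be colored in $\set{R,B,X}$ (which costs volume $\Theta(n^{1/k})$ per backbone node just at that level, and recursively much more), each node $u \in C_\ell$ \emph{independently marks itself a candidate} using its private randomness with probability $p = \Theta((\log n)/n^{1/k})$. When $A'$ is run from $v$, it walks at most $2n^{1/k}$ steps along $C_\ell$ in each direction; by a Chernoff bound the number of candidates encountered in that window is $O(\log n)$ with high probability, and only those candidates trigger a recursive call to the $(\ell-1)$-level subroutine on their right child. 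The use of $u$'s private randomness (not $v$'s) is essential so that every node $v$ traversing $u$ agrees on whether $u$ is a candidate, which is what keeps the outputs locally consistent.

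The correctness argument has two pieces. First, I would show that if the level-$\ell$ component $H_\ell$ (backbone $C_\ell$ plus all descendants) is \emph{light} and $C_\ell$ is \emph{deep}, then Lemma~\ref{lem:shallow-light} guarantees at least half the nodes of $C_\ell$ have light right children, so in any window of $2n^{1/k}$ consecutive backbone nodes there are $\ge n^{1/k}$ light ones; with $p = c(\log n)/n^{1/k}$ for a suitable constant $c$, with probability $1 - n^{-\Omega(c)}$ at least one light node in the window is marked. Inductively, a light level-$(\ell-1)$ subcomponent gets colored with a final color in $\set{R,B\}$ (base case $\ell=1$: light $\Rightarrow$ shallow $\Rightarrow$ colored $\chin(u_0)\in\set{R,B}$), so the nearest marked light ancestor $w$ and descendant $u$ of $v$ in $C_\ell$ can legitimately output $X$, and $v$ (together with everything strictly between $w$ and $u$) outputs $\chin(\parent(u))$; this matches Conditions~4(a/b/c) and~5. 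If $C_\ell$ is shallow the node just explores all of $C_\ell$ as in $\RTHC$. Second, I would verify that the local-consistency requirements of Definition~\ref{dfn:hthc} are met: since $w,u$ are defined as the \emph{nearest} marked light ancestor/descendant, and candidacy is determined by private randomness, any two adjacent backbone nodes $v,v'$ between $w$ and $u$ compute the same $w,u$ and hence output the same color; the nodes $w,u$ themselves output $X$ consistently with their parents' choice. The case distinctions (shallow vs.\ deep, light vs.\ heavy, leaf vs.\ root) mirror the claim in the proof of Proposition~\ref{prop:hthc-dist-ub} and I would reuse that structure.

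For the volume bound, let $f_\ell$ denote the (high-probability) volume of the level-$\ell$ subroutine. A node at level $\ell$ spends $O(n^{1/k})$ volume walking its own backbone window, plus it invokes the $(\ell-1)$-subroutine on $O(\log n)$ sampled right children (for the recursion that finds $X$-able nodes) and on $O(1)$ further right children (the $\RTHC(\rc(v),\ell-1)$ calls and the checks at the candidate $u$). This gives the recurrence $f_\ell = O(\log n)\cdot f_{\ell-1} + O(n^{1/k})$ with $f_1 = O(n^{1/k})$, which solves to $f_\ell = O(n^{1/k}\log^{O(\ell)} n)$, hence $f_k = O(n^{1/k}\log^{O(k)} n)$ as claimed. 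I would then truncate each execution at its worst-case budget $O(n^{1/k}\log^{O(k)}n)$ (since $n$ and $k$ are known), producing arbitrary output on the rare overrun, and take a union bound over all $n$ starting nodes and all $O(k)$ recursion levels so that the global failure probability is $1 - O(1/n)$, which is the required success guarantee.

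The main obstacle I anticipate is the \textbf{inductive correctness bookkeeping}: proving cleanly that the sampled recursion still produces a globally valid labeling requires carefully tracking, at each level, the invariant ``a light component is colored with a final color in $\set{R,B}$, and all of $C_\ell$ between consecutive marked-light nodes is colored unanimously,'' while simultaneously handling the heavy/shallow cases where the node falls back to $D$ or to exhaustive exploration — and doing all of this so that adjacent nodes provably agree. Getting the constant $c$ in $p$ and the window length to line up so the union bound over $O(n)$ nodes and $O(k)$ levels still yields $1-O(1/n)$ is routine once the high-probability ``a marked light node exists in every window'' statement is in place, but the event definitions must be stated per-node-per-level with enough independence (each $u$'s candidacy is its own private coin) to apply Lemma~\ref{lem:chernoff} and Lemma~\ref{lem:neg-binom} directly.
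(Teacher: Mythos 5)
Your proposal is correct and follows essentially the same route as the paper: sample ``way-points'' with probability $p=\Theta((\log n)/n^{1/k})$ using each node's \emph{private} randomness, invoke the recursion only at sampled nodes, use a Chernoff bound to show no $O(n^{1/k})$-length backbone window is crowded (so only $O(\log n)$ recursive calls per level), use Lemma~\ref{lem:shallow-light} to show every such window in a deep backbone of a light component contains a sampled light node w.h.p., and close with the level induction and the recurrence $f_\ell = O(\log n)\,f_{\ell-1}+O(n^{1/k})$. The only nit is your inductive invariant: the root of a light component should be guaranteed to output a value in $\set{R,B,X}$ (equivalently $\chout \neq D$), not necessarily in $\set{R,B}$, since the root may itself be a sampled node outputting $X$; this is exactly what conditions~4(b)/5(a) require and is the invariant the paper maintains.
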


The algorithm $A'$ achieving the conclusion of Proposition~\ref{prop:hthc-vol-ub} is a slight modification of $\RTHC$. In $A'$, only a small fraction of the recursive calls to $\RTHC$ are made. Specifically, each node $v$ uses its private randomness to independently become a \dft{way-point} with probability $p = (c \log n) / n^{1/k}$ (for some constant $c$ to be chosen later). In $A'$, a recursive call to $\RTHC(v', \ell)$ (in Line~\ref{ln:rc-recurse}, \ref{ln:u-recurse}, \ref{ln:w-recurse}, or~\ref{ln:leaf-recurse}) is made if and only if $v'$ is a way-point. In particular, if $v$ is not a way-point, then Line~\ref{ln:rc-recurse} evaluates to false; if $u$ (resp.\ $w$) is not a way-point, then Line~\ref{ln:u-recurse} (resp.\ Line~\ref{ln:w-recurse}) evaluates to true.

The proof of correctness of $A'$ is essentially the same as in the proof of Proposition~\ref{prop:hthc-dist-ub}, at least in cases where the distribution of way-points is sufficiently well-behaved. The potential complication in the analysis arises in deep components $C$ (i.e., $C$ with $\abs{C} \geq 2 n^{1/k}$), because light components can be validly colored deterministically using $O(\abs{C}) = O(n^{1/k})$ volume. For deep components $C$, the choice of $p = (c \log n) / n^{1/k}$ ensures that in any segment of $C$ of length $O(n^{1/k})$ there are $O(\log n)$ way-points in the segment with high probability (Lemma~\ref{lem:waypoint-density}). Thus, we bound the number of recursive calls to $\RTHC$ made by $A'$. On the other hand, if $C$ is contained in a light component $H$, then by Lemma~\ref{lem:shallow-light} any segment of length at least $2 n^{1/k}$ will have at least $1/2$-fraction of its nodes being the parents of light right children. The choice of $p$ allows us to infer that some such ``light parent'' $u$ is a way-point (Lemma~\ref{lem:light-parent}). We then argue by induction that $u$ will output $X$, hence $C$ can be validly colored without any node outputting $D$.

\begin{dfn}
  Fix $\ell$ with $1 < \ell \leq k$ and let $C$ be a maximal connected component of $G_k$ consisting of nodes at level $\ell$. A \dft{short segment} is a path or cycle $S \subseteq C$ such that $\abs{S} \leq 4 n^{1/k}$. We say that $S$ is \dft{crowded} if it contains more than $8 c \log n$ way-points.
\end{dfn}

\begin{lem}
  \label{lem:waypoint-density}
  Suppose each $v \in V$ is chosen to be a way-point independently with probability $p = (c \log n) / n^{1/k}$. Then
  \[
  \Pr(G_k \text{ contains a crowded segment}) \leq O(1/n).
  \]
\end{lem}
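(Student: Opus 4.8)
The plan is to bound the probability that any fixed short segment is crowded by a Chernoff tail bound, and then take a union bound over all short segments. First I would fix a short segment $S$; by definition $\abs{S} \leq 4n^{1/k}$, so if $Y$ denotes the number of way-points in $S$, then $Y$ is a sum of at most $4n^{1/k}$ independent Bernoulli$(p)$ random variables with $p = (c\log n)/n^{1/k}$, giving $\mu = \E(Y) \leq 4c\log n$. The event that $S$ is crowded is the event $Y > 8c\log n$, which (since $8c\log n \geq 2\mu$) is contained in the event $Y \geq 2\mu$, i.e., $Y \geq (1+\delta)\mu$ for $\delta = 1$. Applying the Chernoff upper tail~(\ref{eqn:chernoff-upper}) from Lemma~\ref{lem:chernoff} gives $\Pr(Y \geq 2\mu) \leq e^{-\mu/3}$. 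The slight annoyance here is that the Chernoff bound as stated requires $0 < \delta < 1$, so I would instead apply it with $\delta$ slightly less than $1$ (say aiming at threshold $(1+\delta)\mu$ with $\mu$ replaced by a fixed lower-order proxy), or more cleanly bound via a standard variant of the Chernoff bound valid for all $\delta \geq 1$ ($\Pr(Y \geq (1+\delta)\mu) \leq e^{-\mu\delta/3}$ for $\delta \geq 1$); either way, we obtain $\Pr(S \text{ crowded}) \leq e^{-\Omega(c \log n)} = n^{-\Omega(c)}$, where the key point is that $\mu = \Theta(c \log n)$ when $S$ is a maximal-length short segment, and for shorter $S$ the probability is only smaller since a crowded $S$ forces at least $8c\log n$ way-points regardless of $\abs{S}$.

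Next I would count the number of short segments. Each short segment is a path or cycle contained in some level-$\ell$ component $C$ of $G_k$; a path is determined by its endpoint nodes (or by its starting node and length), and there are at most $n$ nodes in total and at most $4n^{1/k} \leq n$ choices of length, so the number of distinct short segments is at most $n \cdot 4n^{1/k} = O(n^{1+1/k}) = \operatorname{poly}(n)$. (Cycles are a negligible additional contribution, at most $n$ of them.) A union bound over all $O(n^{1+1/k})$ short segments then gives $\Pr(G_k \text{ contains a crowded segment}) \leq O(n^{1+1/k}) \cdot n^{-\Omega(c)}$, and choosing the constant $c$ large enough — depending only on $k$, $\alpha$ (the ID-range exponent), and the implied constants — makes this $O(1/n)$, as desired.

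The main obstacle, such as it is, is purely bookkeeping: getting the constants to line up so that the exponent $\Omega(c)$ in $n^{-\Omega(c)}$ strictly exceeds $1 + 1/k$, and handling the mild mismatch between the ``$0 < \delta < 1$'' hypothesis in Lemma~\ref{lem:chernoff} and our use of $\delta = 1$. Neither is a real difficulty: the first is resolved by taking $c$ sufficiently large (and the statement quantifies over $c$ being ``some constant to be chosen later'', so we are free to do this), and the second by invoking the standard $\delta \geq 1$ form of the Chernoff bound, or by replacing the crowding threshold $8c\log n$ with $(2-o(1))\mu$ and absorbing the slack. I would also note that since the definition of ``crowded'' fixes the threshold $8c\log n$ independently of $\abs{S}$, it suffices to analyze segments of the maximum allowed length $4n^{1/k}$ — for any shorter segment the way-point count is stochastically dominated, so the same tail bound applies verbatim.
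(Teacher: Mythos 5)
Your proof is correct and takes essentially the same route as the paper: a Chernoff upper-tail bound on the number of way-points in a segment padded to length $4n^{1/k}$ (so $\mu = 4c\log n$), followed by a union bound, choosing $c$ a sufficiently large constant. The only cosmetic difference is that the paper unions over the at most $n$ \emph{maximal} short segments (noting a crowded segment sits inside a crowded maximal one), while you union over all $O(n^{1+1/k})$ short segments and absorb the extra factor by taking $c$ a bit larger; both work, and your care about the $\delta=1$ boundary case of the stated Chernoff bound is if anything more scrupulous than the paper, which uses $\Pr(Y \geq 2\mu) \leq e^{-\mu/3}$ directly.
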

\begin{proof}
  First observe that if $G_k$ contains a crowded short segment, then it contains a crowded \emph{maximal} short segment (i.e., a short segment that is not a subset of any other short segment). By associating each maximal short segment with its midpoint (in the case of a path), or node with lowest ID (in the case of a cycle), there are at most $n$ maximal short segments in $G_h$.

  Consider some fixed maximal short segment $S$. For $i \leq 4 n^{1/k}$, let $Y_i$ be an indicator random variable for the event that the $i\th$ node in $S$ is a way-point if $i \leq \abs{S}$, and $Y_i$ is an independent Bernoulli random variable with probability $p$ otherwise. Then
  \begin{align*}
    \Pr(S \text{ is crowded}) &= \Pr\paren{\sum_{i = 1}^{\abs{S}} Y_i \geq 8 c \log n} \leq \Pr\paren{\sum_{i = 1}^m Y_i \geq 8 c \log n}.
  \end{align*}
  Since the $Y_i$ are iid Bernoulli random variables, we can apply the Chernoff bound Lemma~\ref{lem:chernoff} to $Y = \sum_{i = 1}^m Y_i$. Note that $\mu = \E(Y) = 4 c \log n$. Therefore, Lemma~\ref{lem:chernoff} gives
  \[
  \Pr(Y \geq 8 c \log n) = \Pr(Y \geq 2 \mu) \leq e^{-\mu / 3} = e^{-(4 / 3) c \log n} = n^{-4 c / 3}.
  \]
  For any $c \geq 3/2$, this final expression is at most $1/n^2$. Thus taking the union bound over all maximal short segments, we find that
  \begin{align*}
    \Pr(G_h \text{ contains a crowded segment}) &\leq \sum_{S \text{ maximal, short}} \Pr(S \text{ is crowded})\\
    &\leq n \cdot \frac{1}{n^2} = \frac 1 n
  \end{align*}
  which gives the desired result.
\end{proof}

\begin{dfn}
  Let $\ell$, $C$, $H$, and $v$ be as in Lemma~\ref{lem:light-parent}. We call $u \in C$ a \dft{light way-point} if $u$ is a way-point and $\rc(u)$ is light. 
\end{dfn}

\begin{lem}
  \label{lem:light-parent}
  Fix $\ell$ with $1 < \ell \leq k$ and let $C$ be a connected component of $G_k$ consisting of nodes at level $\ell$, and let $H$ be the subgraph of $G_k$ consisting of nodes in $C$ together with all of their descendants. Suppose $C$ is deep (i.e., $\abs{C} > 2 n^{1/k}$ and $H$ is light (i.e., $\abs{H} \leq n^{\ell / k}$. Let $v \in C$, and suppose every node $v' \in C$ is chosen to be a way-point independently with probability $(c \log n) / n^{1/k}$. Then with probability $1 - O(1/n^2)$ there exists a descendant $u \in C$ of $v$ and ancestor $w \in C$ of $v$ such that $\dist(w, u) \leq 2 n^{1/k}$ and
  \begin{enumerate}[noitemsep]
  \item $u$ is either a light waypoint or a level $\ell$ leaf, and
  \item $w$ is either a light waypoint or a level $\ell$ root.
  \end{enumerate}
\end{lem}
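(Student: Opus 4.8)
The plan is to combine Lemma~\ref{lem:shallow-light} with a concentration estimate on where way-points fall. Since $C$ is deep and $H$ is light (so $2 \le \ell \le k$), Lemma~\ref{lem:shallow-light} says at most $n^{1/k}$ nodes of $C$ have heavy right children; call the remaining nodes of $C$ \emph{light parents}. Observe that every node at level $\ell > 1$ has $\rc(v) \ne \bot$ by the definition of $\lvl$, so each node of $C$ is either a light parent or one of these $\le n^{1/k}$ ``heavy parents''; hence any subpath of $C$ of length $2n^{1/k}$ contains at least $n^{1/k}$ light parents. Each light parent is independently made a way-point with probability $p = (c\log n)/n^{1/k}$, so such a subpath contains no \emph{light way-point} (a way-point whose right child is light) with probability at most $(1-p)^{n^{1/k}} \le e^{-c\log n} = n^{-c}$.

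First I would treat $C$ a path; the cycle case is identical, using $|C| > 2n^{1/k}$ to fit two disjoint subpaths around $v$. To produce $u$, walk downward from $v$ along $C$ (following $\lc$) and let $u$ be the first node reached that is a level-$\ell$ leaf or a light way-point --- this is exactly the node at which the ``$u$'' pointer of the loop in $A'$ (Line~\ref{ln:rthc-loop}, with recursive calls suppressed at non-way-points) comes to rest. If a level-$\ell$ leaf is reached within the first $2n^{1/k}$ steps, $u$ satisfies condition~1. Otherwise those first $2n^{1/k}$ nodes all have right children, hence include $\ge n^{1/k}$ light parents, so except with probability $n^{-c}$ one of them is a light way-point; take $u$ to be the first such. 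Either way $\dist(v,u) = O(n^{1/k})$ and $u$ satisfies condition~1. Running the symmetric argument upward from $v$ (following $\parent$, which stays at level $\ell$) yields $w$, a level-$\ell$ root or light way-point, with $\dist(v,w) = O(n^{1/k})$ satisfying condition~2, again except with probability $n^{-c}$.

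A union bound over the two failure events gives that $u$ and $w$ both exist as required except with probability $2n^{-c} = O(1/n^2)$ once $c$ is a large enough constant, and then $\dist(w,u) \le \dist(w,v) + \dist(v,u)$ by the triangle inequality. To get the bound down to exactly $2n^{1/k}$ (as stated) rather than a larger $O(n^{1/k})$, one takes the two subpaths to have length $n^{1/k}$ each and does a slightly more careful accounting of the heavy budget: if the length-$n^{1/k}$ window on one side of $v$ contains too few light parents, it must use up almost the entire global supply of $\le n^{1/k}$ heavy parents, which forces essentially every other node of $C$ --- in particular the window on the other side --- to be a light parent, so the concentration bound still applies in the favorable direction and the windows can be rebalanced.

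The main obstacle I anticipate is precisely this constant bookkeeping: the clean Chernoff estimate wants a window of length $\Theta(n^{1/k})$ carrying $\Omega(n^{1/k})$ light parents, so a naive argument only yields $\dist(v,u),\dist(v,w) \le 2n^{1/k}$ and hence $\dist(w,u) \le 4n^{1/k}$, and recovering the stated $2n^{1/k}$ needs the case analysis on how concentrated the heavy parents are around $v$. One also has to fix the constant $c$ in $p$ so that this lower-bound-on-light-way-points argument and the density upper bound of Lemma~\ref{lem:waypoint-density} hold simultaneously. Apart from that, everything is a routine Chernoff-plus-union-bound computation, and I would finally double-check that the predicate ``light way-point'' used here is exactly the event that terminates the $u$- and $w$-pointers of $A'$, so that the lemma plugs directly into the correctness and volume analysis of Proposition~\ref{prop:hthc-vol-ub}.
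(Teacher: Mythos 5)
Your proposal is correct in substance and rests on exactly the same two ingredients as the paper---Lemma~\ref{lem:shallow-light} to guarantee $\geq n^{1/k}$ light parents in any stretch of $2n^{1/k}$ nodes of $C$, and the estimate $(1-p)^{n^{1/k}} \leq n^{-c}$---but the bookkeeping is organized differently. The paper does not argue per-vertex with two one-sided windows; instead it enumerates \emph{all} segments $S_j$ of $\approx 2n^{1/k}$ consecutive nodes of $C$ (anchored at the leaf end), shows each fails to contain a light way-point with probability at most $n^{-c}$, and union-bounds over the at most $n$ such segments with $c \geq 3$. The conclusion that every such segment contains a light way-point immediately bounds the gap between consecutive special nodes (light way-points, the leaf, the root) by $2n^{1/k}$, which gives both conditions with the stated constant, for every $v \in C$ simultaneously, with no rebalancing needed. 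Your route hits exactly the difficulty you flag: the naive two-sided walk only yields $\dist(u,w) \leq 4n^{1/k}$, and recovering $2n^{1/k}$ requires the heavy-budget case analysis you sketch. That sketch can be made rigorous---e.g., either both length-$n^{1/k}$ windows around $v$ contain at least $n^{1/k}/4$ light parents, or the deficient side contains more than $\tfrac{3}{4}n^{1/k}$ heavy parents, so at most $n^{1/k}/4$ heavy parents remain anywhere else and the split $(\tfrac{3}{2}n^{1/k}, \tfrac{1}{2}n^{1/k})$ gives at least $n^{1/k}/4$ light parents on each side; in all cases the split is determined by the (deterministic) heavy/light pattern before any coins are examined, so two applications of the $(1-p)^{\Omega(n^{1/k})}$ bound finish with failure probability $O(n^{-c/4})$. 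So your approach works, at the cost of this extra case analysis and a slightly larger choice of $c$; the paper's all-segments union bound buys the exact constant and the simultaneous guarantee for all $v$ (convenient for Corollary~\ref{cor:light-parent}, though your per-$v$ bound of $O(1/n^2)$ would also suffice there after the union bound over nodes). Your identification of ``light way-point or leaf/root'' with the stopping condition of the $u$- and $w$-pointers in the sparsified algorithm matches how the lemma is used in Proposition~\ref{prop:hthc-vol-ub}.
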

\begin{proof}
  We consider the case where $C$ is a path. The case where $C$ is a cycle can be handled similarly. Let $u_0$ be the leaf in $C$ and let $u_0, u_1, \ldots$ be defined by taking $u_{i+1} = \parent(u_i)$. For $j = 0, 1, \ldots, \abs{C} - 2 n^{1/k}$, let $S_j$ be the segment of $C$ containing $u_j, u_{j+1}, \ldots, u_{j + 2 n^{1/k}}$. By Lemma~\ref{lem:shallow-light}, at least $n^{1/k}$ nodes $u_i \in S_j$ have light right children. Therefore
  \[
    \Pr(\text{no } u_i \in S_j \text{ is a light waypoint}) \leq \prod_{i = 1}^{n^{1/k}} (1 - p)
    = (1 - p)^{n^{1/k}}
    \leq e^{- p \cdot n^{1/k}}
    = e^{- c \log n}
    = n^{-c}.
  \]
  Taking any $c \geq 3$ and applying the union bound over all $j$, we have that every $S_j$ contains a light waypoint with probability $1 - O(1/n^2)$. In particular, this implies that with probability at $1 - O(1/n^2)$, the maximum distance between consecutive light way-points is at most $2 n^{1/k}$, which gives the desired result.
\end{proof}

\begin{cor}
  \label{cor:light-parent}
  Suppose every $v \in V$ is chosen to be a waypoint independently with probability $p = (c \log n) / n^{1/k}$. Then with probability $1 - O(1/n)$ the following holds: for every $\ell$ with $1 < \ell \leq k$ and every $v$ such that $v \in C \subseteq H$ where $H$ is light, then $v$ has a descendant $u$ and ancestor $w$ both in $C$ with $\dist(w, u) \leq 2 n^{1/k}$ such that $u$ is either a level $\ell$ leaf or a light waypoint and $w$ is either a level $\ell$ root or a light waypoint.
\end{cor}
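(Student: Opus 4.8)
The plan is to derive Corollary~\ref{cor:light-parent} from Lemma~\ref{lem:light-parent} by a union bound over components, after disposing of the shallow case deterministically. Fix the way-point randomness. For each level $\ell$ with $1 < \ell \le k$, let $C$ range over the maximal connected components of $G_k$ consisting of nodes at level $\ell$ that are contained in a light component $H$; the goal is to show that, outside an event of probability $O(1/n)$, every $v \in C$ has a descendant $u \in C$ and an ancestor $w \in C$ with $\dist(w, u) \le 2 n^{1/k}$ such that $u$ is a level-$\ell$ leaf or a light way-point and $w$ is a level-$\ell$ root or a light way-point.

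I would first handle shallow backbones deterministically. If $\abs{C} \le 2 n^{1/k}$ and $C$ is a path, then $C$ contains a level-$\ell$ leaf $u$ (the node with $\lc = \bot$) and a level-$\ell$ root $w$; every $v \in C$ has $u$ as a descendant and $w$ as an ancestor, with $\dist(w, u) \le \abs{C} - 1 \le 2 n^{1/k}$, so the conclusion holds with certainty. (A shallow \emph{cycle} contains no level-$\ell$ leaf or root, but such a component is coloured by exploring all of $C$ directly in $\RTHC$, so the corollary is invoked only for shallow paths and for deep backbones.)

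For a deep backbone $C$ inside a light $H$, the hypotheses of Lemma~\ref{lem:light-parent} are met, and its proof in fact establishes a single event $\calE_C$---``every length-$2 n^{1/k}$ segment of $C$ contains a light way-point''---which holds with probability $1 - O(1/n^2)$ and under which \emph{every} $v \in C$ simultaneously has suitable $u$ and $w$: one uses the level-$\ell$ leaf (resp.\ root) for nodes lying before the first (resp.\ after the last) light way-point along $C$, and two consecutive light way-points otherwise, and the cyclic case is handled the same way. Since the level-$\ell$ backbones are pairwise vertex-disjoint and every deep one has more than $2 n^{1/k} \ge 1$ nodes, there are $O(n)$ deep backbones in total (using $k = O(1)$), so a union bound gives $\Pr\bigl(\bigcup_C \overline{\calE_C}\bigr) = O(n) \cdot O(1/n^2) = O(1/n)$. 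On the complement of this event all deep cases hold simultaneously, and together with the shallow case this proves the corollary.

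There is no substantive obstacle here beyond this bookkeeping; the two points that need care are that the event $\calE_C$ certifies the conclusion \emph{uniformly} for all $v$ in the component (so the union is over components, not over node--level pairs) and that the number of levels $k$ is a fixed constant, which is exactly what keeps the overall failure probability at $O(1/n)$.
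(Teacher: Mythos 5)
Your proposal is correct and matches the derivation the paper leaves implicit: Corollary~\ref{cor:light-parent} is obtained from (the uniform-in-$v$ event inside the proof of) Lemma~\ref{lem:light-parent} by a union bound over the disjoint level-$\ell$ backbones, with shallow paths handled deterministically via their leaf and root. Your side remark about shallow cycles not being covered (and not being needed, since $\RTHC$ handles shallow components directly) is a fair observation, but it does not change the argument.
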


\begin{proof}[Proof of Proposition~\ref{prop:hthc-vol-ub}]
  Suppose way-points are chosen in such a way that the conclusions of Lemma~\ref{lem:waypoint-density} and Corollary~\ref{cor:light-parent} hold. Note that such events occur with probability $1 - O(1/n)$. We claim that in this case, the modification $A'$ of $\RTHC$ as described in the paragraph following Proposition~\ref{prop:hthc-vol-ub} gives a valid solution to $\HTHC(k)$ using volume $O(n^{1/k} \log^{O(k)} n)$.

  The idea is to argue inductively that for each level $\ell$ and light $H_\ell$, $H_\ell$ is validly colored with the level $\ell$ root of $H_\ell$ outputting $\chout(v) \neq D$. The base case $\ell = 1$ is straightforward, as $H_1$ being light implies $C_1 = H_1$ is shallow. Therefore, all nodes in $H_1$ output $\chin(u)$ where $u$ is the leaf of $C_1$. For the inductive step suppose the claim is true for $\ell - 1$, and consider $v \in C_\ell$. If $v$ is a light waypoint, then by induction $\chout(\rc(v)) \neq D$, so that $v$ outputs $X$ in Line~\ref{ln:out-x}. Since all light way-points in $C_\ell$ output $X$ and the conclusion of Corollary~\ref{cor:light-parent} guarantees that every connected component of $C_\ell \setminus \set{\text{exempt nodes}}$ is of size at most $2 n^{1/k}$. Thus, each component is unanimously colored by $R$ or $B$ in Line~\ref{ln:out-parent}. This in turn gives a valid coloring where the root of $H_\ell$ does not output $D$, as desired.

  Finally, the upper bound on volume follows from Lemma~\ref{lem:waypoint-density}. By the algorithm description, an execution initiated at $v \in C_\ell$ queries $O(n^{1/k})$ nodes in $C_\ell$. Further, conclusion of the lemma implies that $v$ only recursively calls $\RTHC$ on $O(\log n)$ nodes in $C_\ell$. Thus, a straightforward induction argument shows that the total number of queries is $O(n^{1/k} \log^\ell n)$, which gives the desired result.
\end{proof}

\subsection{Deterministic Volume Lower Bound}
\label{sec:hthc-vol-lb}

\begin{prop}
  \label{prop:hthc-vol-lb}
  Any deterministic algorithm $A$ solving $\HTHC(k)$ requires volume $\Omega(n / k \log n)$.
\end{prop}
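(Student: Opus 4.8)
The plan is an adversary argument, in the spirit of the deterministic volume lower bound for $\LeafColoring$ (Proposition~\ref{prop:leaf-coloring-dvol-lb}) but driven by the \emph{unanimity} built into $\HTHC$ and nested through all $k$ levels of the hierarchy. Assume toward a contradiction that a deterministic $A$ solves $\HTHC(k)$ with volume cost $q$ on every $n$-vertex instance and that $q = o(n/(k\log n))$. I would exhibit a procedure $\proc$ that answers the queries of $A$, thereby committing to a labelled graph $G$ on at most $n$ vertices and a distinguished level-$k$ vertex $v^\star$, such that the output $A$ produces at $v^\star$ (together with the rest of $A$'s collective output) violates Definition~\ref{dfn:hthc}; since $A$ is deterministic, this contradicts that $A$ solves $\HTHC(k)$.

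The graph $G$ is a hierarchical instance grown around $v^\star$ in the style of the tree built in Proposition~\ref{prop:leaf-coloring-dvol-lb}: $\proc$ keeps partially-revealed backbones at every level with ``unassigned'' ports at their ends, and whenever $A$'s execution from $v^\star$ probes such a port it appends a fresh vertex, extending the corresponding backbone and, recursively, the sub-instance hanging off it. The design goal is that along $v^\star$'s level-$k$ backbone the level-$(k-1)$ subtrees encountered near $v^\star$ are ``heavy'' --- large balanced sub-instances in which the leaf of every backbone is far --- so that $A$ (which touches at most $q$ vertices from $v^\star$) can neither reach the level-$k$ leaf $w_0$ at the far end of its own backbone nor fully inspect any of those subtrees. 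When the execution halts, $\proc$ caps all dangling ports with small balanced pieces so that $G$ is a legal instance, leaving the input colours $\chin$ of the never-queried leaves free. The accounting is the analogue of the ``$\le 3t$'' bound in Proposition~\ref{prop:leaf-coloring-dvol-lb}, now over $k$ nested levels: one checks that extensions create neither inconsistent nor reused vertices and that the pieces fit within $n$ vertices; the factor $k$ in the bound pays for the $k$ levels and the factor $\log n$ is the slack that lets the near-$v^\star$ region be made large enough relative to $q$.

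To see that $A$ must err, I would trace the validity conditions of Definition~\ref{dfn:hthc} from the bottom up, with induction hypothesis ``the root of every heavy level-$\ell$ sub-instance in $G$ has $A$-output $D$''. At the base level $1$, a backbone whose leaf is out of reach must --- by the ``flip'' argument of Proposition~\ref{prop:leaf-coloring-dvol-lb} applied to that leaf's colour, together with the unanimity of Conditions~2 and~3(b) --- have $A$-output constantly $D$. For the step, if all level-$(\ell-1)$ subtrees hanging off a heavy level-$\ell$ backbone output $D$ at their roots, then Condition~4(b) (resp.\ Condition~5(a) when $\ell = k$) forbids any vertex of that backbone from being exempt, so, as in the base case, the backbone's $A$-output is constant: forced to $D$ by the flip argument when $\ell<k$, and --- since $D$ is illegal at level $k$ --- forced to $\chin$ of a vertex adjacent to $w_0$ when $\ell=k$. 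Flipping that colour after $A$ halts makes $A(v^\star)$ wrong. The single new ingredient over Proposition~\ref{prop:leaf-coloring-dvol-lb} is ruling out the possibility that $A$ certifies a nearby vertex as exempt and copies its $\chin$: this is exactly where the planted heavy subtrees bite, since --- unlike the way-point-sampling randomized algorithm of Section~\ref{sec:hthc-vol-ub} --- a deterministic algorithm cannot guess which of the many candidate subtrees along the backbone is the light one and must spend a large number of queries inspecting each candidate it rules out, which is what pushes $q$ up to $\widetilde\Omega(n)$ rather than the $\Omega(n^{1/k})$ that a plain ``long backbones everywhere'' instance alone would give.

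I expect the main obstacle to be making this last mechanism precise and watertight: formalizing the adaptively-grown heavy sub-instances so that no deterministic $A$ can short-cut the decision ``is vertex $u$ exempt in $A$'s collective output?'' cheaply, checking that the self-referential notion ``$u$ is exempt iff $A$ run from $u$ outputs $X$'' is well-founded on these instances, and balancing the per-candidate inspection cost against the number of candidates so the product lands at $\widetilde\Omega(n)$. The remaining pieces --- the $\LeafColoring$-style extension bookkeeping and the monotone cascade through Definition~\ref{dfn:hthc}'s five validity conditions --- are routine but lengthy; and it is precisely the unanimous-colouring relaxation of Remark~\ref{rem:hthc-vs-chang} that makes the cascade step ``child subtrees output $D$ $\Rightarrow$ the parent backbone is monochromatic $D$'' go through at all.
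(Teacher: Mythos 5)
There is a genuine gap, and it sits exactly where you flag it. Your plan hinges on the induction hypothesis that the root of every planted ``heavy'' sub-instance has $A$-output $D$, so that Conditions~4(b)/5(a) block exemption along $v^\star$'s backbone and the flip argument closes the contradiction. But heaviness of $H_\ell$ does not by itself force a $D$ output in a valid solution (a heavy component with a shallow backbone can be unanimously colored with a real color), so the adversary must force $D$ through $A$'s actual behavior inside each planted subtree --- i.e., it must recursively win the same game against the executions of $A$ initiated at those roots. There are up to $\Theta(q)$ hanging subtrees along the backbone near $v^\star$, each of which would have to be grown adaptively against its own execution, and your proposal neither performs these simulations nor accounts for their size; meanwhile your stated mechanism (``$A$ cannot guess which candidate is light and must spend many queries inspecting each'') quantifies over the wrong thing: the volume bound is per-execution, each backbone vertex only needs to decide its own output from its own $\leq q$ queries, and no vertex is obliged to ``verify'' anything about the collective output. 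Relatedly, capping dangling ports with small balanced pieces is in direct tension with your plan: a small capped subtree is fully explorable by its root's execution, its root can then output a real color, and the adjacent backbone vertex becomes legitimately exempt, which defeats the ``no nearby exemption'' premise. The well-foundedness worry you mention is not a technicality to be smoothed over later; it is the core of the proof and your proposal does not contain an idea that resolves it.

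The paper's proof takes the opposite route and thereby avoids this self-reference. Rather than preventing exemptions, the adversary $\proc$ exploits them: it simulates $A$ at only $O(\log m)$ vertices per level (an all-blue start $v_B$, an all-red start $v_R$, then binary search along the bicolored level-$k$ backbone joining them), using the claim that a vertex whose explored descendants are monochromatic cannot output the opposite color. Either the binary search exposes two adjacent backbone vertices outputting $R$ and $B$ (violating Condition~5(b)/4), or it finds a vertex outputting $X$, in which case $\proc$ descends to that vertex's right child and repeats at level $k-1$ (where $D$ is also handled by attaching a fresh oppositely-colored component and binary searching again). At level $1$, both $X$ and $D$ are excluded by Conditions~3(a) and 4(b), and appending one fresh leaf of the opposite color yields the contradiction. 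Since only $O(k\log m)$ executions are simulated, each of volume $\leq m$, the constructed instance has $O(k\,m\log m)$ nodes (poly$(k)\cdot m\log m$ after capping), giving $m=\Omega(n/(k\log n))$; note also that the $\log n$ factor in the bound comes from the binary-search subphases and the $k$ from one phase per level, not from a per-candidate inspection cost as in your sketch.
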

\begin{proof}
  Suppose $A$ is any deterministic algorithm with volume complexity at most $m$ purportedly solving $\HTHC(k)$. We assume $k \geq 2$, and describe a process $\proc$ that produces a graph $G$ on $n = O(k \cdot  m \log m)$ vertices and labeling $\calL$ on $G$ such that $\proc$ does not solve $\HTHC(k)$ on $G$. We begin by observing the following claim, whose (omitted) proof is straightforward.
  \begin{description}
  \item[Claim.] Let $A$ be any deterministic algorithm that solves $\HTHC(k)$, and let $H_{v}$ be the subgraph of some input graph $G$ queried by an instance of $A$ initiated at a vertex $v$ in $G$. Suppose $H_v$ has the property that every descendant (in $G_h$) of $v$ in $H_v$ has input color $B$ (resp. $R$). Then the output of $A$ must satisfy $\chout(v) \neq R$ (resp.\ $\chout(v) \neq B$).
  \end{description}
  The process $\proc$ constructs graphs in phases $k, k - 1, \ldots, 1$. In phase $i$, $\proc$ constructs a graph $G_i$ by simulating executions of $A$ initiated at nodes at level $i$. Each phase consists of $O(\log m)$ subphases, and in subphase $\proc$ simulates an execution of $A$ initiated at a single vertex. We illustrate an example iteration of $\proc$ with some algorithm $A$ in Figure~\ref{fig:h-thc-dlb}.

  \begin{figure}[p]
  \centering
  \includegraphics[scale=0.70]{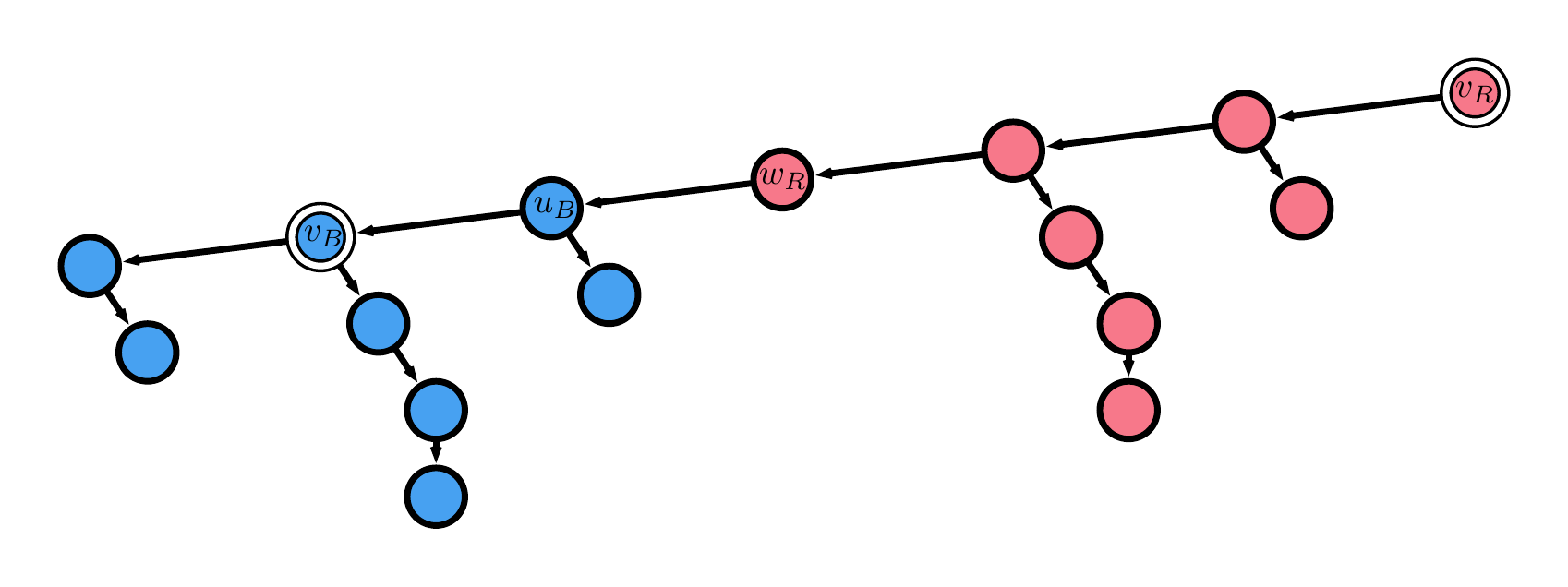}
  \includegraphics[scale=0.70]{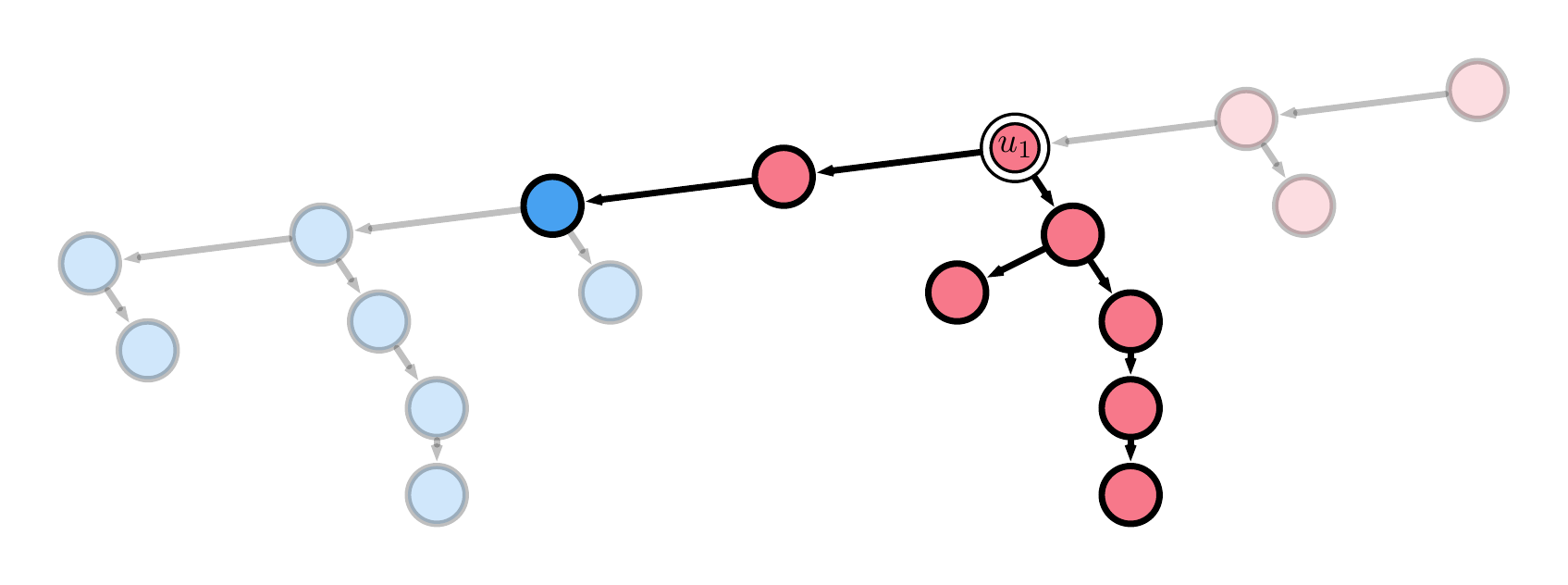}
  \includegraphics[scale=0.70]{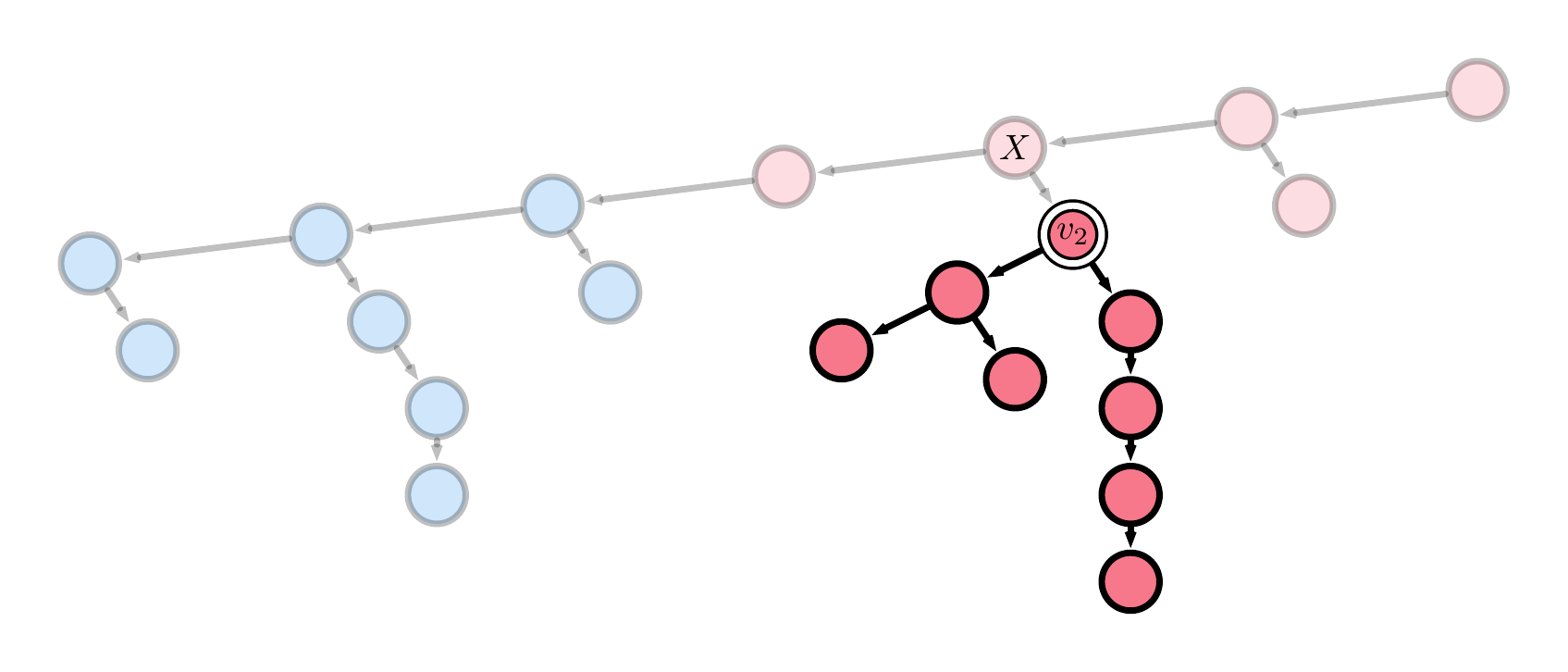}
  \includegraphics[scale=0.70]{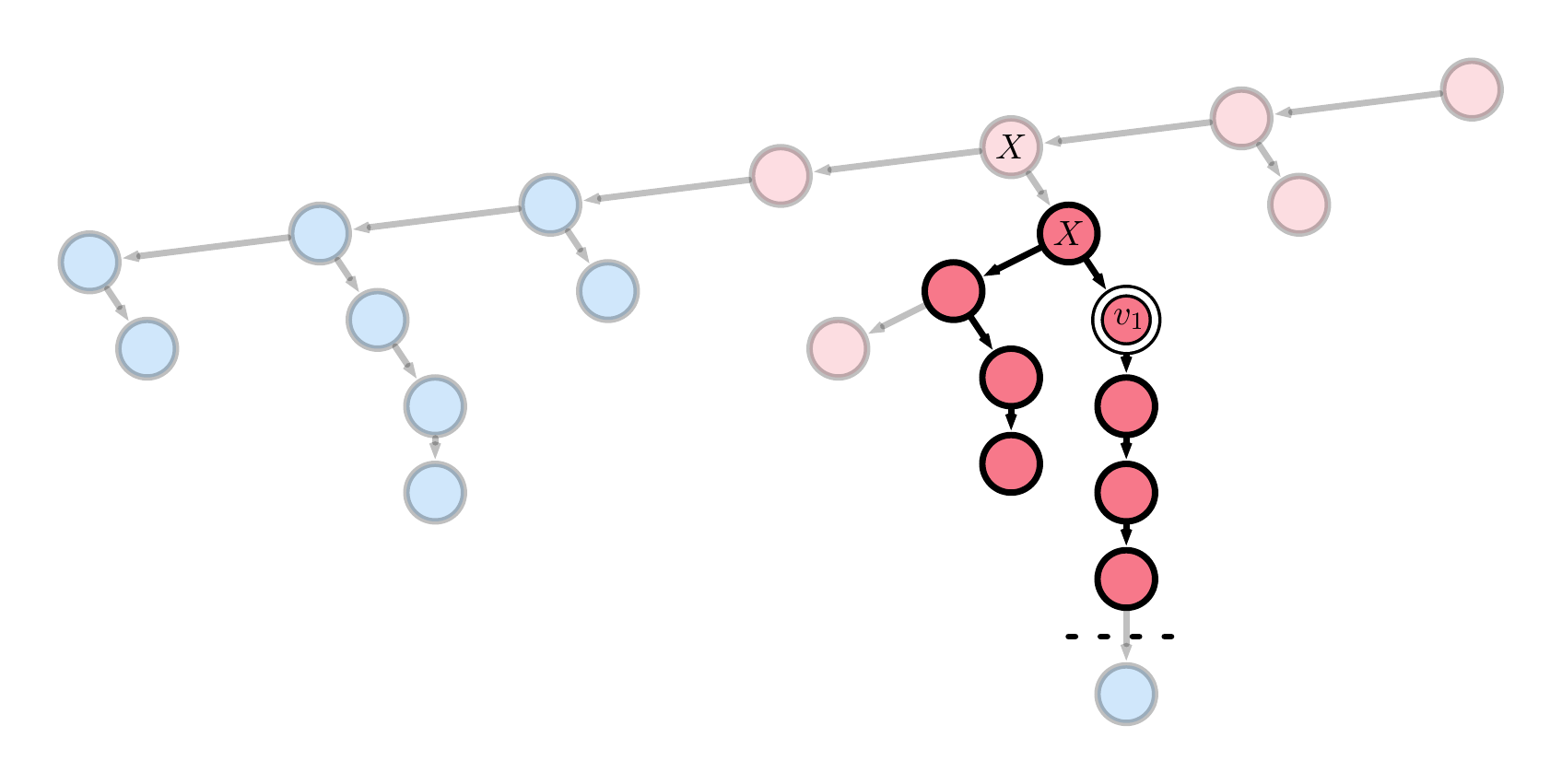}
  \caption{Illustration of the process $\proc$ interacting with a deterministic algorithm $A$ purportedly solving $\HTHC(3)$. The top figure shows the partial instance constructed at the end of the second subphase of phase $k = 3$, assuming the nodes $v_B$ and $v_R$ output $B$ and $R$ respectively. The second image shows the region explored by the node $u_1$. Assuming $u_1$ outputs $X$, the third image gives the result of the first sub-phase of phase~$2$, showing the region explored by $v_2$. Finally, if $v_2$ outputs $X$, the fourth image shows the region explored by $v_1$. Since $v_1$ sees only red vertices in the subtree below it, and $v_1$'s parent, $v_2$, outputs $X$, $v_2$ must output $R$. However, by construction, $v_1$ does not query a leaf. By appending a blue leaf to the region explored by $v_1$, we $\proc$ forces $A$ to produce an incorrect output on the instance.}
  \label{fig:h-thc-dlb}
  \end{figure}

  The procedure $\proc$ maintains the following invariants. Let $G_t$ denote the labeled tree constructed after $\proc$ simulates the $t\th$ query in its simulations.  Every node $v$ in $G_t$ has degree 2 or 3 (with some neighbors possibly not yet assigned), with $\parent(v) = 1$, $\lc(v) = 2$. If $v$ has degree $3$ (i.e., is at level $\ell > 1$) then $\rc(v) = 3$. $G_t$ is a tree with at most $k$ levels. $\proc$ maintains $\lvl(v)$ for each vertex $v$ in $G_t$ which will correspond to $v$'s final level in the completed graph $G$. In particular, if $\lvl(v) = \ell > 1$ then $\lvl(\rc(v)) = \ell - 1$. Finally, $\proc$ assigns IDs to newly added nodes serially so that the $j\th$ node created has ID $j$.

  $\proc$ begins phase~$k$, subphase~1 by simulating $A$ from a vertex $v_B$ with $\lvl(v_B) = k$ and ID $1$, and $\chin(v_B) = B$. During step $t$ (i.e., when $A$ makes its $t\th$ query), if $A$ queries a new node (necessarily a neighbor of some $u$ in $G_{t-1}$) $\proc$ forms $G_t$ by adding a corresponding node $u_t$ to $G_{t-1}$ whose ID and label maintain the invariants described above, and sets $\chin(u_t) = B$. At the end of subphase~1, every node $u$ queried from the execution of $A$ initiated at $v_B$ has $\chin(u) = B$. Thus, by the claim, we must have $\chout(v_B) \neq R$. Moreover, by Condition~5 of validity (Definition~\ref{dfn:hthc}), the output must satisfy $\chout(v_B) \neq D$, so that $\chout(v_B) \in \set{B, X}$. If $\chout(v_B) = X$, $\proc$ ends Phase~$k$, and sets $v_{k-1} = \rc(v_B)$.

  If $\chout(v_B) = B$, $\proc$ continues to subphase~2 as follows. $\proc$ simulates $A$ from a new vertex $v_R \notin G_t$---and not yet connected to $G_t$---exactly as in subphase~1, except that all nodes $u$ created in this subphase have $\chin(u) = R$. When the execution of $A$ initiated at $v_R$ terminates, $G_t$ consists of two connected components: a blue component (containing $v_B$) and a red component (containing $v_R$). As before, $\chout(v_R) \in \set{R, X}$, otherwise $A$ does not solve $\HTHC(k)$ on (some completion of) $G_t$. If $\chout(v_R) = X$, $\proc$ sets $v_{k-1} = \rc(v_R)$ and ends Phase~$k$. Otherwise (if $\chout(v_R) = R$), $\proc$ transforms $G_t$ by connecting the blue and red components as follows. Let $u_B$ be the highest ancestor of $v_B$ in the blue component, and let $w_R$ be the left-most descendant of $v_R$ in the red component. In particular, $\lvl(u_B) = \lvl(w_R) = k$. $\proc$ includes an edge between $u_B$ and $w_R$ and sets $w_R = \parent(u_B)$, $u_B = \lc(w_R)$. Thus, in $G_t$, $v_B$ becomes a left descendant of $v_R$.

  Consider the path $w_0 = v_R, u_1, \ldots, w_b = v_B$ from $v_R$ to $v_B$. By construction, we have $\lvl(w_i) = k$ for all $i$, and $b \leq 2 m$ (since the red and blue components each have size at most $m$). Since $\chout(v_B) = B$ and $\chout(v_R) = R$, in any completion of $G_t$, there must either exist some $w_i$ satisfying $\chout(w_i) = X$, or adjacent nodes $w_{i-1}, w_{i}$ such that $\chout(w_{i-1}) = R$ and $\chout(w_i) = B$. Since the latter case violates validity Condition~5(b), some output of a correct algorithm $A$ should satisfy the former condition. The procedure $\proc$ attempts to find such a vertex $w_i$ by using binary search on the path $(v_R, \ldots, v_B)$. More formally, let $u_1$ be the midpoint of the path from $v_R$ to $v_B$. In subphase~3, $\proc$ simulates an execution of $A$ from $u_1$. As before, every time a new node (not already in $G_t$) is queried, $\proc$ creates a new node $u$ (with the next value of ID) and forms $G_{t+1}$ by appending the node to $G_t$. The input color of $u$ is chosen to be the same as the input color of the node from which $u$ was queried. That is, if $u$ is a child/parent of $u'$, then $\chin(u) = \chin(u')$. Thus, at every step $G_t$ consists of blue and red components connected by a single edge.

  Consider the graph $G_t$ at the end of the execution of $A$ initiated at $u_1$. As before, we must have $\chout(u_1) \in \set{R, B, X}$. (Note that $u_1$ may not satisfy the hypothesis of the claim above, as $u_1$ may query nodes of both input colors). If $\chout(u_1) = X$, $\proc$ ends phase~$k$ and sets $v_{k-1} = \rc(u_1)$. If $\chout(u_1) = R$ (resp.\ $B$), $\proc$ sets $u_2$ to be the midpoint of $u_1$ and $v_B$ (resp.\ $v_R$) and repeats the procedure of the previous paragraph starting from $u_2$. Continuing in this way $\proc$ produces a sequence $u_1, u_2, \ldots, u_r$ of nodes in the path from $v_R$ to $v_B$ until either $\chout(u_r) = X$, or adjacent nodes in the path output $R$ and $B$, respectively, thus contradicting validity condition~5(b). Since $\proc$ uses binary search to determine $u_{i+1}$ from the previous simulations, phase~$k$ requires $O(m \log m)$ queries (i.e., $\abs{G_t} = O(m \log m)$ at the end of phase $k$). 

  If $\proc$ has not found outputs violating validity at the end of phase~$k$, then $\proc$ stores a node $v_{k-1}$ such that $v_{k-1} = \rc(\parent(v_k))$, $\lvl(v_{k-1}) = k-1$, and $\chout(\parent(v_{k-1})) = X$. By validity condition~5(a), $A$ must satisfy $\chout(v_{k-1}) \neq D$. Without loss of generality, assume that $v_{k-1}$ is in the red component of $G_t$ so that $\chin(v_{k-1}) = R$. (If $\chin(v_{k-1}) = B$, interchange the roles of $R$ and $B$ in the following discussion.) $\proc$ begins the first subphase of phase~$k-1$ by simulating an execution of $A$ initiated at $v_{k-1}$, where again the input color of each new node in $G_t$ is consistent with the red or blue component to which the new node is appended. Since this implies that all descendants of $v_{k-1}$ have input color $R$, $v_{k-1}$ should output in $\set{R, X}$. Again, if $\chout(v_{k-1}) = X$, $\proc$ sets $v_{k-2} = \rc(v_{k-1})$ and ends phase~$k-1$. Otherwise, if $\chout(v_{k-1}) = R$, then $\proc$ forms a new vertex $v_{k-1}'$ with input color $B$ with $\lvl(v_{k-1}) = k-1$. $\proc$ then simulates an execution of $A$ from $v_{k-1}'$ in a new blue component disconnected from the component of $G_t$ constructed so far, just as it did for $v_B$. $\calP$ additionally maintains the invariant that all new nodes in this component are at level at most $k - 1$ by setting labels so that $v_{k-1}'$ is the left-most descendant of all its ancestors of $v_{k-1}'$

  Since $v_{k-1}'$ is at level $k-1$, the claim implies that correct output can satisfy $\chout(v_{k-1}') \in \set{B, D, X}$. Again if $\chout(v_{k-1}') = X$, $\proc$ sets $v_{k-2} = \rc(v_{k-1})$ and ends phase~$k-1$. Otherwise, if $\chout(v_{k-1}') \in \set{B, D}$, the component of $G_t$ containing $v_{k-1}'$ is attached to the component containing $v_{k-1}$ so that $v_{k-1}'$ becomes a left descendant of $v_{k-1}$. Since $\chout(v_{k-1}) = R$ and $\chout(v_{k-1}') \in \set{B, D}$, validity conditions~4 now imply that there is some node $v_{k-1}''$ on the path between $v_{k-1}$ and $v_{k-1}'$ such that $\chout(v_{k-1}'') = X$. $\proc$ finds such a node $v_{k-1}''$ using binary search (simulating at most $\log m$ more executions of $A$), and sets $v_{k-2} = \rc(v_{k-1}'')$.

  Continuing in this way, $\proc$ either finds a node violating one of the validity conditions, or it constructs a sequence $v_{k}'', v_{k-1}'', \ldots v_2''$ such that $\lvl(v_i'') = i$ and $\chout(v_i'') = X$. Now consider the graph $G_t$ at the beginning of phase~1, and let $v_1 = \rc(v_2'')$. Assume without loss of generality that $v_1$ lies in a red component, so that $\chin(v_1) = R$, and similarly for all of $v_1$'s descendants of $G_t$ constructed so far. $\proc$ simulates $v_1$, and appends to $G_t$ accordingly, maintaining that all of $v_1$'s descendants input colors are $R$. As before, the claim implies that $\chout(v_1) \neq B$, and validity condition~3(a) implies that $\chout(v_1) \neq X$. Further, validity condition~4(b) (applied to $\parent(v_1) = v_2''$) implies that $\chout(v_1) \neq D$. Thus, we must have $\chout(v_1) = R$. Now $\proc$ appends to $G_t$ a single leaf $v_1'$ that is a left descendant of $v_1$ and setting $\chin(v_1) = B$. By validity conditions~2 and~3(a), $\chout(v_1) \in \set{B, D}$. However, this together with $\chout(v_1) = R$ implies that some node $v_1''$ on the path from $v_1$ to $v_1'$ violates validity condition~3(b).

  When the simulation of $v_1$ is completed, $\proc$ has simulated $O(k \cdot m \log m)$ nodes, hence $\abs{G_t} = O(k \cdot m \log m)$. To complete $G$, $\proc$ appends nodes to each ``unassigned'' port that are consisted with the level of each node containing the unassigned port (with arbitrary input colors). Since there are at most $2 \abs{G_t}$ unassigned ports, and a minimal tree with $\ell$ levels contains $O(\ell)$ nodes, the final graph $G$ satisfies $n = \abs{G} = O(k^2 m \log m)$. Therefore for constant $k$, we must have $m = \Omega(n / \log n)$, which gives the desired result.
\end{proof}


\section{\boldmath Hybrid Balanced \texorpdfstring{$2 \frac 1 2$}{2 1/2} Coloring}
\label{sec:hybrid-coloring}

Here, we describe a family of LCLs, $\HBTHC(k)$, that are hybrids of $\BTL(k)$ introduced in Section~\ref{sec:balanced-tree} and $\HTHC(k)$ described in Section~\ref{sec:hierarchical-coloring}. Like $\HTHC(k)$, each node has an associated level $\ell \in [k + 1]$, which in the case of $\HBTHC(k)$ is explicitly given to each node as part of its input label. Each connected component (of $G_T$ induced by a tree labeling) at level $\ell = 1$ corresponds to an instance of $\BTL$, which may either be solved (with all nodes outputting $B$ or $U$) or declined (with all nodes outputting $D$). Nodes at levels $\ell \geq 2$ induce an instance of $\HTHC(k)$, except that validity conditions~4(b) and~5(a) are modified for level $\ell = 2$. Specifically, a level~$2$ node $v$ is allowed to output $X$ if and only if $\rc(v)$ (at level~$1$) outputs a value in $\set{B, U}$---i.e., if the instance of $\BTL$ below $v$ is solved.

The hybrid LCL $\HBTHC(k)$ described above is easier than $\HTHC(k)$ in terms of \emph{distance} complexity because the basic problem $\BTL$ can be solved in each level~$1$ component with distance complexity $O(\log n)$ (Proposition~\ref{prop:btl-ddist-ub}). Since nodes at level $\ell \geq 2$ are distance $\ell - 1$ from (the root of) such a component, every node at any level $\ell \geq 2$ can simply output $X$, knowing that every level~$1$ sub-instance can be solved.

In terms of \emph{volume} complexity, $\HBTHC(k)$ is no easier than $\HTHC(k)$ because there is no volume-efficient method of solving each level~$1$ sub-instance (Proposition~\ref{prop:btl-rvol-lb}). The same algorithmic technique used to solve $\HTHC(k)$ in randomized volume $O(n^{1/k} \log^{O(k)}n)$ can also be applied to solve $\HBTHC(k)$ with the same volume bound. As one would expect, the deterministic volume complexity of $\HBTHC(k)$ is (nearly) linear.

\begin{dfn}
  \label{dfn:hbthc}
  For any fixed constant $k \in \N$, the problem $\HBTHC(k)$ consists of the following:
  \begin{description}
  \item[Input:] a colored balanced tree labeling together with a number $\lvl(v) \in [k+1]$ for each $v \in V$
  \item[Output:] for each $v \in V$ a string encoding either a pair $(\beta(v), p(v)) \in \set{B, U} \times \calP$ or a symbol in $\set{R, B, D, X}$
  \item[Validity:] depending on the input label $\ell = \lvl(v)$, the output must satisfy
    \begin{description}
    \item[$\ell = 1$:] the output is valid in the subgraph of all nodes $w$ satisfying $\lvl(w) = 1$ in the sense of Definition~\ref{dfn:btl}, or $\chout(v) = D$ and all level~$1$ neighbors of $v$ (in $G_T$) also output $D$
    \item[$\ell = 2$:] the output satisfies conditions~2 and~4 of Definition~\ref{dfn:hthc}, except that~4(b) is replaced by ``$\chout(v) = X$ and $\chout(\rc(v)) \in \set{B, U}$''
    \item[$\ell > 2$:] the output is valid in the sense of Definition~\ref{dfn:hthc}
    \end{description}
  \end{description}
\end{dfn}

The following lemma is clear from previous discussion.

\begin{lem}
  \label{lem:hbthc-lcl}
  For every fixed positive integer $k$, $\HBTHC(k)$ is an LCL.
\end{lem}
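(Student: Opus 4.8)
The plan is to follow the same template used in the proofs of Lemma~\ref{lem:leaf-coloring-lcl}, Lemma~\ref{lem:btl-lcl}, and Lemma~\ref{lem:hthc-lcl}: exhibit a single constant radius $c$ such that an output is globally valid for $\HBTHC(k)$ if and only if it is valid on every $c$-neighborhood. First I would observe that the input and output label alphabets are finite: the tree-labeling and balanced-tree-labeling components range over $\ports = [\Delta] \cup \set{\bot}$ together with the constant-size color sets, the level is an element of $[k+1]$ with $k$ fixed, and the output is either a pair in $\set{B,U} \times \ports$ or a symbol in $\set{R,B,D,X}$. Hence $\HBTHC(k)$ has a finite description, and it only remains to verify local checkability of the three validity clauses of Definition~\ref{dfn:hbthc}.

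Next I would treat the three cases separately, bounding the radius needed in each. Since each node is given $\lvl(v)$ as part of its input, a node can read off its own level and the levels of all nodes in any fixed-radius ball; in particular, whether a node lies in the level-$1$ subgraph, and whether it is consistent/internal/a leaf in the sense of Definitions~\ref{dfn:tree-labeling} and~\ref{dfn:consistent}, is decidable within radius $O(1)$ (as noted in Section~\ref{sec:leaf-coloring}). For $\ell = 1$, the first alternative asks that the output be valid for $\BTL$ on the induced subgraph of level-$1$ nodes; by Lemma~\ref{lem:btl-lcl} this is checkable within some constant radius $c_1$, and the second alternative ($\chout(v) = D$ together with all level-$1$ neighbors of $v$ in $G_T$ also outputting $D$) is a radius-$O(1)$ check. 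For $\ell = 2$, the relevant conditions (conditions~2 and~4 of Definition~\ref{dfn:hthc}, with 4(b) replaced by the condition involving $\rc(v) \in \set{B,U}$) only reference $v$, its parent, and its left and right children, so they are radius-$O(1)$ checks. For $\ell > 2$, the clause is exactly validity of $\HTHC(k)$, which by Lemma~\ref{lem:hthc-lcl} is checkable within some constant radius $c_2 = c_2(k)$.

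Finally I would set $c = \max\set{c_1, c_2, O(1)}$, a constant because $k$ is fixed, and conclude that a global output is in $\HBTHC(k)$ if and only if its restriction to $N_v(c)$ satisfies the validity conditions for every $v$. There is no substantive obstacle here; the only point requiring a little care is that the level-$1$ clause quantifies $\BTL$-validity over the \emph{induced subgraph} of level-$1$ vertices rather than over all of $G$, but since the partition into levels is part of the input this subgraph is locally identifiable, so the restriction of a $\BTL$-check to a $c_1$-ball is well-defined. As in the earlier LCL lemmas, one should also remark that the structural consistency requirements relating the tree labeling, the lateral edges, and the levels (inherited from Definitions~\ref{dfn:btl} and~\ref{dfn:hthc}) are themselves locally checkable, hence do not increase the radius beyond a constant.
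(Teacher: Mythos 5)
Your proposal is correct and follows essentially the same route as the paper, which simply asserts that the lemma ``is clear from previous discussion'': the intended justification is exactly what you spell out, namely that the level labels are part of the input, the $\ell=1$ and $\ell>2$ clauses inherit local checkability from Lemmas~\ref{lem:btl-lcl} and~\ref{lem:hthc-lcl}, and the remaining conditions are radius-$O(1)$ checks, so taking the maximum of the constant radii suffices. Your added remark about restricting the $\BTL$-check to the locally identifiable level-$1$ induced subgraph is a reasonable point of care but introduces no deviation from the paper's (implicit) argument.
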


We state the main theorem of this section below. As noted at the beginning of the section, the proofs of the various claims in Theorem~\ref{thm:hbthc} are analogous to the results appearing in Sections~\ref{sec:balanced-tree} and~\ref{sec:hierarchical-coloring}. Details are left to the reader.

\begin{lthm}
  \label{thm:hbthc}
  For each fixed positive integer $k$, the complexity of $\HBTHC(k)$ satisfies
  \begin{equation*}
    \begin{split}
      \RDIST(\HBTHC(k)) &= \Theta(\log n),\\
      \DDIST(\HBTHC(k)) &= \Theta(\log n),\\
      \RVOL(\HBTHC(k)) &= \widetilde\Theta(n^{1/k}),\\
      \DVOL(\HBTHC(k)) &= \widetilde\Theta(n).
    \end{split}
  \end{equation*}
\end{lthm}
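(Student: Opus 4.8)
The plan is to obtain the four bounds by hybridizing the arguments of Sections~\ref{sec:balanced-tree} and~\ref{sec:hierarchical-coloring}. Throughout I assume $k \geq 2$ (the case $k=1$ is essentially $\BTL$). For the \textbf{distance upper bounds}, I would use the algorithm in which every node $v$ with $\lvl(v) \geq 2$ simply outputs $X$, and every node with $\lvl(v)=1$ runs the $\BTL$ algorithm of Proposition~\ref{prop:btl-ddist-ub} on its level-$1$ component. Any level-$1$ component admits a valid $\BTL$ output whose consistent nodes all have first coordinate in $\set{B,U}$ and never $D$; this is exactly what the modified clause at level $2$ ($\chout(\rc(v)) \in \set{B,U}$) and conditions~4(b)/5(a) of Definition~\ref{dfn:hthc} at levels $\ell \geq 3$ require in order to validate the all-$X$ labelling above. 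Hence the algorithm is correct, costs $O(\log n)$ at level $1$ and $O(1)$ above, so $\DDIST(\HBTHC(k)), \RDIST(\HBTHC(k)) = O(\log n)$.

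For the \textbf{randomized volume upper bound} I would run the algorithm $A'$ of Proposition~\ref{prop:hthc-vol-ub} with one change to the base case: a level-$1$ node explores its $\BTL$ component for $n^{1/k}+1$ vertices, brute-force solves $\BTL$ if the component is light (cost $O(n^{1/k})$, output in $\set{B,U}\times\calP$), and otherwise outputs $D$ (all nodes of the component make the same determination, so the level-$1$ validity clause is met). At levels $2 \leq \ell \leq k$ the way-point sampling, the search within distance $2n^{1/k}$ for a light way-point, and the recursion are unchanged; the only new observation is that a light way-point $u$ at level $2$ has a \emph{light} $\BTL$ right child, hence a genuinely solved one with output in $\set{B,U}$, which unlocks $\chout(u)=X$ under the modified level-$2$ clause. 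Lemmas~\ref{lem:waypoint-density} and~\ref{lem:light-parent} and Corollary~\ref{cor:light-parent} carry over verbatim, and the inductive bound $O(n^{1/k}\log^{O(\ell)}n)$ holds with the new base case contributing only $O(n^{1/k})$, giving $\RVOL(\HBTHC(k)) = \widetilde O(n^{1/k})$.

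For the \textbf{volume lower bounds}, I would first handle $\DVOL$ by running the adaptive process $\proc$ of Proposition~\ref{prop:hthc-vol-lb} through levels $k, k-1, \ldots, 2$ against a deterministic algorithm $A$ of volume $\leq m$, forcing a chain of nodes that all output $X$ (exactly as there: $D$ is forbidden at level $k$ by condition~5 and at intermediate levels by condition~4(b)/5(a) applied to the already-forced parent, so $A$ can only avoid $X$ via an $R/B$ mismatch that $\proc$ engineers along a short path). Since the lowest node $v_2''$ of this chain outputs $X$, the modified level-$2$ clause forces its right child $v_1$ (at level $1$) to produce a valid $\BTL$ pair, in particular not $D$; now $\proc$ grows the $\BTL$ component at $v_1$ as the $\disj$-embedding of Figure~\ref{fig:balanced-tree} and, because $A$ makes $\leq m = o(n)$ queries, it cannot determine $\disj$ on $N=\Omega(m)$ coordinates, so $\proc$ completes the leaf labels to make $v_1$'s answer wrong (Theorems~\ref{thm:query-lb-from-cc} and~\ref{thm:disj-lb}; alternatively, an adaptive $\BTL$-adversary as in Proposition~\ref{prop:leaf-coloring-dvol-lb}). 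The graph has $O(km\log m)$ vertices, so $m = \widetilde\Omega(n)$, and with the trivial $\DVOL = O(n)$ we get $\DVOL(\HBTHC(k)) = \widetilde\Theta(n)$. For the $\widetilde\Omega(n^{1/k})$ lower bound on $\RVOL$, I would take a balanced instance analogous to the distance lower bound for $\HTHC$ (Proposition~\ref{prop:hthc-dist-lb}) in which every backbone at levels $2,\ldots,k$ is a path of length $\Theta(n^{1/k})$ and every level-$1$ component is a $\disj$-embedding of size $\Theta(n^{1/k})$; a case analysis over the hierarchy shows that in any valid output some node must either traverse a full level-$\ell$ backbone or (following $X$-labels down through the modified level-$2$ clause) produce a valid $\BTL$ output on a $\disj$-embedding, each requiring $\Omega(n^{1/k})$ queries in expectation by Proposition~\ref{prop:btl-rvol-lb}. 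Hence $\RVOL(\HBTHC(k)) = \Omega(n^{1/k})$, so combined with the upper bound $\RVOL(\HBTHC(k)) = \widetilde\Theta(n^{1/k})$.

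The \textbf{distance lower bounds} then come for free from Lemma~\ref{lem:dist-vs-vol}: $\RDIST(\HBTHC(k)) \geq \log_\Delta(\RVOL(\HBTHC(k))-1) = \Omega(\log n)$ since $k$ is constant, and $\DDIST(\HBTHC(k)) \geq \RDIST(\HBTHC(k)) = \Omega(\log n)$; with the upper bounds this gives $\Theta(\log n)$. The step I expect to be the main obstacle is the level-$2$ ``seam'': verifying that a solved \emph{light} $\BTL$ component really does unlock the $X$-label one level up under the modified exemption clause in the volume-upper-bound algorithm, and, symmetrically, that the $\HTHC$ adaptive lower-bound process still propagates a forced $X$ down to a level-$2$ node and then through $\rc$ into a $\BTL$ sub-instance that can be made hard. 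The remaining details are bookkeeping mirroring Sections~\ref{sec:balanced-tree} and~\ref{sec:hierarchical-coloring}.
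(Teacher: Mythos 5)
Your proposal is correct and follows essentially the same route as the paper: the paper proves Theorem~\ref{thm:hbthc} only by asserting that the arguments are analogous to Sections~\ref{sec:balanced-tree} and~\ref{sec:hierarchical-coloring} (details left to the reader), and your hybridization---all-$X$ above solved level-$1$ $\BTL$ components for the distance upper bound, the $\HTHC$ way-point algorithm with a brute-force $\BTL$ base case for the randomized volume upper bound, and the adversary process / balanced instance with disjointness-embedded $\BTL$ components for the volume and distance lower bounds---is exactly that intended combination. The one step you leave sketchy (the case analysis behind the $\Omega(n^{1/k})$ randomized volume lower bound across the level-$2$ seam) is precisely the part the paper also leaves unargued, so your write-up is, if anything, more detailed than the paper's own treatment.
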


\subsection{More Complexity Classes}
\label{sec:more-classes}

In this final technical section, we describe a family of LCLs---called \emph{hierarchical-or-hybrid $2 \frac 12$ coloring} ($\HHTHC$) with (randomized and deterministic) volume complexity $\Theta(n^{1/\ell})$, randomized volume complexity $\widetilde\Theta(n^{1/k})$, and deterministic volume complexity $\widetilde\Theta(n)$ for any $k, \ell \in \N$ with $k \le \ell$. The idea of $\HHTHC$ is that each node $v$ receives an input label for $\HTHC(\ell)$ or $\HBTHC(k)$ together with a single bit $b_v$. The nodes with input bit $b_v = 0$ should solve $\HTHC(\ell)$, while the nodes with input bit $b_v = 1$ should solve $\HBTHC(k)$.

\begin{dfn}
  \label{dfn:hhthc}
  For any fixed constants $k, \ell \in \N$, $k \leq \ell$, the problem $\HHTHC(k, \ell)$ consists of the following:
  \begin{description}
  \item[Input:] a colored balanced tree labeling together with a number $\lvl(v) \in [k+1]$ and bit $b_v \in \set{0, 1}$ for each $v \in V$
  \item[Output:] for each $v \in V$ a string encoding either a pair $(\beta(v), p(v)) \in \set{B, U} \times \calP$ or a symbol in $\set{R, B, D, X}$
  \item[Validity:] if $G_0$ and $G_1$ are the induced subgraphs of $G$ consists of nodes with inputs $b_v = 0$ and $b_v = 1$ respectively, then
    \begin{itemize}
    \item the output labeling in $G_0$ is a valid output for $\HTHC(\ell)$ (Definition~\ref{dfn:hthc}) (with the input level $\lvl(v)$ ignored)
    \item the output labeling in $G_1$ is a valid output for $\HBTHC(k)$ (Definition~\ref{dfn:hbthc})
    \end{itemize}
  \end{description}
\end{dfn}

Since determining membership in $G_0$ or $G_1$ can be done locally, $\HHTHC(k, \ell)$ is an LCL. To analyze, say, the randomized distance complexity of $\HHTHC(k, \ell)$, observe that each node $v$ can solve $\HHTHC(k, \ell)$ using distance at most the maximum of $O(\RDIST(\HTHC(\ell)))$ and $O(\RDIST(\HBTHC(k)))$. Thus,
\[
\begin{split}
\RDIST(\HHTHC(k, \ell)) &= O(\max \set{\RDIST(\HTHC(\ell)), \RDIST(\HBTHC(k))}) \\ &= O(n^{1/\ell}).
\end{split}
\]
Arguing similarly for the other complexity measures, we obtain the following result.

\begin{lthm}  
  \label{thm:hhthc}
  For all positive integers $k, \ell$ with $k \leq \ell$ the complexity of $\HHTHC(k, \ell)$ satisfies
  \begin{equation*}
    \begin{split}
      \RDIST(\HHTHC(k, \ell)) &= \Theta(n^{1/\ell}),\\
      \DDIST(\HHTHC(k, \ell)) &= \Theta(n^{1/\ell}),\\
      \RVOL(\HHTHC(k, \ell)) &= \widetilde\Theta(n^{1/k}),\\
      \DVOL(\HHTHC(k, \ell)) &= \widetilde\Theta(n).
    \end{split}
  \end{equation*}  
\end{lthm}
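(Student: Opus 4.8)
The plan is to derive all four complexity bounds for $\HHTHC(k,\ell)$ by combining the corresponding bounds for $\HTHC(\ell)$ (Theorem~\ref{thm:hthc}) and $\HBTHC(k)$ (Theorem~\ref{thm:hbthc}), exploiting the fact that the input bit $b_v$ locally partitions the graph into $G_0$ and $G_1$, on which the two sub-problems are to be solved \emph{independently}. The key structural observation is that whether $v \in G_0$ or $v \in G_1$ is determined by $v$'s own input label, so a node can decide which sub-problem it participates in using $O(1)$ queries, and then run the appropriate sub-algorithm restricted to its own side. Since validity of $\HHTHC(k,\ell)$ is simply the conjunction of validity for $\HTHC(\ell)$ on $G_0$ and validity for $\HBTHC(k)$ on $G_1$, and these are checkable separately, correctness of the combined algorithm reduces to correctness of each component.

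For the upper bounds: given algorithms $A_0$ solving $\HTHC(\ell)$ and $A_1$ solving $\HBTHC(k)$ with the stated costs, define $A$ on node $v$ to first query $b_v$ and then simulate $A_0$ (if $b_v=0$) or $A_1$ (if $b_v=1$), treating queries that would cross into the other side as if the neighbor does not exist (i.e., as a $\bot$ port); this is legitimate because $G_0$ and $G_1$ are exactly the induced subgraphs the sub-problems are defined on. Hence $\DDIST(\HHTHC(k,\ell)) \le \max\{\DDIST(\HTHC(\ell)),\DDIST(\HBTHC(k))\} + O(1) = O(n^{1/\ell})$, and likewise $\RVOL(\HHTHC(k,\ell)) = O(\max\{\RVOL(\HTHC(\ell)),\RVOL(\HBTHC(k))\}) = \widetilde O(n^{1/k})$ using $k \le \ell$, and the trivial $\DVOL = O(n)$ bound (every LCL satisfies this). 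One technical point to check is that the failure probability of the randomized combined algorithm remains $O(1/n)$: each sub-algorithm fails with probability $O(1/n)$ on its side, and a union bound over the two sides preserves this.

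For the lower bounds: these follow by \emph{embedding} a hard instance of one sub-problem into $\HHTHC(k,\ell)$. For $\DDIST = \Omega(n^{1/\ell})$, take the hard instance of $\HTHC(\ell)$ from Proposition~\ref{prop:hthc-dist-lb} (the balanced instance), set $b_v = 0$ for every node, and give $G_1$-part arbitrarily (empty); any algorithm for $\HHTHC(k,\ell)$ restricted to such inputs solves $\HTHC(\ell)$, so inherits its $\Omega(n^{1/\ell})$ distance lower bound. Similarly $\RVOL = \widetilde\Omega(n^{1/k})$ follows by setting $b_v = 1$ everywhere and invoking the randomized volume lower bound for $\HBTHC(k)$ (the $\HTHC(k)$-style bound of Proposition~\ref{prop:hthc-vol-lb} adapted to $\HBTHC$), and $\DVOL = \widetilde\Omega(n)$ follows the same way from the deterministic volume lower bound for $\HBTHC(k)$. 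A minor subtlety is that $n$ for the embedded instance must match $n$ for the $\HHTHC$ instance up to a constant factor, which is immediate since we just pad with a constant number of extra nodes on the unused side (or none at all). The $\RDIST$ and $\DDIST$ \emph{lower} bounds of $\Omega(n^{1/\ell})$ use the $\HTHC(\ell)$ embedding; the $\RDIST$/$\DDIST$ \emph{upper} bound of $O(n^{1/\ell})$ dominates the $O(\log n)$ cost of $\HBTHC(k)$, so the max is $\Theta(n^{1/\ell})$ as claimed.

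I expect the main obstacle — though it is more bookkeeping than genuine difficulty — to be verifying that the ``restrict each sub-algorithm to its own side'' construction is faithful: one must confirm that the sub-problems $\HTHC(\ell)$ and $\HBTHC(k)$ as defined are genuinely problems on the induced subgraphs $G_0$, $G_1$ (so that a node in $G_0$ never needs information about a $G_1$-neighbor to produce a valid $\HTHC(\ell)$ output, and vice versa), and that the local checkability radius of $\HHTHC(k,\ell)$ accommodates checking both sides plus the $b_v$ partition. Given the problem definitions this holds, but it is the one place where a careless argument could go wrong. Everything else is a direct transfer of the established bounds via the max over the two components and simple embeddings, which is why the authors relegate the details to the reader.
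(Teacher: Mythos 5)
Your proposal is correct and follows essentially the same route as the paper: solve each side $G_0$, $G_1$ with the corresponding algorithm (taking the max for the upper bounds, with $O(n^{1/\ell})$ dominating distance and $\widetilde O(n^{1/k})$ dominating randomized volume since $k \le \ell$), and transfer the lower bounds by restricting to inputs with $b_v$ constant, so that $\HTHC(\ell)$ and $\HBTHC(k)$ embed directly. The only blemish is the parenthetical attribution of the $\widetilde\Omega(n^{1/k})$ randomized volume lower bound to Proposition~\ref{prop:hthc-vol-lb} (which is the \emph{deterministic} bound); the correct source is the lower bound asserted in Theorem~\ref{thm:hbthc} itself, which is what your embedding actually uses.
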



\section{Discussion and Open Questions}
\label{sec:discussion}
\newcommand{\SO}{\textsf{SO}}

\subsection{Denser and Deterministic Volume Hierarchies}

In the preceding sections, our results establish a \emph{randomized} volume hierarchy for LCL problems: for every $k \in \N$ there exists an LCL whose randomized volume complexity is $\widetilde{\Theta}(n^{1/k})$. Moreover, this hierarchy persists even when restricting to problems whose randomized and deterministic distance complexities are both $\Theta(\log n)$. However, we do not establish an analogous hierarchy for \emph{deterministic} volume complexity. 

\begin{que}
  \label{que:det-volume}
  Does there exist an LCL $\Pi$ whose deterministic volume complexity is in $\omega(\log^* n) \cap o(n)$?
\end{que}

We conjecture a negative answer to Question~\ref{que:det-volume}. The conjecture holds in the relaxed setting where an algorithm is given not $n$ in advance and unique IDs are not required to be polynomial in $n$. The argument is as follows. Let $A$ be any deterministic algorithm, and let $G = (V, E)$ be a (large) instance of an LCL $\Pi$ such that $\VOL(A, G, \calL, v) \ll n$. For fixed $v \in V$, let $H$ be the induced (labeled) subgraph of $G$ consisting of nodes that are either queried by $A$ in an execution initiated from $v$, or neighbors of nodes queried by $A$. An instance of $A$ initiated at $v$ in $H$ will query precisely the same nodes and give the same output as it did in $G$. Since $\Pi$ is an LCL, the solution found by $A$ on $H$ is also a valid solution to $\Pi$. But $\VOL(A, H, \calL', v) = \VOL(A, G, \calL, v) \geq \abs{H} / (\Delta - 1)$. Thus, $A$ queries a constant fraction of $H$. If $\DDIST(\Pi) = \Omega(\log^* n)$, then we can construct an infinite family of graphs $H$ as above such that any deterministic algorithm requires linear volume on $H$, thus implying that $\DVOL(\Pi) = \Omega(n)$.

Recall from Section~\ref{ssec:prelim} that the gap between randomized and deterministic volume is at most exponential, and hence finding any LCL whose \emph{randomized} volume complexity is in $\omega(\log^* n) \cap o(\log n)$ would immediately imply a positive answer to Question~\ref{que:det-volume}. Thus, our conjecture also implies that no such LCLs exist.

We believe that the complexity classes for $\RVOL$ described in this paper---namely problems with complexities roughly $n^{1/k}$ for all $k \in \N$, and $\log n$---are far from exhaustive. 

\begin{que}
  \label{que:dense-classes}
  For what values of $\alpha \in [0, 1]$ is there an LCL problem $\Pi$ with $\RVOL(\Pi) = \widetilde{\Theta}(n^\alpha)$? Are there problems with complexities in $\omega(\log n) \cap \paren{\bigcap_{k \in \N} O(n^{1/k})}$?
\end{que}

We conjecture that problems $\Pi$ with $\RVOL(\Pi) = \widetilde{\Theta}(n^\alpha)$ exist for all $\alpha$ in a dense subset of $[0, 1]$. That is, for every $\alpha' \in [0, 1]$ and $\varepsilon > 0$, there exists an $\alpha \in [0, 1]$ with $\abs{\alpha' - \alpha} < \varepsilon$ and an LCL $\Pi$ with $\RVOL(\Pi) = \widetilde{\Theta}(n^\alpha)$. We believe that LCL constructions similar to those described in~\cite{Balliu2018disc} will yield more $\RVOL$ complexity classes.

\subsection{Graph Shattering}

The LCLs described in preceding sections all have deterministic and randomized distance complexities $\Omega(\log n)$. However, an interesting class of LCLs have randomized and deterministic distance complexities between $\Omega(\log \log n)$ and $O(\log n)$. A canonical example of such a problem is \dft{sinkless orientation} ($\SO$), whose randomized distance complexity is $\Theta(\log \log n)$, and whose deterministic distance complexity is $\Theta(\log n)$.

\begin{que}
  \label{que:sinkless-orientation}
  What are $\DVOL(\SO)$ and $\RVOL(\SO)$?
\end{que}

We note that a negative answer to Question~\ref{que:det-volume} would also settle Question~\ref{que:sinkless-orientation}. Indeed, if $\DVOL(\SO) = \Omega(n)$, then $\RVOL(\SO) = \Omega(\log n)$, as randomness helps at most exponentially~\cite{chang16exponential}, but at the same time we also have $\RVOL(\SO) = O(\log n)$, since we can simulate $O(\log \log n)$-distance algorithms with $O(\log n)$ volume. Conversely, $\RVOL(\SO) = o(\log n)$ would imply $\DVOL(\SO) = o(n)$, thus giving a positive answer to Question~\ref{que:det-volume}.

\subsection{Complexity Classes with Restricted Bandwidth}

The query model we consider, together with the associated volume complexities, can be viewed as a refinement of the LOCAL model in distributed computing, where an algorithm incurs a cost for each query made to a node. Like round complexity in the LOCAL model, volume complexities are purely combinatorial. The CONGEST model~\cite{Peleg2000} is a refinement of the LOCAL model where the complexity measure is communication based: in each round, each node can send at most $B$ (typically $O(\log n)$) bits to each of its neighbors. We believe it is interesting to compare the relative power of the query model and the CONGEST model. We observe that fairly naive bounds on the relative complexities in the query and CONGEST models are the best possible. 

\begin{obs}
  \label{obs:vol-vs-congest}
  Suppose a problem $\Pi$ can be solved in $T$ rounds in the deterministic CONGEST model. Then $\DVOL(\Pi) = \Delta^{O(T)}$. This follows immediately from Lemma~\ref{lem:dist-vs-vol}, as $T$ rounds in the CONGEST model can trivially be simulated in $T$ rounds in the LOCAL model. The same relationship holds for randomized algorithms: if $\Pi$ can be solved in $T$ rounds of CONGEST with randomness, then $\RVOL(\Pi) = \Delta^{O(T)}$. Moreover, these bounds are tight. $\BTL$ can be solved in $O(\log n)$ rounds of CONGEST by each inconsistent/incompatible node $v$ announcing this defect to its neighbors in $O(1)$ rounds. Then after $O(\log n)$ rounds of flooding (each node simply rebroadcasts an ``inconsistent'' message heard from any neighbor), every unbalanced node will witness an inconsistency, thus allowing it to (correctly) output. By the $\Omega(n)$ query lower bound of $\Omega(n)$ posited in Proposition~\ref{prop:btl-rvol-lb}, the bounds $\RVOL(\Pi), \DVOL(\Pi) = \Delta^{O(T)}$ are the best possible even when restricting attention to LCL problems.
\end{obs}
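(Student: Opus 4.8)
\textbf{Proof plan for Observation~\ref{obs:vol-vs-congest}.}

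The statement has two parts: an upper bound ($T$-round CONGEST $\Rightarrow$ volume $\Delta^{O(T)}$, in both the deterministic and randomized settings) and a matching tightness claim witnessed by $\BTL$. The plan is to dispatch the upper bound by composition of two facts already in the excerpt, and then establish tightness by exhibiting an $O(\log n)$-round CONGEST algorithm for $\BTL$ and invoking the $\Omega(n)$ query lower bound of Proposition~\ref{prop:btl-rvol-lb}.

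For the upper bound: a CONGEST algorithm is in particular a LOCAL algorithm (congestion only restricts the messages sent, never enlarges the information a node sees), so a $T$-round CONGEST algorithm has distance complexity $T$ in our model. Applying Lemma~\ref{lem:dist-vs-vol} (specifically the inequality $\DVOL(\Pi) \leq \Delta^{\DDIST(\Pi)} + 1$, and its randomized counterpart $\RVOL(\Pi) \leq \Delta^{\RDIST(\Pi)} + 1$) immediately gives $\DVOL(\Pi) = \Delta^{O(T)}$ in the deterministic case and $\RVOL(\Pi) = \Delta^{O(T)}$ in the randomized case. There is nothing subtle here; the only care needed is to note that CONGEST's restriction on bandwidth is one-directional — it can slow a simulation down but never speeds it up — so the distance a node effectively queries is still at most $T$.

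For tightness, I would exhibit a deterministic $O(\log n)$-round CONGEST algorithm for $\BTL$. Recall from Section~\ref{sec:btl-valid} (Lemma~\ref{lem:btl-global} and Lemma~\ref{lem:btl-compatible}) that in any valid output, a consistent internal node outputs $(B, \parent(v))$ if its subtree in $G_T$ is a balanced binary tree with all nodes compatible, and otherwise it must output $(U, \cdot)$ pointing toward the nearest incompatible (or inconsistent) descendant, which lies within distance $\log n$. The CONGEST algorithm: in $O(1)$ rounds each node locally checks whether it is inconsistent or incompatible (both are locally checkable by Lemma~\ref{lem:btl-lcl}, requiring only $O(1)$ rounds of exchanging labels with neighbors). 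Any inconsistent or incompatible node then initiates a flood: it broadcasts a ``defect here'' token, and over the next $O(\log n)$ rounds every node simply rebroadcasts to its $G_T$-parent (and all neighbors) the fact that a defect has been heard, together with the port from which it first arrived. Each such message is $O(\log n)$ bits (a port number plus a constant-size flag), so the algorithm is legitimately in CONGEST. After $O(\log n)$ rounds, every node whose $G_T$-subtree contains a defect has learned the first hop toward a nearest one, and outputs $(U, p)$ accordingly; every node that has heard no defect outputs $(B, \parent(v))$. Correctness follows from Lemma~\ref{lem:btl-global} (the relevant defect is within distance $\log n$) and Lemma~\ref{lem:btl-compatible} (the forced structure of valid outputs). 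Hence $\BTL$ has CONGEST round complexity $T = O(\log n)$, so $\Delta^{O(T)} = \Delta^{O(\log n)} = n^{O(1)}$, which is trivially an upper bound on the volume; meanwhile Proposition~\ref{prop:btl-rvol-lb} gives $\RVOL(\BTL), \DVOL(\BTL) = \Omega(n)$, and since $n = \Delta^{\Omega(\log_\Delta n)} = \Delta^{\Omega(T)}$, the bound $\RVOL(\Pi), \DVOL(\Pi) = \Delta^{O(T)}$ cannot in general be improved, even among LCLs.

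The main obstacle — really the only point requiring any attention — is making sure the flooding algorithm for $\BTL$ genuinely respects the CONGEST bandwidth constraint: a node should not attempt to relay full subtree descriptions, only a constant amount of information (``a defect exists in this direction, via this port''), and one must check that aggregating the direction toward the \emph{nearest} defect can be done with such messages. This is routine once one observes that a node only needs the minimum-distance defect reachable through each child, which can be maintained incrementally round by round with $O(\log n)$-bit messages. Everything else is a direct appeal to Lemma~\ref{lem:dist-vs-vol} and Proposition~\ref{prop:btl-rvol-lb}.
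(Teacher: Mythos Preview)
Your proposal is correct and follows essentially the same approach as the paper: the observation itself already contains its full justification (CONGEST $\Rightarrow$ LOCAL $\Rightarrow$ Lemma~\ref{lem:dist-vs-vol} for the upper bound; the flooding algorithm for $\BTL$ plus Proposition~\ref{prop:btl-rvol-lb} for tightness), and you have reproduced that argument with somewhat more care about bandwidth and with explicit appeals to Lemmas~\ref{lem:btl-global} and~\ref{lem:btl-compatible} for correctness of the CONGEST algorithm.
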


\begin{obs}
  \label{obs:congest-vs-vol}
  Suppose $\Pi$ is a problem with deterministic (resp.\ randomized) volume complexity $D$, such that the input of each node is of size $O(\log n)$ (including random bits in the randomized case). Then $\Pi$ can be solved in $\Delta^{O(D)}$ rounds in the deterministic (resp.\ randomized) CONGEST model as follows. Using a simple flooding procedure, each node can learn its distance $D$ neighborhood in $\Delta^{O(D)}$ rounds of CONGEST. Each node then simulates the query based algorithm on its $D$-neighborhood. Example~\ref{eg:congest-vs-vol} shows that the bound $\Delta^{O(D)}$ is the best possible, though the example is \emph{not} an LCL. 
\end{obs}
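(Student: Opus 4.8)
The plan is to reduce the statement to the elementary fact (the second inequalities in Lemma~\ref{lem:dist-vs-vol}) that a volume-$D$ algorithm $A$, when initiated at a vertex $v$, only ever queries nodes within distance $D$ of $v$. Consequently, for every $v$ the output $A(v, G, \calL)$ is a function of the labeled neighborhood $N_v(D)$ alone — in the randomized case, a function of $N_v(D)$ together with the random strings of the nodes it contains, each of which we are assuming is an $O(\log n)$-bit input. So it suffices to show that in $\Delta^{O(D)}$ rounds of CONGEST every node $v$ can assemble a complete description of $N_v(D)$ (its induced subgraph, the port numbers of all incident edges, and all input labels), after which each node \emph{locally} simulates $A$ started from itself and outputs the result. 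Correctness — including the $1 - O(1/n)$ success probability in the randomized case — is then inherited verbatim from $A$, with no additional communication rounds spent on producing outputs.

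For the gathering step I would use plain flooding: in each ``super-round'' every node forwards to all of its neighbors everything it currently knows (a list of discovered vertices with their identifiers, degrees, port assignments of incident edges, and input labels). After $D$ super-rounds every node $v$ knows all of $N_v(D)$. Since $G$ has maximum degree at most $\Delta$, we have $\abs{N_v(D)} \leq \Delta^D + 1$, so the description of $N_v(D)$ consists of $\Delta^{O(D)}$ items, each of size $O(\log n)$ bits — here one uses both the degree bound and the hypothesis that per-node inputs (including random bits) are $O(\log n)$ bits, together with the standing assumption that identifiers are $O(\log n)$ bits. With bandwidth $B = \Theta(\log n)$, transmitting one super-round's worth of information across a single link costs $\Delta^{O(D)}$ actual CONGEST rounds, and the $D$ super-rounds contribute only a further factor of $D \leq \Delta^{D}$, so the total is still $\Delta^{O(D)}$. (If one prefers to allow $B$ to be smaller, the count simply scales by a factor $O(\log n / B)$, which can either be absorbed into the $O(\cdot)$ or stated explicitly.)

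There is essentially no deep obstacle here; the argument is a routine ``learn-your-neighborhood-then-simulate'' reduction. The only points that need a little care are (i) making the bandwidth accounting genuinely yield $\Delta^{O(D)}$ rather than $\Delta^{O(D)} \cdot \mathrm{poly}(n)$ — this is precisely where the $O(\log n)$ bound on per-node input size is invoked — and (ii) in the randomized case, justifying that it is legitimate to treat each node's random string as a bounded-length input that can be flooded like any other label; this is consistent with the modelling assumption from Section~\ref{sec:model} that (with probability $1 - O(1/n)$) an execution reads only boundedly many random bits, and one can simply have each node forward the relevant prefix of its string. I would not attempt here to reprove the accompanying tightness claim that $\Delta^{O(D)}$ cannot be improved in general, as that is witnessed separately by Example~\ref{eg:congest-vs-vol}.
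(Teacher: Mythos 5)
Your proposal is correct and follows exactly the argument the paper itself sketches inside the observation: flood so that each node learns its distance-$D$ neighborhood (of size $\Delta^{O(D)}$, with $O(\log n)$-bit descriptions per node, including random strings) in $\Delta^{O(D)}$ CONGEST rounds, then simulate the volume-$D$ algorithm locally, using the fact that such an algorithm only queries nodes within distance $D$. One tiny slip: that containment fact is the \emph{first} inequality of Lemma~\ref{lem:dist-vs-vol} ($\DDIST(\Pi) \leq \DVOL(\Pi)$), not the second.
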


\begin{eg}
  \label{eg:congest-vs-vol}
  Let $G = (V, E)$ be a graph on $n = 2 (2^{k+1} - 1)$ nodes constructed as follows. $G$ consists of two balanced binary trees of depth $k$ rooted at nodes $u$ and $v$ respectively, together with an edge between $u$ and $v$. All internal nodes have a (specified) left and right child. Let $u_1, u_2, \ldots, u_{2^k}$ and $v_1, v_2, \ldots, v_{2^k}$ be the leaves below $u$ and $v$, respectively, where $u_1$ is $u$'s left-most descendant, $u_{2^k}$ is $u$'s right-most descendant, and similarly for $v_1, \ldots, v_{2^k}$. Each $v_i$ has an input bit $b_i$. To solve the problem $\Pi$, each $u_i$ must output $b_i$ (initially stored in $v_i$). This problem can easily be solved in $O(\log n)$ queries. However, in the CONGEST model, $\Pi$ requires $\Omega(n / B)$ rounds. Indeed, to solve $\Pi$, the entire vector $b_1 b_2 \cdots b_{2^k}$ must be transmitted across the single edge $\set{u, v}$, which requires $\Omega(2^k / B) = \Omega(n / B)$ rounds.
\end{eg}

While the problem $\Pi$ in Example~\ref{eg:congest-vs-vol} shows that in general the CONGEST round complexity can be exponentially larger than the volume complexity, $\Pi$ is not an LCL. It is not clear to us if this exponential gap is achievable for an LCL problem.

\begin{que}
  What is the largest possible gap between volume complexity and CONGEST round complexity for an LCL? Are there LCLs with (deterministic) volume complexity $D$ that require $\Delta^{\Omega(D)}$ rounds in the CONGEST model?
\end{que}

\subsection{Pushing, Pulling, and MPC}
\label{sec:pushing}

In our volume model (and LCA models), an algorithm interacts with the graph by ``pulling'' information. That is, the output of a node $v$ is determined by a sequence of (adaptively chosen) read operations; there is no direct interaction between executions instantiated from different nodes. One could consider a ``push'' model, in which an algorithm can send messages to other visited nodes in the network. An execution of an algorithm in such a model could proceed in three phases as follows:
\begin{enumerate}
\item Each node adaptively queries its local neighborhood using at most $\Qpull$ queries (as in our volume model).
\item Each node sends messages to at most $\Qpush$ nodes queried in phase~1.
\item Nodes output based on the result of the queries made in phase~1 and messages received in phase~2.
\end{enumerate}
Thus, such a model would allow for some limited interaction between nodes. Equivalently, this model can be viewed as a variant of the volume model where an execution of an algorithm is granted read-write access to (some subset of) nodes it visits, rather than read-only access. One could further generalize this model by allowing multiple iteration of the phases above, or limiting the size of messages sent in phase~2.

It is clear that an algorithm as above can be simulated in $\Qpull$ rounds of LOCAL computation. The pull/push model is trivially at least a strong as our volume model as well, as our volume model is the special case with $\Qpush = 0$. However, it is not clear how much more computational power is afforded by allowing algorithms to push messages. The following example shows that pushing can give an exponential improvement in the complexity of a problem.

\begin{eg}
  \label{eg:btl-push}
  Consider the problem $\BTL$. By Theorem~\ref{thm:btl}, $\RVOL(\BTL) = \Omega(n)$. However, $\BTL$ can be solved in the pull/push model using $O(\log n)$ queries. In phase~1, each node queries its $O(1)$ distance neighborhood to determine if it is compatible in the sense of Definition~\ref{dfn:compatible}. If so, $v$ waits until phase~3. Otherwise, $v$ queries its $\log n$ nearest ancestors (or until a root is encountered). In phase~2, an incompatible node sends its local view (after phase~1) to all of its ancestors. In phase~3, incompatible nodes output $(U, \bot)$. A compatible node receiving an ``incompatible'' message from a descendant outputs in accordance with the validity condition of Definition~\ref{dfn:btl-problem} (choosing the direction of the left-most nearest incompatible descendant). All other nodes output $(B, \parent(v))$. By Lemma~\ref{lem:btl-global}, all unbalanced internal nodes will receive an incompatible message from a descendant, so it is straightforward to verify that this procedure solves $\BTL$ with $\Qpush, \Qpull = O(\log n)$. This upper bound is tight, as $\RDIST(\BTL) = \Omega(\log n)$.
\end{eg}

Example~\ref{eg:btl-push} shows that allowing algorithms read-write access to queried nodes can give an exponential improvement in volume complexity. Moreover, the example gives an LCL whose \emph{deterministic} push-pull query complexity is $\Theta(\log n)$. We are unaware of any LCLs with deterministic volume complexity $\omega(\log^* n)$ and $o(n)$, and indeed we suspect that no such LCLs exist. Thus, it may be the case that randomness plays a much smaller role in the push-pull volume model than it does in the pull-only volume model.

\begin{que}
  \label{que:push-pull}
  What does the volume complexity landscape look like in the push-pull volume model? How is it different from the pull-only volume model?
\end{que}

One reason we believe the push-pull model may be interesting, is that it captures an aspect of interactivity of message passing models (i.e., LOCAL and CONGEST), but restricts the type of information that can be conveyed from one node to another. Thus it may be easier to reason about algorithms in the push-pull model than, say, CONGEST algorithms. In particular, we believe that efficient push-pull algorithms may be valuable in designing volume efficient protocols in the MPC model. For example, the $O(\log n)$ distance algorithm we describe for $\BTL$ can be simulated using $O(n)$ space and $O(\log \log n)$ rounds in the MPC model. However, a sparsified version of the push-pull algorithm described in Example~\ref{eg:btl-push} solves $\BTL$ using $O(n^c)$ space and $O(\log \log n)$ rounds for any positive constant $c$. (The $O(\log \log n)$ run-time is achieved by using graph exponentiation to propagate ``incompatible'' messages upward in the network. To achieve the same effect with $O(n^c)$ space, note that each node needs only to see its leftmost nearest incompatible neighbor. Thus, in each step of the graph exponentiation procedure, it suffices to propagate a single ``incompatible'' message upward from each node, requiring only $O(\log n)$ space per step.)

\begin{que}
  \label{que:mpc-push-pull}
  What algorithms using volume $\VOL$ in the push-pull model can be simulated using $O(\VOL + n^c + \Delta)$ space in the MPC model? Does the analogue of Lemma~\ref{lem:volume-mpc} hold in the push-model?
\end{que}

\subsection{Space and Time Efficient Algorithms in MPC}
\label{sec:conc-mpc}

While Lemma~\ref{lem:volume-mpc} shows that volume-efficient algorithms can be simulated in a \emph{space}-efficient manner, the stated round complexity may be large. For example, consider a problem $\Pi$ with distance and volume complexities $\RDIST(\Pi) = O(\log n)$ and $\RVOL(\Pi) = O(n^c)$. Then an MPC simulation of an $O(\log n)$ round LOCAL algorithm using graph exponentiation gives an $O(n)$-space, $O(\log \log n)$-round algorithm, while Lemma~\ref{lem:volume-mpc} gives $O(n^c)$-space and $O(n^c)$ rounds. Thus the improved volume efficiency may come at the cost of a doubly-exponential increase in run-time! For an arbitrary algorithm using volume $\VOL$, it does not seem likely that we can achieve, say, $O(\VOL)$ space and $O(\log \VOL)$ rounds by using graph exponentiation (or something similar). One reason for this potential inefficiency of simulating volume-efficient algorithms in the MPC model is that in the volume model, queries are adaptive. That is, executions do not know in advance which nodes will be queried. Thus graph exponentiation cannot be used naively without the ability to ``guess'' which nodes will actually be queried in a given execution. Yet, for some cases---such as our $O(\log n)$-volume algorithm for $\LeafColoring$---graph exponentiation can be used: Algorithm~\ref{alg:rw-to-leaf} can easily be simulated in $O(n^c)$ space and $O(\log \log n)$ volume.

\begin{que}
  \label{que:volume-mpc}
  What subclass of algorithms $A$ with (randomized) volume complexity $\VOL$ can be simulated in $O(\VOL + n^c + \Delta)$ space and $O(\log\VOL)$ rounds in the MPC model? What about in the push-pull model?
\end{que}

\subsection{Randomness}
\label{sec:conc-random}

In this final section, we pose some questions related to randomness in the volume model. Throughout the paper, we assume that randomness is provided via a private random string $r_v \colon \N \to \set{0, 1}$ for each vertex $v \in V$. When an algorithm $A$ executed from a node $v$ queries another node $w$, $A$ has access to $r_w$. We call this model the \dft{private randomness} model. In our model, we assumed that an algorithm $A$ accesses random strings sequentially, and that the number of bits accessed by $A$ is bounded with high probability---some assumption of this flavor is needed in the proof of the derandomization result by \citet[Theorem 3]{chang16exponential}. We believe that this assumption is not necessary, at least for LCLs.

\begin{que}
  \label{que:bounded-randomness}
  Suppose $\Pi$ is an LCL and $A$ an algorithm solving $\Pi$ in $R(n)$ rounds in the private randomness model. Is there always an algorithm $A'$ and a function $f \colon \N \to \N$ that solves $\Pi$ in $O(R(n))$ rounds with high probability using $f(n)$ random bits per node? If so, how slowly can $f(n)$ grow?
\end{que}

In addition to the amount of randomness used per node, we think it is interesting to consider other random models. We describe three below, in decreasing power of computation.
\begin{description}
\item[public randomness:] There is a single random string $r \colon \N \to \set{0, 1}$ that is seen by every node.
\item[private randomness:] Each node $v$ has an independent random string $r_v \colon \N \to \set{0, 1}$. When $v$ is queried, $r_v$ is given to the process querying $v$.
\item[secret randomness:] Each node $v$ has an independent random string $r_v$, but $r_v$ is known only to $v$---algorithms querying $v$ do not have access to $r_v$.
\end{description}

It is straightforward to show that any private random protocol can be simulated in the public random model, and that any secret random protocol can be simulated in the private randomness model. However, it is not clear if there is strict separation in the computational power of these models for LCLs.

\begin{que}
  \label{que:public-randomness}
  Are there strict separations between the public, private, and secret randomness models for LCLs?
\end{que}

We suspect that the public and private randomness models are essentially the same for LCL problems. On the other hand, in all of the randomized algorithms described in this paper, randomized coordination (i.e., non-secret randomness) seems essential. Since secret randomness does not allow for coordinated random choices between nodes, it is not clear that secret randomness should give much power over deterministic computation.

To see an example where secret randomness does help, consider the promise version of $\LeafColoring$ (Section~\ref{sec:leaf-coloring}) where all leaves are promised to have the same input color. To solve the promise problem, it is enough for all internal nodes to query a single leaf. Using secret randomness, each internal node can perform a ``downward'' random walk, and all internal nodes will visit some leaf with high probability after $O(\log n)$ queries. (The analysis is identical to the proof of Proposition~\ref{prop:leaf-coloring-rvol-ub}.) Nonetheless, we do not know of any non-promise LCL problem for which there is a gap between secret randomness and deterministic volume complexities.

\section*{Acknowledgments}

We thank Alkida Balliu, Faith Ellen, Mohsen Ghaffari, Juho Hirvonen, Fabian Kuhn, Dennis Olivetti, and Jara Uitto for numerous discussions related to the volume complexity, in particular related to the preliminary observations in Section~\ref{ssec:prelim}.

\urlstyle{same}
\bibliographystyle{plainnat}
\bibliography{lcl-volume}

\end{document}